\newtheorem{theorem}{Theorem}
\newtheorem{lemma}[theorem]{Lemma}
\newtheorem{lemmaPrime}{Lemma}
\newtheorem{corollary}{Corollary}
\theoremstyle{definition}
\newtheorem{definition}[theorem]{Definition}
\newcommand{\ket}[1]{\mathinner{\left \lvert #1 \right\rangle}}
\newcommand{\braket}[2]{\mathinner{ \left\langle #1 \right. \left \lvert #2 \right\rangle}}
\newcommand\lr[1]{\left( #1 \right)}
\newcommand\lra[1]{\left\langle  #1 \right\rangle}
\newcommand\lrv[1]{\left|  #1 \right|}
\newcommand\lrk[1]{\left[  #1 \right]}
\newcommand\lrb[1]{\left\lbrace #1 \right\rbrace}
\newcommand\nrm[1]{\left\Vert #1 \right\Vert}
\DeclareMathOperator{\diag}{diag}
\newcommand{\mb}[1]{\mathbf{#1}}
\newcommand{\vbl}{\vphantom{X^{X^X}}}
\newcommand{\tdots}{\,..\,}
\DeclareMathOperator*{\Span}{span}
\newcommand{\mbb}[1]{\mathbbm{#1}}
\newcommand{\mc}[1]{\mathcal{#1}}
\def\T{{\cal T}}
\def\A{{\cal A}}
\def\D{{\cal D}}
\def\G{{\cal G}}
\def\rket{\rangle}
\def\lbra{\langle}
\newcommand{\bra}[1]{\langle #1|}
\newcommand{\lmax}[1]{\lambda_{#1}}
\newcommand{\lRho}{\lmax{K}}
\newcommand{\noA}{A}
\newcommand{\noB}{B}
\newcommand{\noTot}{V}
\newcommand{\vertSet}{\mathsf V}
\newcommand{\edgeSet}{\mathsf E}
\newcommand{\vertSetA}{\mathsf V_{\mathsf A}}
\newcommand{\vertSetB}{\mathsf V_{\mathsf B}}
\newcommand{\spaceA}{\mc H_{\mathsf A}}
\newcommand{\spaceB}{\mc H_{\mathsf B}}
\newcommand{\RA}{R_{\mathsf A}}
\newcommand{\RB}{R_{\mathsf B}}
\newcommand{\matA}{M_{\mathsf A}}
\newcommand{\matB}{M_{\mathsf B}}
\newcommand{\mata}{M_{\mathsf a}}
\newcommand{\matb}{M_{\mathsf b}}
\newcommand{\addV}{v_{\noTot+1}}
\newcommand{\addE}{e_{T+1}}
\newcommand{\addVV}{v_{\noTot+2}}
\newcommand{\addEE}{e_{T+2}}
\newcommand{\complV}{ {\vertSet}'}
\newcommand{\complE}{ {\edgeSet}'}
\newcommand{\complG}{ {\mc G}'}
\newcommand{\complVV}{ {\vertSet}''}
\newcommand{\complEE}{ {\edgeSet}''}
\newcommand{\complGG}{ {\mc G}''}
\newcommand{\nhood}{\mathsf E}
\newcommand{\refA}{{\rm {ref}}_{\mc A}}
\newcommand{\refB}{{\rm {ref}}_{\mc B}}
\newcommand{\refX}[1]{{\rm {ref}}_{{#1}}}
\newcommand{\matD}{\mb D}
\newcommand{\tNAA}{\mb N}
\newcommand{\vecP}{\mb d}
\begin{document}

\setcounter{footnote}{2}
\renewcommand*{\thefootnote}{\fnsymbol{footnote}} %from https://tex.stackexchange.com/a/830

\title{Quantum algorithm for tree size estimation, with applications to backtracking and 2-player games}
\date{}
\author{Andris Ambainis$^{\fnsymbol{footnote}}$
\and Martins Kokainis$^{\fnsymbol{footnote}}$
}
\maketitle

\begin{abstract}
\noindent
We study quantum algorithms on search trees of unknown structure, in a model where the tree can be discovered by local exploration. That is, we are given the root of the tree and access to a black box which, given a vertex $v$, outputs the children of $v$.
 
We construct a quantum algorithm which, given such access to a search tree of depth at most $n$,
estimates the size of the tree $T$ within a factor of $1\pm \delta$ in $\tilde{O}(\sqrt{nT})$ steps. More generally, the same algorithm can be used to estimate
size of directed acyclic graphs (DAGs) in a similar model.

We then show two applications of this result:
\begin{itemize}
\item
We show how to transform a classical backtracking search algorithm which examines $T$ nodes of a search tree into an $\tilde{O}(\sqrt{T}n^{3/2})$ time quantum algorithm, improving
over an earlier quantum backtracking algorithm of Montanaro \cite{Montanaro2015}. 
\item
We give a quantum algorithm for evaluating AND-OR formulas in a model where the formula can be discovered by local exploration (modeling position trees in 2-player games).
We show that, in this setting, formulas of size $T$ and depth $T^{o(1)}$ can be evaluated in quantum time $O(T^{1/2+o(1)})$.
Thus, the quantum speedup is essentially the same as in the case when the formula is known in advance.
\end{itemize}
\end{abstract}

\footnotetext[2]{Faculty of Computing, University of Latvia}
\renewcommand*{\thefootnote}{\arabic{footnote}} %restore numbered footnotes
\setcounter{footnote}{0}

\setcounter{page}{0}
\thispagestyle{empty}
\pagebreak

\section{Introduction}

Many search algorithms involve exploring search trees of an unknown structure. For example, backtracking algorithms  
perform a depth-first search on a tree consisting of partial solutions to the computational task (for example, partial assignments for SAT in the well known DPLL algorithm \cite{DLL,DP}), until a full solution is found. 
Typically, different branches of the tree stop at different depths (e.g., when the corresponding partial assignment can no longer be extended) and the structure of the tree can be only determined by exploring it.  

Quantum algorithms provide a quadratic speedup for many search problems, from simple exhaustive search (Grover's algorithm \cite{Grover}) to computing AND-OR formulas \cite{A+,FGG,R} (which corresponds to determining the winner in a 2-player game, given a position tree). These algorithms, however, assume that the structure of the search space is known. Grover's algorithm assumes that the possible solutions in the search space
can be indexed by numbers $1, 2, \ldots, T$ so that, given $i$, one can efficiently (in constant or polylog time) find the $i^{\rm th}$ possible solution. In the case of backtracking trees, the unknown structure of the tree prevents us from setting up such an addressing scheme. 

In the case of AND-OR formulas, the quantum algorithms of \cite{A+,R} work for formula of any structure but
the coefficients in algorithm's transformations depend on the sizes of different subtrees of the formula tree.
Therefore, the algorithm can be only used if the whole AND-OR formula is known in advance (and only the 
values of the variables are unknown) which is not the case if the formula corresponds to a position tree in a game.

Despite the importance of such algorithms classically, there has been little work on quantum search on structures which can be only explored locally. 
The main algorithmic result of this type is a recent algorithm by Montanaro for quantum backtracking. Given a search tree of size $T$ and depth $n$, Montanaro's algorithm detects if the tree contains a marked vertex in $O(\sqrt{Tn})$ steps
(and finds a marked vertex in $O(\sqrt{T} n^{3/2})$ steps). 

In this paper, we show three new quantum algorithms for trees of an unknown structure, 
including an improvement to Montanaro's algorithm. We start with

{\bf Quantum tree size estimation}. We show that, given a tree $\T$ with a depth at most $n$, the size $T$ of the tree $\T$ can be estimated to a multiplicative factor of $1+\delta$, for an arbitrary constant $\delta>0$, by a quantum algorithm that uses $\tilde{O}(\sqrt{Tn})$ steps\footnote{In the statements of results in the abstract and the introduction, $\tilde{O}$ hides $\log T$ and $\log n$ factors and the dependence of the running time on the maximum degree $d$ of vertices in a tree $\T$ (or a DAG $\G$). More precise bounds
are given in Section \ref{sec:results}.}. 
More generally, our algorithm is also applicable to estimating size of directed acyclic graphs (DAGs) in a similar model.

We then apply the quantum tree size estimation algorithm to obtain two more results. 

{\bf Improved quantum algorithm for backtracking.}
Montanaro's algorithm has the following drawback. Since classical search algorithms are optimized to search the most promising branches first, a classical search algorithm may find a marked vertex after examining $T' \ll T$ nodes of the tree.  Since the running time of Montanaro's algorithm depends on $T$, the quantum speedup that it achieves can be much less than quadratic (or there might be no speedup at all).

We fix this problem by using our tree size estimation algorithm.
Namely, we construct a quantum algorithm that searches a backtracking tree in $\tilde{O}(\sqrt{T'} n^{3/2})$ steps where $T'$ is the number of nodes actually visited by the classical algorithm. 

{\bf AND-OR formulas of unknown structure.} We construct a quantum algorithm for computing AND-OR formulas in a model 
where the formula is accessible by local exploration, starting from the root (which is given).
More specifically, we assume query access to the following subroutines:
\begin{itemize}
\item
given a node $v$, we can obtain the type of the node (AND, OR or a leaf), 
\item
given a leaf $v$, we can obtain the value of the variable (true or false) at this leaf,
\item
given an AND/OR node, we can obtain pointers to the inputs of the AND/OR gate.
\end{itemize}
This models a position tree in a 2-player game (often mentioned as a motivating example for studying AND-OR trees) with OR gates corresponding to positions at which the 1$ ^\text{st} $ player makes a move  and 
AND gates corresponding to positions at which the 2$ ^\text{nd} $ player makes a move. 

We give an algorithm that evaluates AND-OR formulas of size $T$ and depth $T^{o(1)}$ in this model
with $O(T^{1/2+o(1)})$ queries. Thus, the quantum speedup is almost the same as in the case when
the formula is known in advance (and only values at the leaves need to be queried) \cite{A+,R}, 
as long as the depth of the tree is not too large.

\section{Preliminaries}

\subsection{Setting}
\label{sec:setting}

We consider a tree $\T$ of an unknown structure given to us in the following way:
\begin{itemize}
\item
We are given the root $r$ of $\T$.
\item
We are given a black box which, given a vertex $v$, returns the number of children $d(v)$ for this vertex.
\item
We are given a black box which, given a vertex $v$ and $i\in[d(v)]$, returns the $i^{\rm th}$ child of $v$.
\end{itemize}

Trees of unknown structure come up in several different settings.

{\bf Backtracking.} Let $\A$ be a backtracking algorithm that searches a solution space $\D$ in a depth-first fashion.
The space $\D$ consists of partial solutions where some of the relevant variables have been set. 
(For example, $\D$ can be the space of all partial assignments for a SAT formula.) 
Then, the corresponding tree $\T$ is defined as follows:
\begin{itemize}
\item
vertices $v_x$ correspond to partial solutions $x\in\D$;
\item
the root $r$ corresponds to the empty solution where no variables have been set;
\item
children of a vertex $v_x$ are the vertices $v_y$ corresponding to possible extensions $y$ of the partial solution $x$ that
$\A$ might try (for example, a backtracking algorithm for SAT   might choose one variable and try all possible values for
this variable), in the order in which $\A$ would try them.
\end{itemize}

{\bf Two-player games.} $\T$ may also be a position tree in a 2-player game, with $r$ corresponding to the current position.
Then, children of a node $v$ are the positions to which one could go by making a move at the position $v$.
A vertex $v$ is a leaf if we stop the evaluation at $v$ and do not evaluate children of $v$.

{\bf DAGs of unknown structure.} We also consider a generalization of this scenario to directed acyclic graphs (DAGs).
Let $\G$ be a directed acyclic graph. We assume that
\begin{itemize}
\item
Every vertex $v$ is reachable from the root $r$ via a directed path.
\item
The vertices of $\G$ can be divided into layers so that all edges from layer $i$ go to layer $i+1$.
	\item
	Given a vertex $v$, we can obtain the number $d(v)$ of edges $(u,v)$ and the number $d'(v)$ of edges $(v,u)$.
	\item
	Given a vertex $v$ and a number $i\in [d(v)]$, we can obtain the $i^{\rm th}$ vertex $u$ with an edge $(u,v)$.
	\item
	Given a vertex $v$ and a number $i\in [d'(v)]$, we can obtain the $i^{\rm th}$ vertex $u$ with an edge $(v,u)$.
\end{itemize}

\subsection{Notation}
By $  [a\tdots b] $, with $ a,b $ being integers, $ a\leq b  $, we denote the set $ \lrb{a,a+1,a+2, \ldots, b} $. When $ a=1 $, notation $ [a \tdots b] $ is simplified to $ [b] $.

We shall use the following notation for particular matrices:
\begin{itemize}
	\item $ \mb I_k $: the $ k \times k $ identity matrix;
	\item  $ \mb 0_{k_1,k_2} $: the $ k_1 \times k_2 $ all-zeros matrix.
\end{itemize}
We use the following notation for parameters describing a tree $\T$ or a DAG $\G$:
\begin{itemize}
\item
$T$ denotes the number of edges in $\T$ (or $\G$) or an upper bound on the number of edges which is given to an algorithm.
\item
$n$ denotes the depth of $\T$ (or $\G$) or an upper bound on the depth which is given to an algorithm.
\item
$d$ denotes the maximum possible total degree of a vertex $v\in\T$   ($\G$).
\item 
For any  vertex $ x \in \T $  (where $ \T $ is a tree),  the subtree rooted at  $x$ will be denoted by $ \T(x) $. 
\end{itemize}

\subsection{Eigenvalue estimation}

Quantum eigenvalue estimation is an algorithm which, given a quantum circuit implementing a
unitary $U$ and an eigenstate $ \ket {\psi} $ s.t.  $U\ket{\psi}=e^{i\theta}\ket{\psi}$, 
produces an estimate for $\theta$. It is known that one can produce 
an estimate $\hat{\theta}$ such that $Pr[|\theta-\hat{\theta}|\leq \delta_{est}]\geq 1-\epsilon_{est}$
with $O(\frac{1}{\delta_{est}} \log \frac{1}{\epsilon_{est}})$ repetitions of a circuit for controlled-$U$.

If eigenvalue estimation is applied to a quantum state $\ket{\psi}$ that is a superposition of several eigenstates 
\[ \ket{\psi} = \sum_{j} \alpha_j \ket{\psi_j}, 
\quad
U\ket{\psi_j} = e^{i\theta_j} \ket{\psi_j}, \]
the result is as if we are randomly choosing $j$ with probability $|\alpha_j|^2$ and estimating $\theta_j$.

In this paper, we use eigenvalue estimation to estimate the eigenvalue $e^{i\theta_{min}}$ that is closest to 1
(by that, here and later, we mean the eigenvalue which is closest to 1 among all  eigenvalues that are distinct from 1, i.e., the eigenvalue  $e^{i\theta_{min}}$ with the smallest nonzero absolute value $|\theta_{min}|$). We assume that we can produce a state $\ket{\psi_{start}}$ such that $\ket{\psi_{start}}$ is orthogonal to all 1-eigenvectors of $ U $ and
\[ |\lbra \Psi_+ |\psi_{start} \rket|^2 + |\lbra \Psi_- |\psi_{start} \rket|^2 \geq C \]
where $\ket{\Psi_+}$ and $\ket{\Psi_-}$ are eigenstates
with eigenvalues $e^{i \theta_{min}}$ and $e^{-i \theta_{min}}$ and $C$ is a known constant.
(If $U$ does not have an eigenvector with an eigenvalue $e^{-i\theta_{min}}$, the condition should be replaced by 
$|\lbra \Psi_+ |\psi_{start} \rket|^2\geq C$.) 
We claim
\begin{lemma}
\label{lem:unique}
Under the conditions above, there is an algorithm  which produces an estimate $\hat{\theta}$ such that 
$Pr[|\theta_{min}-\hat{\theta}|\leq \delta_{min}]\geq 1-\epsilon_{min}$
with 
\[ O\left(\frac{1}{C} \frac{1}{\delta_{min}} \log \frac{1}{C}  \log^2 \frac{1}{\epsilon_{min}} \right)\]
 repetitions of a circuit for controlled-$U$.
\end{lemma}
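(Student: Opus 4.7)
The plan is to wrap standard eigenvalue estimation (EE) in a classical repetition loop and use the structural assumption that $|\theta_{min}|$ is the smallest nonzero phase on the support of $\ket{\psi_{start}}$ to isolate it by taking a minimum of the absolute values of the estimates.

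First, I run EE on $\ket{\psi_{start}}$ with precision $\delta_{est}=\delta_{min}$ and per-call failure probability $\epsilon_{est}$, to be fixed below. Decomposing $\ket{\psi_{start}}=\sum_j \alpha_j \ket{\psi_j}$, a single call returns an estimate $\hat\phi$ which, conditioned on sampling index $j$ (an event of probability $|\alpha_j|^2$), satisfies $|\hat\phi-\theta_j|\leq \delta_{min}$ except with probability $\epsilon_{est}$. By hypothesis, with probability at least $C$ we sample from the $\{\ket{\Psi_+},\ket{\Psi_-}\}$ subspace; call such a call a \emph{good} draw. In a good draw, $|\hat\phi|\in[|\theta_{min}|-\delta_{min},|\theta_{min}|+\delta_{min}]$. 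In any other draw, orthogonality of $\ket{\psi_{start}}$ to the $1$-eigenspace together with the minimality assumption on $|\theta_{min}|$ implies $|\theta_j|\geq|\theta_{min}|$, hence $|\hat\phi|\geq|\theta_{min}|-\delta_{min}$.

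I then repeat EE $k$ times, obtaining estimates $\hat\phi_1,\dots,\hat\phi_k$, and output $\hat\theta := \min_{i\in[k]}|\hat\phi_i|$. Choose $k=\lceil \frac{1}{C}\ln\frac{2}{\epsilon_{min}}\rceil$ so that the probability of no good draw occurring is at most $(1-C)^k\leq \epsilon_{min}/2$; choose $\epsilon_{est}=\epsilon_{min}/(2k)$ so that, by a union bound, every one of the $k$ calls delivers an estimate within $\delta_{min}$ of the sampled $\theta_j$ with probability at least $1-\epsilon_{min}/2$. On the joint success event (probability at least $1-\epsilon_{min}$), the good draw pulls $\hat\theta$ down to at most $|\theta_{min}|+\delta_{min}$, while the uniform lower bound from the previous paragraph forces $\hat\theta\geq|\theta_{min}|-\delta_{min}$, giving the desired accuracy.

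The cost is $k$ invocations of EE, each using $O\!\left(\frac{1}{\delta_{min}}\log\frac{1}{\epsilon_{est}}\right)=O\!\left(\frac{1}{\delta_{min}}\bigl(\log\frac{1}{\epsilon_{min}}+\log\frac{1}{C}\bigr)\right)$ controlled-$U$'s, for a total of $O\!\left(\frac{1}{C\delta_{min}}\log\frac{1}{\epsilon_{min}}\bigl(\log\frac{1}{\epsilon_{min}}+\log\frac{1}{C}\bigr)\right)$, which is absorbed into the bound stated in the lemma.

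The one delicate point, and the only place the hypotheses get used nontrivially, is justifying the lower bound $|\hat\phi_i|\geq|\theta_{min}|-\delta_{min}$ for \emph{every} call, good or bad: without the orthogonality to the $1$-eigenspace one could have $\theta_j=0$ producing $\hat\phi_i\approx 0$, and without the minimality of $|\theta_{min}|$ a bad draw could plausibly undercut a good one and make the $\min$ meaningless. Once those two hypotheses are in force, the minimum over independent repetitions is exactly the right aggregator, and the rest is union-bound bookkeeping.
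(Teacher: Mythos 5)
Your proposal is correct and follows essentially the same route as the paper's proof: the same number $k=\lceil \frac{1}{C}\ln\frac{2}{\epsilon_{min}}\rceil$ of repetitions of standard eigenvalue estimation with $\epsilon_{est}=\epsilon_{min}/(2k)$, the same aggregation by taking the minimum of $|\hat\phi_i|$, the same two-part union bound, and the same observation that orthogonality to the $1$-eigenspace plus minimality of $|\theta_{min}|$ prevents any bad draw from undercutting the minimum. No gaps.
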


\begin{proof}
In Section \ref{app:ee}.
\end{proof}

\section{Results and algorithms}
\label{sec:results}

\subsection{Results on estimating sizes of trees and DAGs}

In this subsection, we consider the following task:

{\bf Tree size estimation.} The input data consist of a tree $\T$ and a value $T_0$ which is supposed to be an upper bound 
on the number of vertices in the tree. The algorithm must output an estimate for the size of the tree. The estimate
can be either a number $\hat{T}\in[T_0]$ or a claim ``$\T$ contains more than $T_0$ vertices". We say that
the estimate is $\delta$-correct if:
\begin{enumerate}
\item
the estimate is $\hat{T}\in[T_0]$ and it satisfies $|T-\hat{T}|\leq \delta T$ where $T$ is the actual number of vertices;
\item
the estimate is ``$\T$ contains more than $T_0$ vertices" and the actual number of vertices $T$ satisfies $(1+\delta) T> T_0$.
\end{enumerate}
We say that an algorithm solves the tree size estimation problem up to precision $ 1 \pm \delta $  with correctness probability at least $1-\epsilon$ if, for any $\T$ and any $T_0$, the probability that it outputs a $\delta$-correct estimate is at least $1-\epsilon$.

More generally, we can consider a similar task for DAGs.

{\bf DAG size estimation.}  
The input data consist of a directed acyclic graph $\G$ and a value $T_0$ which is supposed to be an upper bound on the 
number of edges in $\G$. The algorithm must output an estimate for the number of edges. The estimate
can be either a number $\hat{T}\in[T_0]$ or a claim ``$\G$ contains more than $T_0$ edges". We say that
the estimate is $\delta$-correct if:
\begin{enumerate}
	\item
	the estimate is $\hat{T}\in[T_0]$ and it satisfies $|T-\hat{T}|\leq \delta T$ where $T$ is the actual number of edges;
	\item
	the estimate is ``$\G$ contains more than $T_0$ edges" and the actual number of edges $T$ satisfies $(1+\delta) T> T_0$.
\end{enumerate}
We say that an algorithm solves the DAG size estimation problem up to precision $ 1 \pm \delta $  with correctness probability at least $1-\epsilon$ if, for any $\G$ and any $T_0$, 
the probability that it outputs a $\delta$-correct estimate is at least $1-\epsilon$.

Tree size estimation is a particular case of this problem: since a tree with  $T$ edges has $T+1$ vertices, estimating the number of vertices and the number of edges
are essentially equivalent for trees. We show

\begin{theorem}
	\label{thm:main-DAG}
	DAG size estimation up to precision $1\pm \delta$ can be solved with the correctness probability at least $1-\epsilon$
	by a quantum algorithm which makes 
	\[ 
	O\left(
	\frac{\sqrt{nT_0}}{\delta^{1.5}}  d \log^2 \frac{1}{\epsilon} 
	\right) 
	\]
queries to black boxes specifying $\G$ and $O(\log T_0)$ non-query transformations per query.
\end{theorem}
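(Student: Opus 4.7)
The plan is to reduce DAG size estimation to estimating the smallest nonzero eigenphase of a bipartite quantum walk on the edge space of an augmented version of $\G$, and then apply Lemma~\ref{lem:unique}. First, I augment $\G$ by prepending a fresh source vertex $\addV$ connected via a single ``launching'' edge $\addE$ to the root $r$, and by attaching enough auxiliary leaves so that the layer structure is clean and every internal vertex has both incoming and outgoing edges. The augmentation is chosen so that $\ket{\psi_{start}} := \ket{\addE}$ is trivial to prepare and, as I would verify, is orthogonal to every $1$-eigenvector of the walk operator to be constructed.

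Next, I use the layering to split vertices into two classes $\vertSetA, \vertSetB$ (even and odd layers) and define local Grover-type reflections $\refA, \refB$ on the edge Hilbert space $\spaceA \oplus \spaceB$: at each $v \in \vertSetA$, $\refA$ reflects about a weighted uniform superposition of the edges incident to $v$, and likewise $\refB$ for $v \in \vertSetB$. A single application of $U := \refA \refB$ costs $O(d)$ queries to the oracles of $\G$. A standard Szegedy-style analysis, using the discriminant matrix built from $\matA, \matB$, writes the non-$1$ spectrum of $U$ in conjugate pairs $e^{\pm i\theta}$ via singular values of the discriminant, and I identify the smallest such phase $\theta_{\min}$ with an explicit monotone function of $T$ — morally $\theta_{\min} = \Theta(1/\sqrt{nT})$ — while lower-bounding the overlap $C = |\braket{\Psi_+}{\psi_{start}}|^2 + |\braket{\Psi_-}{\psi_{start}}|^2$ with the two special eigenvectors by a quantity depending only on $n$ and $\delta$.

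With this spectral picture in hand, I feed $U$ and $\ket{\psi_{start}}$ into Lemma~\ref{lem:unique} with $\delta_{\min} = \Theta(\delta \cdot \theta_{\min})$; inverting the explicit relation $\theta_{\min} = g(T,n)$ on the returned $\hat\theta$ then yields $\hat T$ with $|\hat T - T| \leq \delta T$. The total number of applications of $U$ coming out of Lemma~\ref{lem:unique} is $\tilde O\!\left(\tfrac{1}{C}\cdot \tfrac{\sqrt{nT}}{\delta}\cdot \log^2\!\tfrac{1}{\epsilon}\right)$, and multiplying by the $O(d)$ queries per application produces the claimed bound, the $\delta^{-3/2}$ scaling coming from the combination of the $\delta^{-1}$ precision factor with a $\delta^{-1/2}$ contribution from $1/C$ once the edge weights in $\refA, \refB$ are tuned in $\delta$. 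Because $T$ is unknown a priori, I wrap the estimator in a classical geometric loop on a working bound $T^*$, starting at $T^*=T_0$ and halving whenever the current estimate is inconsistent with $T\approx T^*$; the costs form a geometric series dominated by the final iteration, in which $T^*$ is within a constant factor of $T$. The branch ``$\G$ has more than $T_0$ edges'' is decided in the first iteration at cost $\tilde O(\sqrt{nT_0}\,d)$.

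The main obstacle is the spectral step: proving the precise relation $\theta_{\min} = \Theta(1/\sqrt{nT})$ together with a quantitative lower bound on $C$, for arbitrary DAGs rather than only trees. This requires a careful choice of the edge weights in $\refA, \refB$ and of the number of auxiliary edges in the augmentation so that the pair of eigenstates $\ket{\Psi_\pm}$ is well separated from the rest of the spectrum and concentrated on $\ket{\addE}$, and so that the intended $\delta^{-3/2}$ (rather than a worse polynomial in $\delta$) emerges when $C$ itself scales with $\delta$. Once this spectral characterization is established, the remaining pieces — the controlled-$U$ circuit, the eigenvalue estimation call via Lemma~\ref{lem:unique}, the inversion to $\hat T$, and the outer geometric loop — are standard.
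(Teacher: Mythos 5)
Your architecture matches the paper's: augment $\G$ with a launching edge $\addE$, build the bipartite reflections $\RA,\RB$ on the edge space with a tuned weight on the edges at the root, and feed $\RB\RA$ with start state $\ket{\addE}$ into Lemma~\ref{lem:unique}. However, the proposal has a genuine gap: everything that makes the theorem true is contained in the spectral step that you explicitly defer. A bound of the form $\theta_{\min}=\Theta(1/\sqrt{nT})$ is not enough to recover $T$ to precision $1\pm\delta$ --- unknown constants in a $\Theta(\cdot)$ would only give $T$ up to a constant factor. What is actually needed, and what the paper proves (Lemma~\ref{th:lemma2}), is the two-sided estimate $\alpha^2 T\leq \sin^{-2}\frac{\theta}{2}\leq(\alpha^2+n)T$, so that $\hat T=1/(\alpha^2\sin^2\frac{\hat\theta}{2})$ carries a controllable systematic bias of $1+n/\alpha^2$. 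Establishing this for arbitrary DAGs is the bulk of the paper's argument: it passes from the discriminant matrix to $K=(I-LL^*)^{-1}$, identifies $K$ with (a rescaling of) the fundamental matrix of an absorbing random walk on the augmented graph, expresses its entries via effective resistances, and bounds those by Rayleigh monotonicity against the spanning-tree case. None of this is supplied or replaced by an alternative argument in your proposal, so the proof is incomplete precisely at the point you identify as ``the main obstacle.''

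Two further inaccuracies suggest the parameter tuning has not been worked through. First, the $\delta^{-3/2}$ does \emph{not} arise from $1/C$: with $\alpha\geq\sqrt{2n}$ the overlap satisfies $\braket{\addE}{q_2}\geq\alpha^2/(\alpha^2+n)\geq 2/3$, so $C$ is an absolute constant independent of $\delta$ and $n$. The correct mechanism is that one must take $\alpha=\Theta(\sqrt{n/\delta})$ to drive the systematic bias $n/\alpha^2$ down to $O(\delta)$; this shrinks $\theta\approx 2/(\alpha\sqrt{T})$ to $\Theta(\sqrt{\delta/(nT)})$, and demanding relative precision $O(\delta)$ on $\theta$ then forces $\delta_{\min}=\Theta(\delta^{1.5}/\sqrt{nT_0})$, which is the sole source of the $\delta^{-1.5}$. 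Second, the outer geometric loop on a working bound $T^*$ is unnecessary: since $T\leq T_0$, running phase estimation once with precision set by $T_0$ already meets the requirement of Lemma~\ref{lem:theta} for the true $T$, and the ``more than $T_0$ edges'' branch is read off from the single estimate.
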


{\bf Note.} If we use $\tilde{O}$-notation, the $O(\log T_0)$ factor can be subsumed into the $\tilde{O}$ 
and the time complexity is similar to query complexity.

\subsection{Algorithm for DAG size estimation}\label{sec:GSE}

In this subsection, we describe the algorithm of Theorem \ref{thm:main-DAG}.
The basic framework of the algorithm (the state space and the transformations that we use) is adapted from Montanaro \cite{Montanaro2015}.

Let $ \mc G  = (\vertSet, \edgeSet) $ be a directed acyclic graph, with $ \lrv {\vertSet} =\noTot $ vertices and $ \lrv{\edgeSet} = T $ edges.    We assume that   the root is labeled as $ v_1   $.

For each vertex $ v_i \in \vertSet $
\begin{itemize}
	\item  $ \ell(i) \leq n$ stands for the distance from $ v_i $ to the root,   
	\item $ d_i \leq d$ stands for the total degree of the vertex $ v_i $. 
In notation in Section  \ref{sec:setting}, we have $ d_i =   d(v_i) + d'(v_i) $.
\end{itemize}
For technical purposes we also introduce an additional vertex $ \addV $ and an additional edge $\addE =  (\addV , v_1) $ which connects $ \addV $ to the root. Let $\complV = \vertSet \cup \lrb{\addV}  $, $ \complE=\edgeSet \cup \lrb {\addE} $ and $ \complG = (\complV,\complE) $.

For each vertex $ v $ by $ \nhood(v) $ we denote    the set of all edges in $ \edgeSet $ incident to $ v $ (in particular, when $ v =v_1$ is the root, the additional edge $ \addE \notin \edgeSet$ is not included in $ \nhood(v_1) $).

Let $ \vertSetA$ be the set of vertices at an even distance from the root (including the root itself) and $ \vertSetB $ be the set of vertices at an odd distance from the root. Let $ \noA = \lrv {\vertSetA} $ and $ \noB = \lrv {\vertSetB}$, then  $\noTot = \noA + \noB $.  Label the vertices in $ \vertSet $ so that  $\vertSetA= \lrb{v_1, v_2,\ldots,v_{\noA}}$ and $ \vertSetB=\lrb{v_{\noA+1}, v_{\noA+2}, \ldots, v_{\noA+\noB}}$.

Let  $ \alpha > 0 $ be fixed.  
Define a Hilbert space $ \mc H $ spanned by $ \lrb{\ket e\ \vline\  e\in  \complE} $ (one basis state per edge, including the additional edge). 
For each vertex $ v \in  \vertSet $  define a vector $ \ket {s_v} \in \mc H$ as
\[ 
\ket {s_v} = 
\begin{cases}
\ket{\addE} + \alpha \sum_{e \in \nhood(v)}  \ket e, & v= v_1\\
\sum_{e \in \nhood(v)}  \ket e, & v \neq  v_1.
\end{cases}
\]
Define a subspace $ \spaceA \subset \mc H $ spanned by $  \lrb{ \ket {s_v} \ \vline\ v \in \vertSetA } $ and a subspace $ \spaceB \subset \mc H $ spanned by $  \lrb{  \ket {s_v} \ \vline\  v \in \vertSetB } $. 
Define a unitary operator  $ \RA $  which  negates all vectors in $ \spaceA $ (i.e., maps $ \ket \psi $ to $ -\ket\psi $ for all $ \ket\psi \in \spaceA $)  and leaves $ \spaceA^\perp $   invariant. Analogously,  a unitary operator  $ \RB $     negates all vectors in $ \spaceB $    and leaves $ \spaceB^\perp $   invariant. 

Similarly as in \cite{Montanaro2015}, both $ \RA $ and $ \RB $ are implemented as the direct sum of  diffusion operators $ D_v $.
Let a subspace $ \mc H_v $, $ v \in \vertSet $, be spanned  by $ \lrb{  \ket e \ \vline\   e \in \nhood(v)} $ (or  $ \lrb{  \ket e \ \vline\   e \in \nhood(v)} \cup \lrb{\ket \addE} $ when $ v=v_1 $ is the root). Define the diffusion operator  $ D_v $, which  acts on the subspace $ \mc H_v $, as $ I - \frac{2}{\nrm{s_v}^2} \ket {s_v}\bra  {s_v} $. This way, each $ D_v $ can be implemented with only knowledge of $ v $ and its neighborhood. (A minor difference from \cite{Montanaro2015}: since we are concerned with tree size estimation problem now, we make no assumptions about any vertices being marked at this point and therefore $ D_v $ is not the identity for any $ v \in \vertSet $.)

Then, 
\[ 
\RA = \bigoplus_{v \in \vertSetA} D_v 
\quad\text{and}\quad
\RB = \ket {\addE}\bra  {\addE}  + \bigoplus_{v \in \vertSetB} D_v.
 \]
In Section \ref{sec:efficient}, we show

\begin{lemma}
\label{lem:ra}
Transformations $\RA$ and $\RB$ can be implemented using $O(d)$ queries and $O(d \log V)$ non-query gates. 
\end{lemma}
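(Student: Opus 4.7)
The plan is to exploit the direct-sum decomposition $\RA=\bigoplus_{v\in\vertSetA}D_v$ (and its counterpart for $\RB$) announced just above the lemma. Because $\mc G$ is layered, every edge in $\edgeSet$ has exactly one endpoint in $\vertSetA$ and one in $\vertSetB$; the subspaces $\mc H_v$ are therefore pairwise orthogonal as $v$ ranges over $\vertSetA$, and together with $\ket{\addE}\in\mc H_{v_1}$ they exhaust $\mc H$. The analogous statement for $\vertSetB$ omits $\ket{\addE}$, which is precisely the invariant line captured by the $\ket{\addE}\bra{\addE}$ summand in $\RB$. So the whole task reduces to implementing one controlled application of $D_v$, with a trivial pass-through on $\ket{\addE}$ added for $\RB$.

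To apply the controlled $D_v$ on an input $\ket{e}$, I would proceed in three stages. First, compute into an ancilla the endpoint $v$ of $e$ lying in the appropriate set ($\vertSetA$ for $\RA$, $\vertSetB$ for $\RB$, with $\ket{\addE}$ routed to $v_1$ in the $\RA$ case): $O(1)$ queries to the incidence oracles recover both endpoints of $e$, and the correct one is selected using the parity of layer read off from the vertex label. Second, query the at most $d_v+1\le d+1$ edges of $\nhood(v)$ (and $\addE$ if $v=v_1$) into a list ancilla at cost $O(d)$ queries, then build a preparation unitary $U_v$ with $U_v\ket{0}=\ket{s_v}/\nrm{s_v}$ by creating a uniform superposition (reweighted by a single controlled rotation between $\ket{\addE}$ and the remaining slots when $v=v_1$) over $O(d)$ index slots and permuting index $i$ to the edge label stored in the $i$-th slot; the permutation step takes $O(d\log\noTot)$ elementary gates for the required label comparisons. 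The diffusion is then $D_v=U_v\,(I-2\ket{0}\bra{0})\,U_v^\dagger$. Third, reverse the first two stages to uncompute the list and endpoint ancillas.

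The main thing to get right is the cleanness of the first stage: producing $\ket{v}$ from $\ket{e}$ as a function on basis states so that stage three can uncompute faithfully, which requires that layer parity be $O(1)$-readable from the vertex label, as built into the oracle model of Section~\ref{sec:setting}. With that in place, summing the three stages yields $O(d)$ queries and $O(d\log\noTot)$ non-query gates per call to $\RA$ or $\RB$, matching the bound in the lemma.
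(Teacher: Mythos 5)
Your proposal is correct and follows essentially the same route as the paper: identify the relevant endpoint $v$ of the edge, query its $O(d)$ neighbors, relabel them into a small index register, reflect about the (re)normalized $\ket{s_v}$ via a state-preparation unitary conjugating $I-2\ket{0}\bra{0}$, and uncompute, for $O(d)$ queries and $O(d\log V)$ gates. The only cosmetic difference is that the paper encodes each edge basis state as the pair $\ket{u,w}$ of its endpoints so that your first stage requires no queries at all, whereas you recover the endpoint from the oracles; either way the stated bounds hold.
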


We note that $ \RA $ and $ \RB $ are defined with respect to a parameter $ \alpha $, to be specified in the algorithm that uses the transformations $\RA$ and $\RB$.

The algorithm of Theorem \ref{thm:main-DAG}  for estimating size of DAGs is as follows:

\begin{algorithm}[tbhp]
\caption{Algorithm for DAG size estimation}
\label{alg:dag}
\begin{enumerate}
\item
Apply the algorithm of Lemma \ref{lem:unique} for the transformation $\RB \RA$ (with  $ \alpha = \sqrt{2n \delta^{-1}}$) with the state $\ket{\psi_{start}}=\ket{\addE}$ and  parameters $C=\frac{4}{9}$, $\epsilon_{min}=\epsilon$, 
$\delta_{min}=\frac{\delta^{1.5}}{ 24\sqrt { 3nT_0 }}$
\item
Output $\hat{T} = \frac{1}{\alpha^2 \sin^2 \frac{\hat \theta}{2}}$ as the estimate for the number of edges.
\end{enumerate}
\end{algorithm}

\subsection{Analysis of Algorithm \ref{alg:dag}}

We now sketch the main ideas of the analysis of Algorithm \ref{alg:dag}.
From Lemma \ref{th:1eigenvectors} in Section \ref{sec:1eigen} 
it follows that $\RB \RA$  has no 1-eigenvector $\ket{\psi}$ with $\lbra \psi | \addE \rket\neq 0$.
	Let $\ket{\Psi_+}$ and $\ket{\Psi_-}$ be the two eigenvectors of $\RB \RA$ with eigenvalues $e^{\pm i \theta}$ closest to 1.
	Lemma \ref{lem:prob-success} shows that the starting state $\ket{\addE}$ has sufficient overlap with the subspace spanned by these two vectors for applying the algorithm of Lemma \ref{lem:unique}.
	\begin{lemma}
		\label{lem:prob-success}
		If $ \alpha\geq \sqrt {2n} $, we have
		\[
		 \braket{\addE}{q_2}  \geq    \frac{2}{3}
		  \]
		for a state $\ket{q_2}\in  \Span \lrb{\ket {\Psi_+}, \ket {\Psi_-}} $.
	\end{lemma}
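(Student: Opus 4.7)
My strategy is to exhibit an explicit \emph{witness} state $\ket{w}$ that is nearly a $1$-eigenvector of $\RB\RA$ but carries a substantial component along $\ket{\addE}$. Since Lemma~\ref{th:1eigenvectors} forbids any true $1$-eigenvector of $\RB\RA$ from overlapping $\ket{\addE}$, the near-invariance of $\ket{w}$ must be absorbed by the eigenvectors of $\RB\RA$ whose eigenvalues are closest to~$1$, namely $\ket{\Psi_\pm}$. Concretely, pick any odd-length path $v_1 = u_0, u_1, \ldots, u_k$ in $\G$ with $k\le n$ (so $u_k\in\vertSetB$; such a path exists since the root has at least one child), label the path edges $e_i = (u_{i-1}, u_i)$, and set
\[
\ket{w} \;=\; \alpha\ket{\addE} + \sum_{i=1}^{k} (-1)^i \ket{e_i}.
\]

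\textbf{Verification.} The alternating signs make $\ket{w}$ a ``cancelling flow''. A short computation shows $\langle s_v|w\rangle = 0$ for every $v\in\vertSetA$: at $v_1$ the $\ket{\addE}$-contribution $\alpha$ cancels the $\alpha\cdot(-1)$ contribution coming from $e_1$ inside $\bra{s_{v_1}}$; at each interior $\vertSetA$-vertex on the path the two incident path edges contribute $+1$ and $-1$; off-path $\vertSetA$-vertices are trivial. Hence $\ket{w}\in\spaceA^\perp$, so $\RA\ket{w}=\ket{w}$ and $(\RB\RA - I)\ket{w} = -2\Pi_B\ket{w}$. The same cancellation at interior $\vertSetB$-vertices pins $\Pi_B\ket{w}$ on the single state $\ket{s_{u_k}}$, giving $\|(\RB\RA - I)\ket{w}\|^2 = 4/d_{u_k}\le 4$. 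Since $\|w\|^2 = \alpha^2 + k\ge \alpha^2\ge 2n$, the normalized $\tilde w := \ket{w}/\|w\|$ satisfies
\[
\|(\RB\RA - I)\tilde w\|^2 \;\le\; \tfrac{2}{n}, \qquad \langle\addE|\tilde w\rangle \;=\; \tfrac{\alpha}{\|w\|} \;\ge\; \sqrt{\tfrac{2}{3}}.
\]

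\textbf{Spectral conclusion and obstacle.} Expand $\tilde w = \sum_j \beta_j\ket{\psi_j}$ and $\ket{\addE} = \sum_j \gamma_j\ket{\psi_j}$ in the eigenbasis of $\RB\RA$ (with $\RB\RA\ket{\psi_j} = e^{i\theta_j}\ket{\psi_j}$). The first bound gives $\sum_j|\beta_j|^2\sin^2(\theta_j/2)\le 1/(2n)$, while Lemma~\ref{th:1eigenvectors} kills the $\theta_j=0$ terms in $\gamma_j$. Cauchy--Schwarz with weights $\sin(\theta_j/2)$ and $1/\sin(\theta_j/2)$ combined with $\langle\addE|\tilde w\rangle \ge \sqrt{2/3}$ then yields $\sum_{\theta_j\ne 0}|\gamma_j|^2/\sin^2(\theta_j/2)\ge 4n/3$, forcing the $\ket{\addE}$-distribution to peak at the smallest non-zero $|\theta_j|$. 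By the Jordan decomposition of the pair $(\RA,\RB)$, the $2$D invariant block of $\RB\RA$ of smallest angle is exactly $\Span\lrb{\ket{\Psi_+},\ket{\Psi_-}}$; taking $\ket{q_2}$ to be the normalized projection of $\ket{\addE}$ onto this block then gives $\braket{\addE}{q_2}\ge 2/3$. The delicate part of this last step is translating the global weighted concentration into a lower bound on the \emph{single} smallest-angle block rather than a cumulative bound across several near-$1$ blocks; closing this gap will likely require either a structural identity expressing each block's $\ket{\addE}$-projection in terms of the inverse of its angle (so that the smallest angle automatically carries the dominant mass), or an explicit separation estimate between $\theta_{\min}$ and the next smallest angle in the decomposition.
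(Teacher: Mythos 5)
Your witness-state computation is correct and rather elegant (the cancellation along the path is essentially the identity the paper uses in Lemma~\ref{th:1eigenvectors} to show $\ket{\addE}\in\spaceA+\spaceB$, localized to a single root-to-leaf path), but the spectral conclusion you draw from it is quantitatively far too weak, and the gap you flag at the end is not a delicate finishing step --- it is the entire content of the lemma. The problem is one of scales. Your bound $\|(\RB\RA-I)\tilde{w}\|^2\le 2/n$ only certifies that $\tilde{w}$ (and hence, via Cauchy--Schwarz, a constant fraction of $\ket{\addE}$) lives on eigenvectors with $\sin^2(\theta_j/2)=O(1/n)$. But the target block has $\sin^2(\theta_{\min}/2)=\lRho^{-1}\le 1/(\alpha^2T)$, smaller by a factor of order $T$; between these two scales there can be many other two-dimensional blocks, and nothing in your argument prevents $\ket{\addE}$ from putting almost all of its mass on them. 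Concretely, your inequality $\sum_{\theta_j\ne0}|\gamma_j|^2/\sin^2(\theta_j/2)\ge 4n/3$ is compatible with $|\gamma_{\min}|^2$ being as small as $\tfrac{4n}{3}\sin^2(\theta_{\min}/2)=O(1/T)$, because even a tiny amplitude on the smallest-angle block contributes $|\gamma_{\min}|^2\lRho\ge|\gamma_{\min}|^2\alpha^2T$ to that weighted sum. A separation estimate between $\theta_{\min}$ and the next angle would not rescue the argument either: even granting that the second-largest eigenvalue of $K$ is only $O(nT)$ (which does follow from Lemma~\ref{lem:corr9}), your witness would need residual $\|(\RB\RA-I)\tilde{w}\|^2=o(1/(nT))$, not $O(1/n)$, to resolve the smallest-angle block from its neighbours.

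The paper closes exactly this gap by a non-perturbative route: it characterizes $K=(I-LL^*)^{-1}$ entrywise via the fundamental matrix of an absorbing random walk and Rayleigh monotonicity (Lemma~\ref{lem:corr9}), observes that $\braket{\addE}{q_2}=u[1]\sqrt{\lRho}/\sqrt{1+\alpha^2d_1}$ where $u$ is the top eigenvector of $K$, and then uses the near-rank-one structure $\alpha^2a[i]a[j]\le K[i,j]\le(\alpha^2+n)a[i]a[j]$ together with the exact first-row identity $K[1,j]=\alpha^2a[1]a[j]$ to lower-bound the first coordinate $u[1]$ directly. Some such structural control of the top eigenvector itself (not merely of the spectrum or of a weighted average over it) appears unavoidable; once it is imported, the witness state becomes redundant.
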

	
	\begin{proof}
		In Section \ref{sec:phase}.
	\end{proof}
	
	Lemma \ref{lem:theta} shows that the estimate $\hat \theta$ provides a good estimate $\hat T$ for the size of the DAG.
	\begin{lemma}
		\label{lem:theta}
		Suppose that  $ \delta \in (0,1) $. Let    $ \alpha = \sqrt{2 n  \delta^{-1}} $   and $ \hat \theta \in  (0;\pi/2)$ satisfy
		\[ 
		\lrv{\hat \theta - \theta } \leq \frac{\delta^{1.5}}{ {24} \sqrt { 3nT }}.
		\]
		Then
		\[ 
		(1 - \delta) T \leq   \frac{1}{\alpha^2 \sin^2 \frac{\hat \theta }{2}}  \leq  (1+ \delta)T.
		\]
	\end{lemma}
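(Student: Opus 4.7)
The plan is to reduce the lemma to a one-variable calculus estimate once the underlying identity relating $\theta$ to $T$ is in hand. The estimator $\hat T = 1/(\alpha^2 \sin^2(\hat\theta/2))$ is designed so that it returns exactly $T$ when $\hat\theta = \theta$, so the starting point must be the identity
\[
\sin^2\!\bigl(\theta/2\bigr) \;=\; \frac{1}{\alpha^2 T}.
\]
I would obtain this from the eigenvector analysis of $\RB\RA$ on the two-dimensional invariant subspace containing $\ket{\psi_{start}}=\ket{\addE}$ and $\Span\{\ket{\Psi_+},\ket{\Psi_-}\}$ --- i.e., from the same calculation that underlies Lemma \ref{lem:prob-success}, since the overlap $\braket{\addE}{q_2}$ and the phase $\theta$ are determined by the same $2\times 2$ block. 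With $\alpha^2 = 2n/\delta$, this identity gives $\sin(\theta/2) = \sqrt{\delta/(2nT)}$, so $\theta$ is small (well within $(0,\pi/2)$) and in particular $\theta \leq 2\arcsin\bigl(\sqrt{\delta/(2nT)}\bigr) \leq \sqrt{2\delta/(nT)}\cdot(1+o(1))$.

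Next I would bound $\bigl|\sin^2(\hat\theta/2) - \sin^2(\theta/2)\bigr|$ by a straightforward mean value argument. Using $\tfrac{d}{dt}\sin^2(t/2) = \tfrac{1}{2}\sin t$, we have
\[
\bigl|\sin^2(\hat\theta/2) - \sin^2(\theta/2)\bigr| \;\leq\; \tfrac{1}{2}\lrv{\hat\theta-\theta}\,\sin\!\bigl(\max(\theta,\hat\theta)\bigr) \;\leq\; \tfrac{1}{2}\lrv{\hat\theta-\theta}\bigl(\theta + \lrv{\hat\theta-\theta}\bigr).
\]
Plugging the hypothesis $\lrv{\hat\theta-\theta}\leq \delta^{1.5}/(4\sqrt{3nT})$ and the bound on $\theta$, the ratio $\lrv{\hat\theta-\theta}/\theta$ works out to at most $\delta/(4\sqrt{6})$, so the right-hand side is at most a small constant times $\delta\,\theta^2$. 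Combined with the lower bound $\sin^2(\theta/2) \geq (\theta/2)^2(1-\theta^2/12)$ (valid since $\theta$ is small), this yields
\[
\bigl|\sin^2(\hat\theta/2) - \sin^2(\theta/2)\bigr| \;\leq\; \tfrac{\delta}{2}\,\sin^2(\theta/2),
\]
with room to spare.

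Finally, I would invert. The previous display gives $\sin^2(\hat\theta/2) \in [(1-\delta/2)\sin^2(\theta/2),\,(1+\delta/2)\sin^2(\theta/2)]$, and since $1/(1-x) \leq 1+2x$ and $1/(1+x) \geq 1-x$ for $x \in [0,1/2]$, we conclude
\[
(1-\delta)T \;\leq\; \frac{1}{\alpha^2 \sin^2(\hat\theta/2)} \;\leq\; (1+\delta)T.
\]

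The main obstacle is really just the first step: pinning down the identity $\sin^2(\theta/2)=1/(\alpha^2 T)$ from the block-diagonal structure of $\RB\RA$ on the subspace relevant to $\ket{\addE}$. This is presumably a by-product of the analysis proving Lemmas \ref{th:1eigenvectors} and \ref{lem:prob-success}; once it is in hand, the remaining argument is a routine perturbation calculation, and the unusual-looking constant $\delta^{1.5}/(4\sqrt{3nT})$ in the hypothesis is exactly what is needed to make the numerical slack in the calculus step close at the $(1\pm\delta)$ target.
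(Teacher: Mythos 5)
Your proposal has a genuine gap at precisely the step you flag as the crux: the claimed identity $\sin^2(\theta/2) = 1/(\alpha^2 T)$ is false. What the analysis actually establishes (Lemma \ref{th:lemma2}) is only the two-sided bound $\alpha^2 T \leq \sin^{-2}(\theta/2) \leq (\alpha^2+n)T$. The quantity $\sin^{-2}(\theta/2)$ equals the top eigenvalue of $K$, whose entries involve the depths (equivalently, effective resistances) of individual vertices of $\G$ via the fundamental matrix; it is not a function of $T$ alone, and for deep trees it genuinely exceeds $\alpha^2 T$. Consequently the estimator $1/(\alpha^2\sin^2(\hat\theta/2))$ carries a systematic multiplicative bias of up to $(\alpha^2+n)/\alpha^2 = 1+\delta/2$ \emph{even when} $\hat\theta=\theta$ exactly — indeed, the sole reason for choosing $\alpha=\sqrt{2n\delta^{-1}}$ rather than, say, $\alpha=\sqrt{2n}$ is to cap this bias at $1+\delta/2$. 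Your error budget spends the entire $\delta$ on the $\hat\theta$-versus-$\theta$ perturbation and leaves nothing for the bias, so your final inequality would fail for instances where $\sin^{-2}(\theta/2)$ is near the upper end $(\alpha^2+n)T$.

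The repair is exactly the paper's route. Your perturbation estimate is essentially the paper's Lemma \ref{th:lemma14} and is sound: from $\sin(\theta/2)\geq 1/\sqrt{(\alpha^2+n)T}\geq\sqrt{\delta}/\sqrt{3nT}$, the hypothesis gives $\lrv{\hat\theta-\theta}\leq\frac{\delta}{4}\sin\frac{\theta}{2}$, hence a $(1\pm\frac{\delta}{4})$ multiplicative perturbation of $\sin^{-2}(\theta/2)$. But you must then compare $1/(\alpha^2\sin^2(\hat\theta/2))$ to $T$ through the two-sided inequality rather than an identity, arriving at
\[
\lr{1-\tfrac{\delta}{4}}T \;\leq\; \frac{1}{\alpha^2\sin^2(\hat\theta/2)} \;\leq\; \lr{1+\tfrac{\delta}{4}}\lr{1+\tfrac{n}{\alpha^2}}T = \lr{1+\tfrac{\delta}{4}}\lr{1+\tfrac{\delta}{2}}T < (1+\delta)T,
\]
which is where the specific constant in the hypothesis and the specific value of $\alpha$ jointly close the $(1\pm\delta)$ target.
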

	
	\begin{proof}
		In Section \ref{sec:phase}.
	\end{proof}

Lemmas \ref{lem:unique}, \ref{lem:prob-success} and \ref{lem:theta} together imply that Algorithm \ref{alg:dag}
outputs a good estimate with probability at least $1-\epsilon$. According to Lemma \ref{lem:unique}, we need to invoke
controlled versions of $\RA$ and $\RB$
\[ O\left(\frac{1}{C} \frac{1}{\delta_{min}} \log \frac{1}{C}  \log^2 \frac{1}{\epsilon_{min}} \right) =
O\left( \frac{\sqrt{n T_0}}{\delta^{1.5}} \log^2 \frac{1}{\epsilon} \right) \]
times and because of Lemma \ref{lem:ra}, each of these transformations can be 
performed with $O(d)$ queries and $O(d \log V) = O(d \log T_0)$ non-query transformations.
	
Proof of Lemmas \ref{lem:prob-success} and \ref{lem:theta} consists of a number of steps:
\begin{enumerate}
\item
We first relate the eigenvalues of $\RA \RB$ to singular values of an $\noA\times \noB$ matrix $L$.
The matrix $L$ is defined by $L[v, w] = \frac{\lbra s_v \ket{s_w}}{\|s_v\| \cdot \|s_w\|}$.
Because of  a correspondence by Szegedy \cite{Szegedy2004b}, a pair of eigenvalues $e^{\pm i\theta}$ of $\RB \RA$ corresponds 
to a singular value $\lambda=\cos \frac{\theta}{2}$ of $L$.
\item
Instead of $L$, we consider $K=(I-L L^*)^{-1}$ (with both rows and columns indexed by elements of $\vertSetA$). 
A singular value $\lambda$ of $L$ corresponds to an eigenvalue $  (1-\lambda^2)^{-1}$ of $K$.
\item
We relate $K$ to the fundamental matrix $N$ of a certain classical random walk on the graph $\mc G$.
(The entries of the fundamental matrix $N[i, j]$ are the expected number of visits to $j$ that the random walk makes 
if it is started in the vertex $i$.)
\item
We relate $N$ to the resistance between $i$ and $j$ if the graph $\mc G$ is viewed as an electric network.
\item
We bound the electric resistance, using the fact that the resistance only increases if an edge is removed form $\mc G$. Thus, 
the maximum resistance is achieved if $\mc G$ is a tree.
\end{enumerate}

This analysis yields that the entries of $K$ can be characterized by the inequalities
\[ 
\alpha^2     a[i] a[j]
\leq K[i,j] \leq  
\lr{\alpha^2    + n} a[i] a[j]
\]
where 
$ a[1] = \sqrt{d_1 +  \alpha^{-2} }$ and $ a[j] = \sqrt{d_j}$ for $  j \in [2 \tdots A] $
(Lemma \ref{lem:corr9} in  Section \ref{sec:lem13}). 
From this we derive bounds on the largest eigenvalue of $K$  which imply bounds on the eigenvalue of $\RB \RA$ that is closest to 1.

We describe the analysis in more detail in Section \ref{sec:dag-analysis}.

\subsection{Better backtracking algorithm}

{\bf Backtracking task.}
We are given a tree $\T$ and a black-box function 
\[ P:V(\T)\rightarrow \lrb{true, false,indeterminate} \]
(with $P(x)$ telling us whether $x$ is a solution to the computational problem we are trying to solve), where $ V(\T) $ stands for the set of vertices of $ \T $ and $ P(v) \in  \lrb{true, false}$ iff $ v $ is a leaf. A vertex $ v\in V(\T) $ is called marked if $ P(v)=true $.
We have to determine whether $\T$ contains a marked vertex.

For this section, we assume that the tree is binary. 
(A vertex with $ d $ children can be replaced by a binary tree of depth $ \lceil \log d \rceil $. This increases the size of the tree by a constant factor, the depth by a factor of at most $ \lceil \log d \rceil $ and the complexity bounds by a polylogarithmic factor of $ d $.)

\begin{theorem}
\cite{Montanaro2015}
\label{thm:montanaro}
There is a quantum algorithm which,
given
\begin{itemize}
\item
a tree $\T$ (accessible through black boxes, as described in Section \ref{sec:setting}), 
\item
an access to the  black-box function $P$, and 
\item
numbers $T_1$ and $n$ which are upper bounds on the size and the depth of $\T$,
\end{itemize}
determines if the tree contains a marked vertex with query and time complexity 
$O(\sqrt{T_1 n} \log \frac{1}{\epsilon})$, 
with the probability of a correct answer at least $1-\epsilon$.
\end{theorem}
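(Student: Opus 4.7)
The plan is to follow Montanaro's construction \cite{Montanaro2015}, adapting the Szegedy-walk framework of Section \ref{sec:GSE} with one modification: for every \emph{marked} vertex $v$ (where $P(v)=true$) we redefine $D_v := I$, so marked vertices act trivially. Choose an appropriate $\alpha$ (the natural choice here is $\alpha = 1$ rather than $\alpha = \sqrt{2n\delta^{-1}}$, since we no longer need a $1\pm\delta$ estimate of the size) and form $W = \RB \RA$. The algorithm applies phase estimation to $W$ on the initial state $\ket{\addE}$ with precision $\delta_{min} = \Theta(1/\sqrt{T_1 n})$, declaring ``marked vertex exists'' iff a nonzero phase is detected, and repeats this $O(\log(1/\epsilon))$ times to amplify the success probability.

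Correctness splits into two cases. If $\T$ has no marked vertex, all $D_v$ are genuine reflections and a direct computation (analogous to the unmarked-subspace analysis underlying Lemma \ref{th:1eigenvectors} referenced in this paper) shows that $\ket{\addE}$ is a $+1$-eigenvector of $W$; hence phase estimation outputs exactly $0$. If $\T$ does contain a marked vertex, then $\ket{\addE}$ decomposes over the eigenbasis of $W$ with a constant fraction of its weight lying in the two-dimensional span of the eigenvectors $\ket{\Psi_\pm}$ whose eigenvalues $e^{\pm i \theta_{min}}$ are closest to $1$, an overlap bound entirely parallel to Lemma \ref{lem:prob-success}. The algorithm then detects the nonzero phase.

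The main obstacle, and the technical heart of the proof, is the gap bound $|\theta_{min}| = \Omega(1/\sqrt{T_1 n})$ whenever a marked vertex exists. Following the chain of reductions sketched after Lemma \ref{lem:theta}, one first relates $\theta_{min}$ to a singular value of the bipartite matrix $L$ associated to the modified walk (Szegedy's correspondence), then to the largest eigenvalue of the resolvent $K = (I - L L^*)^{-1}$, then to the fundamental matrix of a classical random walk on $\T$ in which marked vertices are absorbing, and finally to the effective resistance between the root and the nearest marked vertex when $\T$ is viewed as an electric network. For a tree this resistance is at most the depth, i.e. $O(n)$; combined with the edge count $T_1$ this yields the desired $\Omega(1/\sqrt{T_1 n})$ phase gap. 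Since $\T$ is already a tree and the walk is essentially the one we analyze for DAGs, the electric-resistance bound is in fact simpler than in the proof of Theorem \ref{thm:main-DAG}.

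Plugging into Lemma \ref{lem:unique} with $C = \Theta(1)$, $\delta_{min} = \Theta(1/\sqrt{T_1 n})$ and $\epsilon_{min} = \Theta(\epsilon)$ requires $O(\sqrt{T_1 n} \log \frac{1}{\epsilon})$ applications of controlled-$W$; by Lemma \ref{lem:ra} each of $\RA, \RB$ uses $O(d) = O(1)$ queries (the tree is binary) together with one query to $P$ per vertex accessed in order to decide whether $D_v$ is the identity or the usual reflection. The overall query and time complexity is therefore $O(\sqrt{T_1 n} \log \frac{1}{\epsilon})$, as claimed.
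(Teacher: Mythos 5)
The paper does not prove this theorem; it is quoted from \cite{Montanaro2015}. Judged on its own merits, your reconstruction has the detection logic exactly backwards, and this is not a cosmetic issue --- it breaks the proof. You claim that when $\T$ has no marked vertex, $\ket{\addE}$ is a $+1$-eigenvector of $W=\RB\RA$ and phase estimation outputs $0$. This contradicts Lemma \ref{th:1eigenvectors} of this very paper (which you cite in support): when all $D_v$ are genuine reflections, the starting state is \emph{orthogonal} to the entire $1$-eigenspace of $\RB\RA$, and its support lies on eigenvectors with phases of magnitude at least $\theta_{min}=\Omega(1/\sqrt{T_1 n})$. (Indeed, if $\ket{\addE}$ were a $1$-eigenvector in the unmarked case, the tree size estimation algorithm of this paper --- which reads off $T$ from the nonzero phase $\theta$ seen by exactly this starting state on exactly this unmarked walk --- could not work.) The correct dichotomy in Montanaro's argument is the opposite of yours: if a marked vertex \emph{exists}, the modified walk (with $D_v=I$ at marked $v$) acquires a $1$-eigenvector, explicitly constructed as a signed, weighted superposition along the path from the root to a marked vertex, having constant overlap with the starting state, so phase estimation reports a phase near $0$; if \emph{no} marked vertex exists, the starting state has negligible overlap with eigenvectors of phase below $\Omega(1/\sqrt{T_1 n})$, so phase estimation reports a phase bounded away from $0$. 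The algorithm therefore declares ``marked vertex exists'' iff the estimated phase is (close to) zero --- the negation of your decision rule. With your rule, the unmarked case would essentially always be declared marked.

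Two consequences of the inversion propagate through the rest of your write-up. First, the ``technical heart'' you identify --- the phase gap $\Omega(1/\sqrt{T_1 n})$ --- is needed in the \emph{unmarked} case, not ``whenever a marked vertex exists'' as you state; in the marked case the eigenvalue of interest is exactly $1$ and there is no gap to bound. Second, the marked case requires an ingredient you never supply: the existence and explicit construction of a $1$-eigenvector of the modified walk with $\Omega(1)$ overlap with the starting state (this is where Montanaro's choice of weights, roughly $\alpha\approx\sqrt{n}$ at the root rather than your $\alpha=1$, matters). Lemma \ref{lem:prob-success}, which you invoke as ``entirely parallel,'' concerns the overlap with the \emph{nonzero}-phase eigenvectors of the \emph{unmarked} walk and does not address this. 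The complexity accounting at the end is fine once the correctness argument is repaired, but as written the proof establishes the wrong algorithm.
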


The weakness of this theorem is that the complexity of the algorithm depends on $T_1$. On the other hand, a classical
backtracking algorithm $\A$ might find a solution in substantially less than $T_1$ steps (either because 
the tree $\T$ contains multiple vertices $x:P(x)=true$ or because the heuristics that $\A$ uses to decide which
branches to search first are likely to lead to $x:P(x)=true$). 

We improve on Montanaro's algorithm by showing

\begin{theorem}
\label{thm:search}
Let $\A$ be a classical backtracking algorithm. 
There is a quantum algorithm that, with probability at least $1-\epsilon$, outputs 1 if 
$\T$ contains a marked vertex and 0 if $\T$ does not contain a marked vertex and uses
\[ 
O\left(
n\sqrt{nT}   \log^2 \frac{n \log T_1}{\epsilon} 
\right) 
\]
queries and $O(\log T_1)$ non-query transformations per query where 
\begin{itemize}
\item
$T_1$ is an upper bound on the size of $\T$ (which is given to the quantum algorithm),
\item
$n$ is an upper bound on the depth of the $\T$ (also given to the quantum algorithm), 
\item
$T$ is the number of vertices of $\T$ actually explored by $\A$.
\end{itemize}
\end{theorem}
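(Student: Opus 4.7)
The plan is to combine Montanaro's algorithm (Theorem \ref{thm:montanaro}) with our size estimator (Theorem \ref{thm:main-DAG}) via a geometric search on a guess $T_g$ of $T$. Let $\T_\A \subseteq \T$ be the ``virtual exploration tree'' obtained by declaring each vertex $v$ with $P(v) \in \lrb{true,false}$ to be a leaf; this is exactly the tree $\A$ would traverse if it never halted early, and it is accessible via the same black boxes as $\T$ together with one extra call to $P$. Writing $v_1, v_2, \ldots$ for $\A$'s DFS order and $\tau$ for the index of the first marked vertex in this order (or $+\infty$ if none), we have $T = \min(\tau, \lrv{\T_\A})$. For an integer $k \geq 1$ the DFS-prefix $\T_\A^{(k)} := \lrb{v_1, \ldots, v_{\min(k, \lrv{\T_\A})}}$ is a connected subtree of $\T_\A$ of depth at most $n$ and size at most $k$, and $\T$ contains a marked vertex iff $\T_\A^{(T)}$ does.

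The outer algorithm iterates over $k = 1, 2, 4, \ldots, T_1$. At each $k$ it runs two subroutines: (i) Montanaro's find-marked-vertex routine on $\T_\A^{(k)}$ with size bound $k$, returning $1$ upon success; and (ii) Theorem \ref{thm:main-DAG} applied to $\T_\A$ with bound $k$ and precision $\delta = 1/2$, returning $0$ whenever the output certifies $\lrv{\T_\A} \leq k$. Each subroutine is invoked with error parameter $\epsilon' = \epsilon / O(\log T_1)$. At the first doubling value $k$ satisfying $k \geq T$ one of the two subroutines succeeds: if a marked vertex exists then $v_T \in \T_\A^{(k)}$ and Montanaro's find locates it; otherwise $T = \lrv{\T_\A} \leq k$ and the size estimator certifies termination. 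Iteration $k$ costs $O(\sqrt k \, n^{3/2} \log(1/\epsilon'))$ for the find step (dominant) and $\tilde O(\sqrt{kn})$ for the size estimate; summing a geometric series in $k$ up to $\Theta(T)$ and union-bounding over the $O(\log T_1)$ outer iterations yields the claimed $O(n\sqrt{nT}\,\log^2(n\log T_1/\epsilon))$ bound, with the extra logarithmic factors tracing back to Lemma \ref{lem:unique}.

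The technical heart, and the main obstacle, is providing an efficient oracle for $\T_\A^{(k)}$ with only polylogarithmic overhead over raw access to $\T$. Given $v$, deciding whether $v \in \T_\A^{(k)}$ reduces to approximating its DFS-rank in $\T_\A$ via the recursion $r(v) = r(\mathrm{parent}(v)) + 1 + \sum_{w} \lrv{\T_\A(w)}$ summed over the left siblings $w$ of $v$; each $\lrv{\T_\A(w)}$ is estimated using Theorem \ref{thm:main-DAG} at relative precision $O(1/n)$ with a per-call failure probability small enough that a union bound over the $\tilde O(\sqrt T \, n^{3/2})$ outer-loop oracle calls contributes at most $\epsilon/2$. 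The subtrees queried along any single root-to-$v$ path cover at most $k$ vertices in total, so the amortised cost of this simulation is absorbed into the outer budget. The delicate point is showing that Montanaro's quantum walk remains correct when driven by a probabilistically imperfect truncation oracle; this is handled by standard robust-oracle composition, using the fact that any borderline misclassification (a vertex whose true rank falls within the $O(k/n)$-wide window around the threshold $k$) still yields a truncated tree of size $\Theta(k)$ that contains $\T_\A^{(k/2)}$, which is enough to preserve the doubling invariant between consecutive iterations.
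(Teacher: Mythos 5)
Your outer loop (geometric doubling of a guess $k$, running Montanaro's search on the DFS-prefix of size $\approx k$, and certifying termination with the size estimator) matches the paper's Algorithm \ref{alg:2}. The gap is in the step you yourself flag as the technical heart: how Montanaro's walk gets access to the truncated tree. You propose an on-demand membership oracle that, given $v$, estimates the DFS-rank of $v$ by running the quantum tree-size estimator on the subtrees of the left siblings along the root-to-$v$ path, invoked inside each application of the walk operator. This fails on cost: each such oracle call performs up to $n$ tree-size estimations, and an estimation of a subtree of size $m_j$ costs $\Theta(\sqrt{n m_j}/\delta^{1.5})$ \emph{regardless} of how the $m_j$ later sum up --- these costs multiply the number of walk steps rather than amortizing against the outer budget. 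Even with constant precision (your $O(1/n)$ precision is not needed, since the rank is a sum of disjoint subtree sizes, but it makes things worse by another $n^{1.5}$ factor), a single oracle call costs on the order of $n\sqrt{nk}$, and multiplying by the $\widetilde O(\sqrt{kn}\,n)$ walk steps gives a bound that is linear in $k$, not $\sqrt{k}$. A second, independent problem is that the walk operator must be a fixed (near-)unitary: a membership oracle built from a randomized quantum estimation subroutine gives answers that are not consistent across repeated calls or across branches of a superposition, so there is no single truncated tree being searched, and ``standard robust-oracle composition'' does not repair this --- the estimates must be fixed once and for all before the walk starts.

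The paper's resolution is the observation that the DFS-prefix $\T_m$ is completely described by a \emph{single} root-to-vertex path $r=u_0\to\cdots\to u_l=u$ (the boundary of the prefix): $\T_m$ consists of the path vertices together with the full subtrees hanging off the path to the left. The procedure \textbf{Generate-path} computes this path with only $n$ tree-size estimations in total, each at constant precision $\delta$, all performed \emph{before} the walk; the total relative error stays $\delta$ because the estimated subtrees are disjoint and their sizes sum to $m$. The path (of length at most $n$) is then stored classically, and Montanaro's diffusion operators $D_v$ are modified deterministically (only at the at most $n$ path vertices $u_i$, where the edge to $u_{i+1}$'s right sibling is cut). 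This yields cost $O(n^{1.5}\sqrt{m})$ for path generation plus $O(\sqrt{mn})$ for the walk per stage, summing to the claimed bound. Your proposal is missing exactly this idea --- that the truncation has an $O(n)$-size classical description that can be precomputed --- and without it the complexity claim does not go through.
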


\begin{proof}
The main idea of our search algorithm is to generate subtrees of $\T$ that consist of first approximately $2^i$ vertices visited by
the classical backtracking strategy $\A$, increasing $i$ until a marked vertex is found or until we have searched the whole tree $\T$.

Let $\T_{m}$ be the subtree of $\T$ consisting of the first $m$ vertices visited by the classical backtracking algorithm $\A$.
Then, we can describe $T_m$ by giving a path 
\begin{equation}
\label{eq:path}
 r = u_0\rightarrow u_1 \rightarrow u_2 \rightarrow \ldots \rightarrow u_l = u 
\end{equation}
where $u$ is the $m^{\rm th}$ vertex visited by $\A$. Then, $\T_m$ consists of all the subtrees $\T(u)$ rooted at $u$ 
such that $u$ is a child of $u_i$ (for some $i \in [0 \tdots l-1]$) that is visited before $u_{i+1}$ and the vertices
$u_0, \ldots, u_l$ on the path. 

Given access to $\T$ and the path (\ref{eq:path}), one can simulate Montanaro's algorithm on $\T_m$. 
Montanaro's algorithm consists of performing the transformations similar to $\RA$ and $\RB$ described in Section 
\ref{sec:GSE}, except that $ D_v $ is identity  if $ v $ is marked. To run Montanaro's algorithm on $\T_m$, 
we use access to $\T$ but modify the transformations of the algorithm as follows:
\begin{itemize}
\item
when performing $D_v$ for some $v$, we check if $v$ is one of vertices $u_i$ on the path;
\item
if $v=u_i$ for some $i\in[0 \tdots l-1]$ and $u_{i+1}$ is the first child of $u_i$, we change $D_v$ as if $u_{i+1}$ is the only child
of $u_i$;
\item
otherwise, we perform $D_v$ as usually.
\end{itemize} 

\begin{lemma}
\label{thm:gen-path}
There is a quantum algorithm that generates the path
\[ r = u_0\rightarrow u_1 \rightarrow u_2 \rightarrow \ldots \rightarrow u_l = u \]
corresponding to a subtree $\T_{\hat{m}}$ for $\hat m:|m-\hat{m}|\leq \delta m$ with
a probability at least $1-\epsilon$ and uses
\[ 
O\left(
\frac{n^{1.5}\sqrt{m}}{\delta^{1.5}}  \log^2 \frac{n}{\epsilon} 
\right) 
\]
queries and $O(\log T_0)$ non-query transformations per query.
\end{lemma}

\begin{proof}
The algorithm {\bf Generate-path} is described as Algorithm \ref{alg:gen}.

\begin{algorithm}[tbhp]
\caption{Procedure {\bf Generate-path($v, m$)}}
\label{alg:gen}

{\bf Generate-path($v, m$)} returns a path (\ref{eq:path}) defining a subtree $\T_{\hat m}$, 
with $\hat m$ satisfying $|m-\hat{m}|\leq \delta m$ with a probability at least $1-\epsilon$.

\begin{enumerate}
\item
If $v$ is a leaf, return the empty path.
\item
Otherwise, let $v_1, v_2$  be the children of $v$, in the order in which $\A$ visits them.
\item
Let $m_1$ be an estimate for the size of $\T(v_1)$, using the algorithm for  the tree size estimation with
the precision $1\pm \delta$, the probability of a correct answer at least $1-\frac{\epsilon}{n}$ and 
$ \frac{m-1}{1-\delta} $ as the upper bound on the tree size.
\item
If $m_1> m-1$, return the path obtained by concatenating the edge $v\rightarrow v_1$ with the path returned by {\bf Generate-path($v_1, m-1$)}.
\item
If $m_1= m-1$, return the path obtained by concatenating the edge $v\rightarrow v_1$ with the path from $v_i$ 
to the last vertex of $\T(v_i)$ (that is, the path in which we start at $v_i$ and, at each vertex, choose the child that is 
the last in the order in which $\A$ visits the vertices).
\item
If $m_1< m-1$, return the path obtained by concatenating the edge $v\rightarrow v_2$ with the path returned by {\bf Generate-path($v_2, m-1-m_1$)}.
\end{enumerate} 
\end{algorithm}

{\bf Correctness.}
{\bf Generate-path($v, m$)} invokes the tree size estimation once 
and may call itself recursively once, with $v_1$ or $v_2$ instead of $v$.
Since the depth of the tree is at most $n$, the depth of the recursion is also 
at most $n$. On all levels of recursion together, there are at 
most $n$ calls to tree size estimation. If we make the probability of error for tree size estimation at most 
$\frac{\epsilon}{n}$, the probability that all tree size estimations return sufficiently precise estimates is at least $1-\epsilon$.
Under this assumption, the number of vertices in each subtree added to $\T'$ is within a factor of $1\pm \delta$ of the estimate.
This means that the total number of vertices in $\T'$ is within $1\pm\delta$ of $m$ (which is equal to the sum of estimates).

{\bf Query complexity.}
Tree size estimation is called at most $n$ times, with the complexity of 
\[ 
O\left(\frac{\sqrt{nm}}{\delta^{1.5}}   \log^2 \frac{n}{\epsilon} \right)
 \]
each time, according to Theorem \ref{thm:main-DAG}. Multiplying this complexity by $n$ gives 
Lemma \ref{thm:gen-path}.
\end{proof}

We now continue with the main algorithm for Theorem \ref{thm:search} (Algorithm \ref{alg:2}). 

\begin{algorithm}[tbhp]
\caption{Main part of the quantum algorithm for speeding up backtracking}
\label{alg:2}

\begin{enumerate}
\item Let $i=1$.
\item Repeat:
\begin{enumerate}
\item Run {\bf Generate-path($r, 2^i$)} with $\delta=\frac{1}{2}$ and error probability  at most $\frac{\epsilon}{2 \lceil \log T_1 \rceil}$, obtain a path 
defining a tree $\T'=\T_{\hat m}$.
\item Run Montanaro's algorithm on $\T'$, with the upper bound on the number of vertices
$\frac{3}{2} 2^i$ and the error probability at most $\frac{\epsilon}{2 \lceil \log T_1 \rceil}$, 
stop if a marked vertex is found.
\item Let $i=i+1$.
\end{enumerate}
until a marked vertex is found or $\T'$ contains the whole tree.
\item
If $\T'$ contains the whole tree and no marked vertex was found in the last run, stop.
\end{enumerate}
\end{algorithm}

{\bf Correctness.}
Each of the two subroutines ({\bf Generate-path} and Montanaro's algorithm) is invoked at most $\lceil \log T_1 \rceil$ times.
Hence, the probability that all invocations are correct is at least $1-\epsilon$.

{\bf Query complexity.}
By Lemma \ref{thm:gen-path}, the number of queries performed by 
{\bf Generate-path} in the $i^{\rm th}$ stage of the algorithm is 
\[ 
O\left(
n^{1.5} \sqrt{2^i}    \log^2 \frac{n \log T_1}{\epsilon} 
\right) 
\]
and the complexity of Montanaro's algorithm in the same stage of the algorithm is of a smaller order.
Summing over $i$ from 1 to $\lceil\log \frac{T}{1-\delta} \rceil$ (which is the maximum possible value of $i$ if 
all the subroutines are correct) gives the query complexity
\[ 
O\left(
n^{1.5} \sqrt{T}   \log^2 \frac{n \log T_1}{\epsilon} 
\right) .
\]
\end{proof}

{\bf Note.}  If $T$ is close to the size of the entire tree, the complexity of Algorithm \ref{alg:2} (given by Theorem \ref{thm:search}) 
may be   larger than the complexity of Montanaro's algorithm (given by Theorem \ref{thm:montanaro}).
To deal with this case, one can stop Algorithm \ref{alg:2} when the number of queries exceeds the expression in Theorem \ref{thm:montanaro} and then run Montanaro's algorithm on the whole tree.
Then, the complexity of the resulting algorithm is the minimum of complexities in Theorems \ref{thm:montanaro} and
\ref{thm:search}.

\subsection{Evaluating AND-OR formulas of unknown structure}
\label{sec:andor}

We now consider evaluating AND-OR formulas in a similar model where we are given the root of the formula
and can discover the formula by exploring it locally. This corresponds to position trees in 2-player games where we know
the starting position (the root of the tree) and, given a position, we can generate  all possible positions after one move.
More precisely, we assume access to
\begin{itemize}
\item
a formula tree $\T$ (in the form described in Section \ref{sec:setting});
\item 
a black box which, given an internal node, answers whether AND or OR should be evaluated at this node;
\item
a black box which, given a leaf, answers whether the variable at this leaf is 0 or 1.
\end{itemize} 

\begin{theorem}
\label{th:andor}
There is a quantum algorithm which evaluates an AND-OR tree 
of size at most $T$ and depth $n=T^{o(1)}$ in this model running in time $O(T^{1/2+\delta})$,
for an arbitrary $\delta>0$.
\end{theorem}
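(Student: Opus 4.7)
I would evaluate the formula by running the span-program / quantum-walk based AND-OR algorithm of Reichardt (and Ambainis et al.~\cite{A+,R}) on the tree $\T$, using the tree-size estimator of Theorem~\ref{thm:main-DAG} to supply the numerical weights that the algorithm requires. When the formula is known in advance, that algorithm's reflections $\RA,\RB$ have the same local form as in Section~\ref{sec:GSE}: each diffusion $D_v$ is over the child edges of $v$ with weight proportional to $\sqrt{T(u)}$ on the edge to child $u$, with signs encoding whether $v$ is AND or OR and reflections at leaves encoding the variable bits; phase-estimating $\RB\RA$ on $\ket{\addE}$ then distinguishes value~$1$ from value~$0$ in $O(\sqrt T\,\mathrm{polylog}\,T)$ reflection calls.

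Concretely, the algorithm has a classical precomputation stage that stores, for each vertex $v$ the walk uses, a constant-factor estimate $\hat T(v)$ of $T(v)$ obtained via Theorem~\ref{thm:main-DAG}; and a quantum stage that implements $\RA,\RB$ exactly as in Section~\ref{sec:GSE}, but with the child-edge coefficients set to $\sqrt{\hat T(u)}$ and with the AND/OR signs and leaf reflections plugged in appropriately. By Lemma~\ref{lem:ra}, each reflection costs $O(d \log T)$ non-query gates and $O(d)$ queries. Since constant-factor errors in every weight shrink the relevant spectral gap by only a constant factor, the quantum stage uses $O(\sqrt T\,\mathrm{polylog}\,T)=T^{1/2+o(1)}$ queries.

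The main obstacle is bounding the precomputation by $T^{1/2+o(1)}$ queries, because a naive estimate of every $T(v)$ costs $\tilde O\bigl(\sum_v \sqrt{n\,T(v)}\bigr)=\tilde\Omega(nT)$. The hypothesis $n = T^{o(1)}$ is used precisely here: rather than estimating every subtree size, I would compute $\hat T(v)$ only for vertices $v$ lying on a sparse ``heavy-light'' skeleton of size $T^{1/2+o(1)}$ (for example, iteratively descending along the heaviest child and recursing into light children whose sizes are first estimated), and set $\hat T(v) = 1$ for the remaining vertices. Since each root-to-leaf path crosses at most $n = T^{o(1)}$ skeleton edges, the crude weights on off-skeleton vertices degrade the spectral gap by only a $T^{o(1)}$ factor. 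Balancing these parameters, the total cost of both stages is $T^{1/2+o(1)} \le O(T^{1/2+\delta})$ for any fixed $\delta > 0$, which proves the theorem.
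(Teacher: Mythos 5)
Your proposal shares the paper's starting point (use tree-size estimation to discover a sparse ``heavy'' skeleton and feed the known-structure algorithm of \cite{A+,R}), but the single-global-walk step contains a genuine gap. You claim that setting $\hat T(v)=1$ off the skeleton ``degrades the spectral gap by only a $T^{o(1)}$ factor'' because each root-to-leaf path crosses at most $n$ skeleton edges. This does not follow: the correctness of the weighted AND-OR walk depends on the weights being (approximately) right \emph{throughout each subformula}, not just near the root. An off-skeleton subtree at threshold $T^{1/2}$ can itself have size up to $T^{1/2}$, and running the evaluation algorithm on such a subformula with all-unit weights incurs a penalty polynomial in its size (this is exactly the unbalanced-formula problem that the weights are there to fix), not a penalty controlled by the depth. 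No per-path accounting of the gap is available here.

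There is also a quantitative problem with the precomputation. Discovering a skeleton at threshold $m$ requires one tree-size estimation per skeleton vertex, i.e.\ $\tilde O(nT/m)$ estimations each costing $\tilde O(\sqrt{nm})$, for a total of $\tilde O(n^{1.5}T/\sqrt m)$. With $m=T^{1/2}$ (your skeleton of size $T^{1/2+o(1)}$) this is $T^{3/4+o(1)}$, not $T^{1/2+o(1)}$; to push the discovery cost down to $T^{1/2+o(1)}$ you must take $m=T^{1-o(1)}$, which makes the off-skeleton subtrees nearly as large as $T$ and the unit-weight idea hopeless. This tension is precisely what the paper's proof resolves: it uses a $c$-level recursion with thresholds $T_i=T^{i/c}$, where at each level the known-structure algorithm is run only on the (small, explicitly discovered) heavy subtree, and the off-skeleton subtrees are evaluated by \emph{recursive calls used as queries} (made reversible and low-error via the phase-flip-and-uncompute step with error $\epsilon/s^3$), each of which internally recomputes its own correct weights. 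To repair your argument you would need either this recursive composition or a genuine spectral stability statement for the weighted walk under unit weights on large subformulas, which you have not supplied and which I do not believe is true.
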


\begin{proof}
We assume that the tree is binary. (An AND/OR node with $k$ inputs can be replaced by a binary tree of depth $\lceil \log k\rceil$
consisting of gates of the same type. This increases the size of the tree by a constant factor and the depth by a factor of at most 
$\lceil \log k \rceil$.)

We say that $\T'$ is an $m$-heavy element subtree of $\T$ if it satisfies the following properties:
\begin{enumerate}
\item
$\T'$ contains all $x$ with $|\T(x)|\geq m$ and all children of such $x$;
\item
all vertices in $\T'$ are either $x$ with  $|\T(x)|\geq \frac{m}{2}$ or children of such $x$.
\end{enumerate}

\begin{lemma}
\label{lem:heavy}
Let $\T$ be a tree and let $T$ be an upper bound on the size of $\T$.
There is a 
\[ O\left( \frac{n^{1.5} T}{\sqrt{m}} \log^2 \frac{T}{\epsilon} \right) \]
time quantum algorithm that generates $\T'$ such that, with probability at least $1-\epsilon$, $\T'$ is an $m$-heavy element subtree of
$\T$.
\end{lemma}

\begin{proof}
The algorithm is as follows.

\begin{algorithm}[tbhp]
\caption{Algorithm {\bf Heavy-subtree}$(r,m,\epsilon)$}
\label{alg:3}

\begin{enumerate}
\item 
 Run the tree size estimation for $\T(r)$ with $m$ as the upper bound on the number of vertices and parameters
$\delta=\frac{1}{4}$ and $\epsilon' = \frac{m}{6nT} \epsilon$.
\item
If the estimate is smaller than $\frac{2m}{3}$, return $\T'$ consisting of the root $r$ only.
\item 
Otherwise, let $\T'=\{r\}$. Let $v_1$ and $v_2$ be the children of $r$. For each $i$,  invoke {\bf Heavy-subtree}$(v_i, m, \epsilon)$ recursively and add all vertices from the subtree returned
by {\bf Heavy-subtree}$(v_i, m, \epsilon)$ to $\T'$. Return $\T'$ as the result.
\item
If, at some point, the number of vertices added to $\T'$ reaches $\frac{6T}{m}n$, stop and return the current $\T'$.
\end{enumerate}
\end{algorithm}

The proof of correctness and the complexity bounds are given in Section \ref{app:andor}.
\end{proof}

To evaluate an AND-OR formula with an unknown structure, let $T$ be an upper bound on the size of the formula $F$. Let $c$ be an integer. For $i=1, \ldots, c$, we define $T_i = T^{i/c}$.  To evaluate $F$, we identify an $T_{c-1}$-heavy element subtree
$F'$ and then run the algorithm of \cite{A+} or \cite{R} for evaluating formulas with a known structure on it.
The leaves of this $F'$ are roots of subtrees of size at most $T_{c-1}$. To perform queries on them, 
we call the same algorithm recursively. That is, given a leaf $v$, we identify a $T_{c-2}$-heavy element subtree
$F'(v)$ and then run the algorithm of \cite{A+} or \cite{R} 
for evaluating formulas with a known structure on it, with queries at the leaves
replaced by another recursive call of the same algorithm, now on a subtree of size at most $T_{c-2}$.

The algorithm that is being called recursively is described as Algorithm \ref{alg:4}. To evaluate the original formula $F$,
we call {\bf Unknown-evaluate}($r, c, \epsilon$).

\begin{algorithm}[tbhp]
\caption{Algorithm {\bf Unknown-evaluate}($r, i, \epsilon$)}
\label{alg:4}

\begin{enumerate}
\item
If $i=1$, determine the structure of the tree by exploring it recursively, let $\T'$ be the resulting tree.
\item
If $i>1$, use {\bf Heavy-subtree}$(r, T_{i-1}, \epsilon/5)$ to obtain $\T'$. Let $s$ be the size of $\T'$.
\item
Run the AND-OR formula evaluation algorithm for known formulas of \cite{A+} to evaluate the formula 
corresponding to $\T'$, with a probability of a correct answer at least $1-\frac{\epsilon}{5}$. If $i>1$,
use calls to {\bf Unknown-evaluate}($v, {i-1}, \epsilon/s^3$) instead of queries at leaves $v$.
\item
If {\bf Unknown-evaluate} is used as a query at a higher level (that is, if $i<c$), 
perform a phase flip to simulate the query and run the first three steps in reverse, 
erasing all the information that was obtained during this execution of {\bf Unknown-evaluate}.
\end{enumerate}
\end{algorithm}

The proof of correctness and the complexity bounds are given in Section \ref{app:andor}.
\end{proof}

\section{Proof of Lemma \ref{lem:unique}}\label{app:ee}

We perform ordinary eigenvalue estimation $t=\lceil \frac{1}{C} \ln \frac{2}{\epsilon_{min}} \rceil$ times, with
parameters $\delta_{est}=\delta_{min}$ and $\epsilon_{est} = \frac{\epsilon_{min}}{2t}$. We then take 
\[ \hat{\theta} = \min (|\hat{\theta_1}|, \ldots, |\hat{\theta_t}|) \]
where $\hat{\theta_1}, \ldots, \hat{\theta_t}$ are the estimates that have been obtained.  

To see that $Pr[|\theta_{min}-\hat{\theta}|\leq \delta_{min}]\geq 1-\epsilon_{min}$, we observe that:
\begin{enumerate}
\item
The probability that none of $\hat{\theta_j}$ is an estimate for $\pm\theta_{min}$ is at most
\[ (1-C)^t \leq e^{-\ln\frac{2}{\epsilon_{min}}} = \frac{\epsilon_{min}}{2} .\]
\item
The probability that one or more of $\hat{\theta_j}$ differs from the corresponding $\theta_j$ by more than
$\delta_{est}$ is at most $t \epsilon_{est} = \frac{\epsilon_{min}}{2}$.
\end{enumerate}
If none of these two ``bad events" happens, we know that, among $\hat{\theta_j}$, 
there is an estimate for $\pm\theta_{min}$ that differs from $\theta_{min}$ or $-\theta_{min}$ by at most $\delta_{est}$. 
Moreover, any estimate $\hat{\theta_j}$ 
for $\theta_j\neq \pm\theta_{min}$ must be at least $|\theta_j|-\delta_{est} \geq \theta_{min}-\delta_{est}$ in absolute value.
Therefore, even if $\hat{\theta_j}$ with the smallest $|\hat{\theta_j}|$ is not an estimate for $\pm\theta_{min}$,
it must still be in the interval $[\theta_{min}-\delta_{est}, \theta_{min}+\delta_{est}]$.

The number of repetitions of controlled-$U$ is
\[ O\left( t\frac{1}{\delta_{est}} \log \frac{1}{\epsilon_{est}} \right) = 
O\left( \frac{1}{C} \left( \log \frac{1}{\epsilon_{min}} \right)
\frac{1}{\delta_{min}}\log \frac{1}{\epsilon_{est}} \right) \]
and we have $\log \frac{1}{\epsilon_{est}} = \log t + \log \frac{1}{\epsilon_{min}}+ O(1)$. Also,
$\log t \leq \log \frac{1}{C} + \log \log \frac{1}{\epsilon_{min}} + O(1)$. Therefore,
\[ \log \frac{1}{\epsilon_{est}} \leq \log \frac{1}{C} + O\left(\log \frac{1}{\epsilon_{min}}\right) 
= O\left(\log \frac{1}{C} \log \frac{1}{\epsilon_{min}} \right) ,\]
implying the lemma.

\section{Implementing transformations \texorpdfstring{$\RA$}{RA} and \texorpdfstring{$\RB$}{RB}}
\label{sec:efficient}

\newcommand{\HU}{{\mc H}_{\mathbf 1}} 
\newcommand{\HV}{{\mc H}_{\mathbf 2}} 
\newcommand{\HW}{{\mc H}_{\mathbf 3}} 

We represent the basis states $\ket{e}$ as $\ket{u, w}$ where $u\in \vertSetA$ and $w\in \vertSetB \cup \{v_{V+1}\}$.
Let $\HU$ and $\HV$ be the registers holding $u$ and $w$, respectively.

Each $\ket{s_u}$, $u\in \vertSetA$, can be expressed as $\ket{s_u} = \ket{u} \otimes \ket{s'_u}$.
We can view $\RA$ as a transformation in which, for each $u$ in $\HU$, we perform 
$D'_u= I - \frac{2}{\nrm{s'_u}^2} \ket{s'_u}\bra{s'_u}$ on the subspace $\ket{u}\otimes\HV$.

The transformation $D'_u$ can be performed as follows:
\begin{enumerate}
	\item
	Use queries to obtain the numbers of incoming and outgoing edges $d(u)$ and $d'(u)$
	(denoted by $d_{in}$ and $d_{out}$ from now on). Use queries to obtain vertices $w_1, \ldots, w_{d_{in}}$ with
	edges $(w_j, u)$ and vertices $w'_1, \ldots, w'_{d_{out}}$ with edges $(u, w_j)$.
	\item
	Let $\HW$ be a register with basis states $\ket{0}, \ldots, \ket{d+1}$.
	Use the information from the first step  to perform a map on $\HV\otimes \HW$ 
	that maps $\ket{w_1}\ket{0}$, $\ldots$, $\ket{w_{d_{in}}}\ket{0}$, $\ket{w'_1}\ket{0}$, \
	$\ldots$, $\ket{w'_{d_{out}}}\ket{0}$
	to $\ket{0}\ket{1}$, $\ldots$, $\ket{0}\ket{d_{in}+d_{out}}$ and, 
	if $u=v_1$, also maps $\ket{v_{V+1}}\ket{0}$ to $\ket{0}\ket{d_{in}+d_{out}+1}$.
	\item
	Perform the transformation $D=I-2\ket{\psi}\bra{\psi}$ on $\HW$ where 
	$\ket{\psi}$ is the state obtained by normalizing $\sum_{i=1}^{d_{in}+d_{out}} \ket{i}$ if $u\neq v_1$ and
	by normalizing $\sum_{i=1}^{d_{in}+d_{out}} \ket{i} + \alpha \ket{d_{in}+d_{out}+1}$ if $u=v_1$.
	\item
	Perform steps 2 and 1 in reverse.
\end{enumerate}
The first step consists of at most $d+2$ queries (two queries to obtain $d(u)$ and $d'(u)$ and at most $d$ queries to obtain
the vertices $w_i$ and $w'_i$) and some simple operations between queries, to keep count of vertices $w_i$ or $w'_i$
that are being queried.

For the second step, let ${\mc H}_j$ be the register holding the value of $w_j$ obtained in the first step.
For each $j\in\{1, \ldots, d_{in}\}$, we perform a unitary $U_j$ on ${\mc H}_j\otimes \HV \otimes \HW$ 
that maps $\ket{w_j}\ket{w_j}\ket{0}$ to $\ket{w_j}\ket{0}\ket{j}$. This can be done as follows:
\begin{enumerate}
	\item
	Perform $\ket{x}\ket{y} \rightarrow \ket{x}\ket{x\oplus y}$ on ${\mc H}_j\otimes \HV$ where $\oplus$ denotes bitwise XOR. 
	\item
	Conditional on the second register being $ \ket{0} $, add $j$ to the third register.
	\item
	Conditional on  the third register not being $ \ket{j} $, perform $\ket{x}\ket{y} \rightarrow \ket{x}\ket{x\oplus y}$ on ${\mc H}_j\otimes \HV$  to
	reverse the first step.
\end{enumerate}
We then perform similar unitaries $U_{d_{in}+j}$ that map $\ket{w'_j} \ket{w'_j} \ket{0}$ to $\ket{w'_j} \ket{0} \ket{d_{in}+j}$
and, if $u=v_1$, we also perform a unitary $U_{d_{in}+d_{out}+1}$ that maps $\ket{v_{V+1}}\ket{0}$ to $\ket{0}\ket{d_{in}+d_{out}+1}$.
Each of these unitaries requires $O(\log V)$ quantum gates and there are $O(d)$ of them.
Thus, the overall complexity is $O(d \log V)$.

The third step is a unitary $D$ on $\HW$ that depends on $d_{in}+d_{out}$ and on whether we have $u=v_1$. 
We can express $D$ as $D=U_{\psi} (I- 2\ket{0}\bra{0}) U^{-1}_{\psi}$ where $U_{\psi}$ is any transformation
with $U_{\psi}\ket{0} = \ket{\psi}$. Both $U_{\psi}$ and $I- 2\ket{0}\bra{0}$ are simple transformations
on a $\lceil \log(d+2) \rceil$ qubit register $\HW$ which can be performed with $O(\log d)$ gates. 

Since there are $d$ possible values for $d_{in}+d_{out}$ and 2 possibilities for whether $u=v_1$, 
we can try all possibilities one after another, checking the conditions and then performing 
the required unitary, if necessary, with $O(d \log d)$ gates.
Checking $u=v_1$ requires $O(\log V)$ gates.

The fourth step is the reverse of the first two steps and can be performed with the same complexity.

The overall complexity is $O(d)$ queries and $O(d \log V)$ quantum gates. The transformation $\RB$ can be performed
with a similar complexity in a similar way.

\section{Analysis of algorithm for DAG size estimation}
\label{sec:dag-analysis}

This section is devoted to the analysis of Algorithm \ref{alg:dag}.

\subsection{Spectral theorem}
\label{sec:Ra}

Our bounds on eigenvalues of $\RB \RA$ are based on  the spectral theorem from \cite{Szegedy2004b}.

Suppose that $ \mc X $ is a subspace of a Hilbert space $ \mc H $. Let $  \refX{\mc  X} $ denote a reflection which leaves $ \mc X $ invariant and negates all vectors in $ \mc X^\perp $.

Let $ Gram(w_1, \ldots, w_k) $ stand for the Gram matrix  of vectors $ w_1, \ldots, w_k  $, i.e., the matrix formed by the inner products $   \lra{w_s,w_t} $.
\begin{definition}[{\cite[Definition 5]{Szegedy2004b}}]
	Let $ \lr{ \lrb{w_1, \ldots, w_k}, \lrb{\tilde w_1, \ldots, \tilde w_l}  } $  be an ordered pair  of orthonormal systems.
	The discriminant matrix of this pair is 
	\[ 
	M = 
	Gram(w_1, \ldots, w_k, \tilde w_1, \ldots, \tilde w_l)   - I.
	\]
\end{definition}

Let   $ \mc A $ and $ \mc B $ be two subspaces of the same    
Hilbert space $ \mc H $. Suppose that $ \mc A $ is spanned by an orthonormal basis  $ w_1, \ldots, w_{k} $  and  $ \mc B$ is spanned by an orthonormal basis $ \tilde w_1, \ldots, \tilde w_{l} $.  Let $ M $ be the discriminant matrix of  $ \lr{ \lrb{w_1, \ldots, w_k}, \lrb{\tilde w_1, \ldots, \tilde w_l}  } $.
The spectral theorem from \cite{Szegedy2004b}  provides spectral decomposition for the operator  $ \refB \refA $ restricted to $ \mc A + \mc B = (\mc A^\perp  \cap \mc B^\perp  )^\perp  $; 
on the subspace $ \mc A^\perp  \cap \mc B^\perp  $   (called the \textit{idle subspace} in \cite{Szegedy2004b}) $ \refB \refA $ acts as the identity.
\begin{theorem}[{\cite[Spectral Theorem]{Szegedy2004b}}]  \label{th:spectralThm}

             $ \refB \refA $ has the following eigenvalues on $ \mc A + \mc B $:
	\begin{itemize}
		\item eigenvalue 1, with eigenvectors being all vectors in $ \mc  A  \cap \mc  B    $; the space has the same dimension as the eigenspace of $ M $ associated to the eigenvalue 1.
		\item eigenvalue $ 2\lambda^2 - 1 \mp 2i \lambda \sqrt{1 - \lambda^2} $, with corresponding eigenvectors  $ \lr{\ket{\tilde a} - \lambda \ket{\tilde b}}  \pm i \sqrt{1 - \lambda^2} \ket{\tilde b}$, where $ (a,b) $ is  an eigenvector of $ M $ with eigenvalue $ \lambda \in (0,1) $ and 
		\[ 
		\ket{\tilde a} :=\sum_{i=1}^{{k}} a_i w_i, \quad  \ket{\tilde b} := \sum_{j=1}^{{l}} b_j \tilde w_j.
		\]
		\item eigenvalue $ -1 $, with eigenvectors being all vectors in form $  \ket{\tilde a}$ or $ \ket{\tilde b} $, where $ (a,b) $ is an eigenvector of $ M $ with eigenvalue 0.
	\end{itemize}
\end{theorem}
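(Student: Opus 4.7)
The plan is to deduce Szegedy's spectral theorem directly from the block structure of the discriminant matrix, together with an explicit analysis of the action of $\refA$ and $\refB$ on small invariant subspaces. Because $w_1,\ldots,w_k$ and $\tilde w_1,\ldots,\tilde w_l$ are each orthonormal families, the Gram matrix has identity diagonal blocks, and subtracting $I$ leaves
\[
M = \begin{pmatrix} 0 & D \\ D^* & 0 \end{pmatrix}, \qquad D_{ij} = \lra{w_i,\tilde w_j}.
\]
The eigenvalues of $M$ thus come in $\pm\lambda$ pairs (with $\lambda$ a singular value of $D$), and $(a,b)$ being a unit eigenvector with eigenvalue $\lambda$ is equivalent to $Db = \lambda a$ and $D^* a = \lambda b$.

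For each eigenpair of $M$ with $\lambda \in (0,1)$ I would form the two-dimensional subspace $V_{a,b} = \Span \lrb{\ket{\tilde a}, \ket{\tilde b}}$ and show it is invariant under $\refB\refA$. Concretely, $P_{\mc A}\ket{\tilde b} = \sum_i \lra{w_i,\tilde b}\, w_i = \sum_i (Db)_i w_i = \lambda \ket{\tilde a}$ and symmetrically $P_{\mc B}\ket{\tilde a} = \lambda \ket{\tilde b}$, so using $\refA = 2 P_{\mc A} - I$ one gets $\refA \ket{\tilde a} = \ket{\tilde a}$, $\refA \ket{\tilde b} = 2\lambda\ket{\tilde a} - \ket{\tilde b}$, $\refB \ket{\tilde a} = 2\lambda\ket{\tilde b} - \ket{\tilde a}$, $\refB \ket{\tilde b} = \ket{\tilde b}$. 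The resulting $2\times 2$ matrix of $\refB\refA$ on $V_{a,b}$ in the basis $(\ket{\tilde a},\ket{\tilde b})$ has trace $4\lambda^2 - 2$ and determinant $1$, so its eigenvalues are exactly $2\lambda^2 - 1 \mp 2i\lambda\sqrt{1-\lambda^2}$; substituting the ansatz $\ket{\tilde a} + c\ket{\tilde b}$ reduces the eigenvector equation to $c^2 + 2\lambda c + 1 = 0$, whose roots $c = -\lambda \pm i\sqrt{1-\lambda^2}$ produce the claimed eigenvectors.

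The two boundary eigenvalues of $M$ require separate treatment. For $\lambda = 0$, $Db = 0$ and $D^* a = 0$ mean $\ket{\tilde b}\in \mc A^\perp$ and $\ket{\tilde a}\in \mc B^\perp$, so $\refB \refA \ket{\tilde b} = -\ket{\tilde b}$ and $\refB \refA \ket{\tilde a} = -\ket{\tilde a}$, matching the $-1$ case of the theorem. For $\lambda = 1$, the identities $Db = a$ and $D^* a = b$ together with $|a|^2 + |b|^2 = 1$ force $\nrm{\tilde a}^2 = \nrm{\tilde b}^2 = \lra{\tilde a,\tilde b} = 1/2$, and the equality case of Cauchy--Schwarz then gives $\ket{\tilde a} = \ket{\tilde b} \in \mc A \cap \mc B$, a vector fixed by both reflections; this sets up the asserted bijection between the $\lambda=1$ eigenspace of $M$ and $\mc A \cap \mc B$.

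The main obstacle I expect is the dimension accounting. One has to verify that the eigenvectors $(a,b)$ and $(a,-b)$ of $M$ produce the same 2-dimensional subspace $V_{a,b}$ (so the pair $\pm\lambda$ contributes together, not separately), that the $\lambda = \pm 1$ eigenvectors collapse in the correct way onto a single copy of $\mc A \cap \mc B$, and that the total count across all cases matches $\dim(\mc A + \mc B) = \dim \mc A + \dim \mc B - \dim (\mc A \cap \mc B)$, with $\refB \refA$ acting as the identity on the idle subspace $\mc A^\perp \cap \mc B^\perp$. A smaller subtlety is that $\ket{\tilde a}$ and $\ket{\tilde b}$ are not orthogonal (one has $\lra{\tilde a,\tilde b} = \lambda$), so the $2\times 2$ matrix computed in that basis is not unitary in the standard sense, and some care is needed when one wants unit-norm eigenvectors of $\refB\refA$.
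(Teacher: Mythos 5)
The paper does not prove this statement at all: it is quoted verbatim from Szegedy \cite{Szegedy2004b} and used as a black box, so there is no internal proof to compare against. Your derivation is the standard two-subspace (Jordan's lemma / CS-decomposition) argument, and the computations check out: with $M=\bigl(\begin{smallmatrix}0 & D\\ D^* & 0\end{smallmatrix}\bigr)$ one indeed has $P_{\mc A}\ket{\tilde b}=\lambda\ket{\tilde a}$ and $P_{\mc B}\ket{\tilde a}=\lambda\ket{\tilde b}$ for an eigenpair $(a,b)$, the $2\times 2$ matrix of $\refB\refA$ on $V_{a,b}$ is $\bigl(\begin{smallmatrix}-1 & -2\lambda\\ 2\lambda & 4\lambda^2-1\end{smallmatrix}\bigr)$ with trace $4\lambda^2-2$ and determinant $1$, and the ansatz $\ket{\tilde a}+c\ket{\tilde b}$ gives $c^2+2\lambda c+1=0$, matching the stated eigenvalues and eigenvectors (your sign convention also pairs correctly: $c=-\lambda\pm i\sqrt{1-\lambda^2}$ goes with eigenvalue $2\lambda^2-1\mp 2i\lambda\sqrt{1-\lambda^2}$). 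The $\lambda=0$ and $\lambda=1$ boundary cases are handled correctly as well.

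The one piece you flag but do not carry out — the dimension accounting — does close, and it is worth recording how, since it is what upgrades "these are eigenvectors" to "these are \emph{all} the eigenvalues on $\mc A+\mc B$." Take a singular value decomposition of $D$ with left singular vectors $a^{(1)},\dots,a^{(k)}$ and right singular vectors $b^{(1)},\dots,b^{(l)}$; then $\lrb{\ket{\tilde a^{(i)}}}_i$ and $\lrb{\ket{\tilde b^{(j)}}}_j$ are orthonormal bases of $\mc A$ and $\mc B$ respectively, every one of these basis vectors lies in one of your invariant subspaces, and distinct singular values give mutually orthogonal subspaces. If $p$ singular values equal $1$, $q$ lie in $(0,1)$, and $r=p+q$ is the rank of $D$, the pieces have dimensions $p$, $2q$, and $(k-r)+(l-r)$, summing to $k+l-p=\dim\mc A+\dim\mc B-\dim(\mc A\cap\mc B)=\dim(\mc A+\mc B)$, so the listed subspaces exhaust $\mc A+\mc B$ with no overcounting. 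Two micro-points you gloss over: for $\lambda\in(0,1)$ one should note $\lrv{\lra{\tilde a,\tilde b}}=\lambda\nrm{\tilde a}\,\nrm{\tilde b}<\nrm{\tilde a}\,\nrm{\tilde b}$, so $V_{a,b}$ really is two-dimensional; and the converse inclusion $\mc A\cap\mc B\subseteq\lrb{\ket{\tilde a}:\ M(a,b)=(a,b)}$ follows by writing $v\in\mc A\cap\mc B$ in both bases and checking $Db=a$, $D^*a=b$. With these remarks your argument is a complete, self-contained proof of the cited theorem.
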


Since
\[ 
\refX{\mc A^\perp}=-\refA, \quad 
\refX{\mc B^\perp}=-\refB, \qquad
\refX{\mc B^\perp}\refX{\mc A^\perp} = \refB \refA ,
\]
we can restrict the operator $ \refX{\mc B^\perp}\refX{\mc A^\perp} $ on $ \mc A + \mc B $ and obtain its spectral decomposition in terms of  the discriminant matrix of the pair $ \lr{ \lrb {w_1, \ldots, w_k}, \lrb {\tilde w_1, \ldots, \tilde w_l}} $ (instead of forming the discriminant matrix of the orthogonal systems spanning  $ \mc A^\perp $ and $ \mc B^\perp  $).

\subsection{1-eigenvectors of  \texorpdfstring{$\RB \RA$}{RB RA}}
\label{sec:1eigen}

In our setting  $ \RB = \refX{\spaceB^\perp} $ and $ \RA = \refX{\spaceA ^\perp} $.

From the spectral theorem it follows that all 1-eigenvectors of $ \RB \RA $   belong to either  $ \spaceA ^\perp \cap  \spaceB^\perp  $ or $  \spaceA   \cap  \spaceB $. 
We start with  characterizing these 1-eigenspaces  as follows:
\begin{lemma}\label{th:1eigenvectors}
	\begin{enumerate}
		\item The starting state $ \ket {\addE} $ is orthogonal to each state in $ \spaceA ^\perp \cap  \spaceB^\perp  $.
		\item $ \dim \lr{ \spaceA   \cap  \spaceB}  =0$.
	\end{enumerate}
\end{lemma}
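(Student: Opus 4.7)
The plan is to prove both parts by direct linear-algebraic arguments that exploit the strict bipartite structure induced by the layering of $\G$: every edge in $\edgeSet$ connects consecutive layers, so one of its endpoints lies in $\vertSetA$ and the other in $\vertSetB$.

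For part 1, I first reduce to a membership statement. In finite dimension, $\spaceA^\perp \cap \spaceB^\perp = (\spaceA + \spaceB)^\perp$, so it suffices to exhibit $\ket{\addE} \in \spaceA + \spaceB$. I propose the explicit combination
\[ \ket{\addE} \;\stackrel{?}{=}\; \ket{s_{v_1}} + \alpha \sum_{v\in \vertSet,\, v \ne v_1} (-1)^{\ell(v)} \ket{s_v}, \]
where the terms with even $\ell(v)$ live in $\spaceA$ and those with odd $\ell(v)$ in $\spaceB$. To verify this, I compare coefficients of each basis vector of $\mc H$. The coefficient of $\ket{\addE}$ is picked up only from $\ket{s_{v_1}}$ and equals $1$. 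For an edge $e=(u,v)\in\edgeSet$ with $\ell(u)=k$ and $\ell(v)=k+1$: when $k\ge 1$ the two contributions are $(-1)^k\alpha$ and $(-1)^{k+1}\alpha$; when $k=0$ (so $u=v_1$) they are $\alpha\cdot 1$ (from $\ket{s_{v_1}}$) and $(-1)^1\alpha$ (from $\ket{s_v}$). In both cases they cancel, so the identity holds.

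For part 2, suppose $\ket{\psi}\in\spaceA\cap\spaceB$ and write $\ket{\psi} = \sum_{u\in\vertSetA} c_u \ket{s_u} = \sum_{w\in\vertSetB} d_w \ket{s_w}$. Since $\ket{\addE}$ appears only in $\ket{s_{v_1}}$ and not in any $\ket{s_w}$ with $w\in\vertSetB$, matching the $\ket{\addE}$-coefficient forces $c_{v_1} = 0$. For each edge $e\in\edgeSet$, let $a(e)\in\vertSetA$ and $b(e)\in\vertSetB$ be its two endpoints (well-defined by the bipartite layering). Matching the $\ket{e}$-coefficient gives $c_{a(e)} = d_{b(e)}$ in general, or $\alpha c_{v_1} = d_{b(e)}$ when $a(e)=v_1$; the latter combined with $c_{v_1}=0$ forces $d_w=0$ for every layer-$1$ vertex $w$ (since layer-$1$ vertices have only $v_1$ as a parent). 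I then induct on the layer: for any vertex $x$ at layer $\ell \ge 2$, reachability from the root guarantees at least one edge $e = (\text{parent}, x)$ where the parent is in layer $\ell-1$, and the relation $c_{a(e)} = d_{b(e)}$ (or its symmetric form, depending on parity of $\ell$) propagates the zero from layer $\ell-1$ to layer $\ell$. Hence all $c_u$ and $d_w$ vanish, so $\ket{\psi}=0$ and $\dim(\spaceA\cap\spaceB)=0$.

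The only real obstacle is bookkeeping around the asymmetric weight $\alpha$ at $v_1$; once the root is handled correctly, the argument is driven entirely by the structural fact that edges go strictly from one layer to the next, so every edge straddles $\vertSetA$ and $\vertSetB$ with exactly one endpoint in each.
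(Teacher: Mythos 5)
Your proposal is correct and follows essentially the same route as the paper: for part 1 your combination $\ket{s_{v_1}}+\alpha\sum_{v\neq v_1}(-1)^{\ell(v)}\ket{s_v}$ is exactly the paper's $\ket{s_{v_1}}+\alpha\sum_{v\in\vertSetA\setminus\{v_1\\}}\ket{s_v}-\alpha\sum_{v\in\vertSetB}\ket{s_v}$, verified by the same edge-by-edge cancellation, and for part 2 both arguments match coefficients across each edge and use the $\ket{\addE}$-coefficient to pin the root's coefficient to zero, the only cosmetic difference being that you propagate zeros layer by layer via reachability where the paper invokes connectedness of $\G$ to conclude all coefficients are equal. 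No gaps.
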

\begin{proof}
	The first claim can be restated as
	\[ 
	 \ket {\addE} \in \lr{\spaceA ^\perp \cap  \spaceB^\perp }^\perp  = \spaceA +\spaceB .
	 \]
	To show that, notice that the following equality holds:
	\begin{equation}\label{eq:l13e01}
	\sum_{v \in \vertSetA} \sum_{e \in \nhood(v)} \ket e 
	=
	\sum_{v \in \vertSetB} \sum_{e \in \nhood(v)} \ket e ,
	\end{equation}
	since for every  edge $ e \in \edgeSet $ the state $ \ket e $ is added both in the LHS and RHS of \eqref{eq:l13e01} exactly once: if $ e  $ connects a vertex $ u \in \vertSetA  $ and a vertex $ v \in \vertSetB $ (no edge can connect two vertices from $ \vertSetA $ or from $ \vertSetB $), then $ \ket e $ appears in the sum $ \sum_{e' \in \nhood(u)} \ket {e'} $ in the LHS and  in the sum $ \sum_{e' \in \nhood(v)} \ket {e'} $ in the RHS (and, for any other vertex $ w \notin \lrb{u,v} $, $ \ket e $ is not contained in the sum $ \sum_{e' \in \nhood(w)} \ket {e'} $).
	
	Now from \eqref{eq:l13e01} we conclude that 
	\[ 
	\ket \addE = \ket \addE  + \alpha \sum_{v \in \vertSetA} \sum_{e \in \nhood(v)} \ket e  - \alpha \sum_{v \in \vertSetB} \sum_{e \in \nhood(v)} \ket e 
	=
	 \ket {s_{v_1}} +  \alpha \sum_{v \in \vertSetA \setminus \lrb{v_1}} \ket {s_v} -  \alpha \sum_{v \in \vertSetB} \ket {s_v},
	 \]
	i.e., $\ket {\addE} \in  \spaceA +\spaceB   $ as claimed.
	
	To show the second  claim, suppose  a vector $ \ket x $ is contained both in $ \spaceA $ and $  \spaceB$.  Then $ \ket x $ can be expressed in two ways via the  vectors $ \ket {s_v} $, i.e., there are scalars $ \eta_i $, $ i \in [V] $, such that 
	\begin{equation}\label{eq:l13e02} 
	\sum_{i=A+1}^{A+B} \eta_i \sum_{e \in \nhood(v_i)}  \ket e
	=
	\sum_{i =1}^A  \eta_i \sum_{e \in \nhood(v_i)}  \ket e + \eta_1 \alpha^{-1} \ket \addE.
	\end{equation}
	
	Clearly, for any adjacent vertices $ v_i \sim v_j $ we must have $ \eta_i = \eta_j $, since the corresponding basis state $ \ket e $ (where $ e  $ is the unique edge between $ v_i $ and $ v_j $) has coefficient $ \eta_i $ in the LHS of \eqref{eq:l13e02} (supposing that $ v_i $ is in $ \vertSetA $) and coefficient $ \eta_j $ in the RHS of \eqref{eq:l13e02}. However, $ \mc G $ is connected, therefore we must have $ \eta_1 = \eta_2 = \ldots = \eta_V $. It remains to notice that  $ \eta_1 = 0 $, since the LHS of \eqref{eq:l13e02} is orthogonal to $ \ket \addE  $. 
	We conclude that  only the null vector belongs to the subspace $ \spaceA   \cap  \spaceB $.
\end{proof}
An immediate consequence of this Lemma is that    all   1-eigenvectors  $\ket{\psi}$ of $\RB \RA$  are orthogonal to the starting state $ \ket \addE$.

\subsection{Eigenvalue closest to 1}\label{sec:lem13}
The spectral decomposition for $ \RB \RA $ (restricted to $\spaceA +\spaceB  $) will be  obtained from the discriminant matrix of  two orthonormal systems spanning $ \spaceA  $ and $ \spaceB   $; in $ \lr{\spaceA +\spaceB }^\perp  = \spaceA ^\perp \cap  \spaceB^\perp $ the operator $ \RB \RA $ acts as the identity.

From     Theorem \ref{th:spectralThm} and Lemma \ref{th:1eigenvectors} it follows that the discriminant matrix does not have the eigenvalue 1.
Let $ \lambda \in (0,1) $ be the maximal eigenvalue of the discriminant matrix and  $\theta = 2 \arccos \lambda$, then $ e^{\pm i \theta} $ is the  eigenvalue of $ \RB \RA $ which is closest to 1.

To describe the discriminant matrix, we introduce the following notation.
We define a $ (T+1) \times A $ matrix $ \mata $ as follows:  
\begin{itemize}
	\item the elements of the first column are defined by
	\[ 
	\mata[i,1] = 
	\begin{cases}
	1, & e_i \in \nhood(v_1),\\
	\alpha^{-1}, & i=T+1, \\
	0, & \text{otherwise};
	\end{cases}
	\]
	\item the elements of the $ j^\text{th} $ column, $ j=2,3,\ldots,A $, are defined by
	\[ 
	\mata[i,j] = 
	\begin{cases}
	1, &  e_i \in \nhood(v_j), \\
	0, & \text{otherwise.}
	\end{cases}
	\]
\end{itemize}
A $ (T+1) \times B $ matrix $ \matb $ is defined by
\[ 
\matb[i,j]  = 
\begin{cases}
1, & e_i \in \nhood(v_{A+j}), \\
0, & \text{otherwise,}
\end{cases}
\quad i \in [T+1], \ j \in [B].
\]
Then $ \spaceA $ and $ \spaceB $ can be identified with the  column spaces of   $ \mata $ and $ \matb $, respectively.  
Let $ a \in \mbb R^{A} $ and and  $ b \in \mbb R^{B} $   be  vectors defined by
\begin{equation}\label{eq:vec_ab_def}
a[1] = \sqrt{d_1 +  \alpha^{-2} }, \   a[j] = \sqrt{d_j}, \ j \in [2 \tdots A]
\quad \text{ and } \quad
b[j] = \sqrt{d_{A+j}}, \ j \in [B].
\end{equation}
By $ \matA $ we denote the matrix $ \mata  \diag(a)^{-1} $ and by $ \matB $ we denote the matrix $ \matb  \diag(b)^{-1} $. Notice that columns of $ \matA $ and $ \matB $ are orthonormal      vectors. The corresponding vectors
\begin{equation}\label{eq:HAHB_basis}
\sum_{i \in [T+1]} \matA[i,j] \ket i =   \frac{\ket {s_{v_j}}}{\nrm{s_{v_j}}}, \ j \in [A],
\quad  \text{ and  } \quad
\sum_{i \in [T+1]} \matB[i,j] \ket i  =   \frac{\ket {s_{v_{A+j}}}}{\nrm{s_{v_{A+j}}}}, \ j \in  [B]
\end{equation}
form orthonormal  bases of $ \spaceA $ and $ \spaceB $, respectively.

Let $ L = \matA^* \matB $. Then the discriminant matrix   of the pair of orthonormal systems spanning $ \spaceA $ and $ \spaceB $ has the following block structure:
\[ 
\begin{pmatrix}
\mb 0_{A,A} &  L \\
L^* & \mb 0_{B,B} 
\end{pmatrix}.
\]

It is easy to check that $ \lr{\begin{smallmatrix} u \\ v \end{smallmatrix} }$ is  an eigenvector of the discriminant matrix with an eigenvalue $ \lambda >0$ iff  $ u $ is  a left-singular  vector and $ v $
is the right-singular vector of $ L $ with singular value $ \lambda $.  Therefore, 
if $ \lmax{L} = \cos \frac{\theta}{2} $ is the largest singular value of $ L $,  
then $ e^{\pm i \theta} $ are the eigenvalues of $ \RB \RA $ that are closest to 1.

Let $ K=\lr{I - LL^*}^{-1}  $
and $ \lRho $ be  the maximal eigenvalue of $ K $. Since $  \lmax{L}   =  \cos \frac{\theta}{2} < 1  $  is  the maximal singular value of $ L $, it holds that $\lRho = \lr{1-  \lmax{L} ^2}^{-1}  =  \sin^{-2} \frac{\theta}{2}  $ is  the maximal eigenvalue of $ K$.
We show the following characterization of the entries of $ K $:
\begin{lemma}\label{lem:corr9}
	For all $ i,j  \in  [A]$    the following inequalities hold:
	\[ 
	\alpha^2     a[i] a[j]
	\leq
	K[i,j] \leq  
	\lr{\alpha^2    + n} a[i] a[j]
	.
	\]
	Moreover, when $ i=1 $ or $ j=1 $, we have $ K[i,j] = {\alpha^2    } a[i] a[j] $.
\end{lemma}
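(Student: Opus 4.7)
The plan is to show that $K$ is a similarity transform of the fundamental matrix of a killed random walk on $\complG$, and then read off the bounds from an electric-network interpretation. Writing $\matA = \mata\,\diag(a)^{-1}$ and $\matB = \matb\,\diag(b)^{-1}$, and noting that $\mata^{*}\matb = E_{AB}$ is the (unweighted) biadjacency matrix between $\vertSetA$ and $\vertSetB$ (the phantom edge $\addE$ is incident to no vertex of $\vertSetB$ and thus drops out of the cross product), one computes
\[ LL^{*} = \diag(a)^{-1}\,E_{AB}\,\diag(b)^{-2}\,E_{AB}^{T}\,\diag(a)^{-1}. \]
Now consider the weighted random walk on $\complG$ in which every edge of $\edgeSet$ has conductance $1$, the edge $\addE$ has conductance $\alpha^{-2}$, and $\addV$ is absorbing. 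Since the layering forces $\mc G$ to be bipartite across $(\vertSetA,\vertSetB)$, the two-step transition matrix on $\vertSetA$ is $\hat P := \diag(a)^{-2}\,E_{AB}\,\diag(b)^{-2}\,E_{AB}^{T}$. Crucially, the definition $a[1]^2 = d_1 + \alpha^{-2}$ makes $\sum_j \hat P[1,j] = d_1/a[1]^2$, correctly accounting for the killing probability $\alpha^{-2}/a[1]^2$ at the root. Comparing the two formulas yields $LL^{*} = \diag(a)\,\hat P\,\diag(a)^{-1}$, hence
\[ K = \diag(a)\,N\,\diag(a)^{-1}, \qquad K[i,j] = \frac{a[i]}{a[j]}\,N[i,j], \]
where $N := (I - \hat P)^{-1}$ is the fundamental matrix of this two-step walk.

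Next I would invoke the classical electrical-network identity
\[ N[i,j] = a[j]^2 \cdot V_{j}^{(i)}, \]
in which $a[j]^2$ is the total conductance at $v_j$ and $V_{j}^{(i)}$ is the voltage at $v_j$ produced by injecting one unit of current at $v_i$ and extracting it at $\addV$ (gauge $V_{\addV}^{(i)} = 0$). Bipartiteness ensures that visits of the underlying one-step walk to $\vertSetA$ occur only at even times, so the one-step and two-step fundamental-matrix entries coincide. Combining with the previous step,
\[ K[i,j] = a[i]\,a[j]\,V_{j}^{(i)}, \]
so Lemma \ref{lem:corr9} reduces to showing $\alpha^2 \leq V_{j}^{(i)} \leq \alpha^2 + n$, with equality on the lower bound whenever $i = 1$ or $j = 1$.

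For the voltage bounds I would use superposition, splitting the flow into (B) ``inject $1$ at $v_1$, extract $1$ at $\addV$'' and (C) ``inject $1$ at $v_i$, extract $1$ at $v_1$''. In (B) all current must traverse $\addE$ (the only edge incident to $\addV$), so Ohm's law across $\addE$ gives voltage $\alpha^2$ at the root; as no current flows anywhere else, every vertex of $\mc G$ sits at voltage $\alpha^2$. Scenario (C) is an ordinary voltage problem on $\mc G$ (the phantom edge carries no current); with gauge $V_{v_1}^{(C)} = 0$, the maximum principle gives $0 \leq V_{j}^{(C)} \leq V_{v_i}^{(C)} = R_{\mathrm{eff}}(v_i,v_1)$. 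Superposing, $V_{j}^{(i)} = \alpha^2 + V_{j}^{(C)}$; and Rayleigh monotonicity applied to any single layer-decreasing path from $v_i$ back to $v_1$ of length $\ell(i) \leq n$ gives $R_{\mathrm{eff}}(v_i,v_1) \leq n$. The equality cases $i = 1$ or $j = 1$ correspond to $V_{j}^{(C)} = 0$ identically (either no current flows anywhere in (C) when source equals sink, or the gauge forces $V_{v_1}^{(C)} = 0$), giving $K[i,j] = \alpha^2 a[i]\,a[j]$ exactly.

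The main obstacle I expect is pinning down the walk/matrix correspondence at the root: one has to verify that the phantom-edge conductance $\alpha^{-2}$ threads correctly through the normalization $a[1]^2$, the matrix $\hat P$, and the killing probability, and that the symmetry of $K$ is consistent with the reversibility of the walk weighted by $a[\cdot]^2$. Once that identification is in place, the voltage bounds themselves are a few lines of standard potential theory on electrical networks, and the special cases drop out immediately from the superposition.
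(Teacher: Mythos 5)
Your proof is correct, and while it follows the same high-level strategy as the paper (relate $K$ to the fundamental matrix of an absorbing random walk, interpret the entries electrically, and bound the relevant resistance by Rayleigh monotonicity against a single root-to-$v_i$ path), the execution is genuinely different and in some respects cleaner. The paper attaches a \emph{second} phantom vertex $\addVV$ with an edge of weight $d_1$, runs the walk on all of $\complVV$, and must then pass from $(\mb I - \tilde L\tilde L^*)^{-1}$ to $K=(\mb I - LL^*)^{-1}$ via a Sherman--Morrison rank-one update, which produces the correction term $-d_1^{-1}\mb J$ in Lemma \ref{th:K_estm}; the additive $\alpha^2$ then emerges from explicit harmonic-function computations of $\tilde N[1,\cdot]$. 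You instead make $\addV$ itself absorbing, observe that $LL^*$ is directly similar (via $\diag(a)$) to the substochastic two-step kernel $\hat P$ on $\vertSetA$ of that walk, and so get $K=\diag(a)(\mb I-\hat P)^{-1}\diag(a)^{-1}$ with no rank-one correction at all; the $\alpha^2$ offset then falls out of superposing the unit flow $v_1\to\addV$ (which puts all of $\G$ at constant potential $\alpha^2$) with the flow $v_i\to v_1$ inside $\G$. Your identification of the two-step and one-step Green's functions via bipartiteness, the identity $N[i,j]=a[j]^2V_j^{(i)}$, the maximum principle for the component flow, and the series-path bound $R_{\mathrm{eff}}(v_i,v_1)\le \ell(i)\le n$ are all standard and correctly applied, and the equality cases $i=1$ or $j=1$ do drop out of the superposition exactly as you say. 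The one point worth writing out carefully in a full version is the algebraic verification $LL^*=\diag(a)\,\hat P\,\diag(a)^{-1}$ at the root row/column, where $a[1]^2=d_1+\alpha^{-2}$ simultaneously plays the role of the normalization in $L$ and of the total conductance (hence killing rate) at $v_1$ --- but as you note, this checks out.
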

Proof in Sections \ref{sec:harmonic} - \ref{sec:appE}.

Lemma \ref{lem:corr9} now allows to estimate   the maximal eigenvalue of $ K$:
\begin{lemma}\label{th:lemma2}
The entries $K[i, j], i,j  \in  [A]$, satisfy
	\begin{equation}\label{eq:lemma14_eq1}
	\alpha^2   \sqrt{d_i d_j}
	\leq
	K[i,j] \leq  
	\sqrt{d_i d_j}\lr{\alpha^2    + n} .
	\end{equation}
	Furthermore, $ \lRho $, the maximal eigenvalue of $ K$,  satisfies
	\begin{equation}\label{eq:lemma14_eq2}
	\alpha^2 T \leq \lRho \leq (\alpha^2 +  n) T.
	\end{equation}
\end{lemma}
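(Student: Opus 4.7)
The plan is to derive \eqref{eq:lemma14_eq1} as an almost-immediate consequence of Lemma \ref{lem:corr9}, and then use \eqref{eq:lemma14_eq1} together with Perron--Frobenius to extract \eqref{eq:lemma14_eq2}.

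For \eqref{eq:lemma14_eq1}, I would split into cases. When $i,j \geq 2$ we have $a[i] = \sqrt{d_i}$, so Lemma \ref{lem:corr9} gives both the upper and lower bounds verbatim. For $i=1$ (or $j=1$), the equality part of Lemma \ref{lem:corr9} reads $K[1,j] = \alpha^2 a[1] a[j] = \alpha^2 \sqrt{(d_1+\alpha^{-2})d_j}$. Since $a[1] \ge \sqrt{d_1}$, the lower bound is automatic. The upper bound reduces, after squaring, to checking $\alpha^2 \le d_1(2n\alpha^2 + n^2)$, which holds under the mild standing assumption $d_1 \ge 1$ and $n \ge 1$; the degenerate case $d_1 = 0$ forces $T = 0$ and the matrix $K$ is trivial.

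For \eqref{eq:lemma14_eq2} the key observation is that $LL^*$ has non-negative entries (since $\matA,\matB$ do), and hence $K = \sum_{k \ge 0}(LL^*)^k$ is entrywise non-negative and symmetric. By Perron--Frobenius its largest eigenvalue $\lRho$ is attained by a non-negative eigenvector $x$, so $\lRho = x^T K x / \|x\|^2$. For the upper bound I would apply \eqref{eq:lemma14_eq1}:
\[
x^T K x \le (\alpha^2 + n)\sum_{i,j} x_i \sqrt{d_i}\sqrt{d_j} x_j = (\alpha^2+n)\lr{\sum_i x_i \sqrt{d_i}}^2,
\]
and then Cauchy--Schwarz gives $(\sum_i x_i \sqrt{d_i})^2 \le \|x\|^2 \sum_i d_i = T \|x\|^2$, where the identity $\sum_{v \in \vertSetA} d_v = T$ comes from the fact that every edge of $\mc G$ joins $\vertSetA$ to $\vertSetB$. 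This yields $\lRho \le (\alpha^2+n)T$. For the lower bound I would test against $y \in \mathbb R^A$ with $y_i = \sqrt{d_i}$: the lower bound of \eqref{eq:lemma14_eq1} gives $y^T K y \ge \alpha^2 (\sum_i d_i)^2 = \alpha^2 T^2$, and dividing by $\|y\|^2 = T$ gives $\lRho \ge \alpha^2 T$.

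The only slightly subtle point I expect is the upper bound on $K[1,j]$ in \eqref{eq:lemma14_eq1}, since the naive chain $K[1,j] \le (\alpha^2+n) a[1]a[j]$ from Lemma \ref{lem:corr9} is too weak (because $a[1] > \sqrt{d_1}$); one must instead use the equality $K[1,j] = \alpha^2 a[1]a[j]$ and absorb the extra $\alpha^{-2}$ into the $n$ factor. Once \eqref{eq:lemma14_eq1} is in place, the rest is a routine two-sided Rayleigh-quotient argument.
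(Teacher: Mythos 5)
Your proof is correct. The first half (the entrywise bounds \eqref{eq:lemma14_eq1}) follows the paper's own argument almost exactly: the case $i,j\geq 2$ is immediate from Lemma \ref{lem:corr9}, and for $i=1$ one must use the \emph{equality} $K[1,j]=\alpha^2 a[1]a[j]$ rather than the generic upper bound, absorbing the extra $\alpha^{-2}$ into the $+n$ term; the paper does this via $\alpha a[j]\leq\sqrt{\alpha^2 d_j+1}$ and the inequality $\sqrt{(\alpha^2 d_1+1)(\alpha^2 d_j+1)}\leq\sqrt{d_1d_j}\lr{\alpha^2+n}$, which reduces to the same condition $d_1 n\geq 1$ that your squared form $\alpha^2\leq d_1(2n\alpha^2+n^2)$ requires, so you have correctly identified the implicit standing assumptions.

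For the eigenvalue bounds \eqref{eq:lemma14_eq2} you take a genuinely different (and more self-contained) route. The paper sets $K'[i,j]=\sqrt{d_id_j}$, observes $\alpha^2K'\leq K\leq(\alpha^2+n)K'$ entrywise, and invokes the monotonicity of the spectral radius under entrywise domination of non-negative matrices (Horn and Johnson, Theorem 8.1.18), finishing with the rank-one computation $\lambda(K')=\sum_{j=1}^A d_j=T$. You instead argue directly with Rayleigh quotients: the lower bound by testing against $y_i=\sqrt{d_i}$ (which needs no positivity input at all), and the upper bound by combining Perron--Frobenius (to get a non-negative top eigenvector, justified by the Neumann series $K=\sum_k(LL^*)^k$ being entrywise non-negative) with Cauchy--Schwarz. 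Both arguments rest on the same bipartiteness identity $\sum_{v\in\vertSetA}d_v=T$. Your version avoids the external citation at the cost of the Perron--Frobenius step; note that even that step is dispensable, since for any unit vector $x$ one has $x^TKx\leq\sum_{i,j}|x_i|\,K[i,j]\,|x_j|\leq(\alpha^2+n)\bigl(\sum_i|x_i|\sqrt{d_i}\bigr)^2\leq(\alpha^2+n)T$ directly from the non-negativity of the entries of $K$. Either way, the argument is sound.
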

\begin{proof}
	Since $ a[i] = \sqrt{d_i} $ for all $ i \in [2 \tdots A] $, inequalities \eqref{eq:lemma14_eq1} immediately follow from Lemma \ref{lem:corr9} when $ i,j \geq 2 $. Suppose that $ i=1 $ (since $ K $ is symmetric), then, again by  Lemma \ref{lem:corr9}, $ K[1,j] = \alpha^2 a[1] a[j] $. Since $ a[j] \geq \sqrt{d_j} $ for all $ j \in [A] $, the first inequality in \eqref{eq:lemma14_eq1} is obvious. It remains to show
	\begin{equation}\label{eq:lemma14_eq3}
	\alpha^2 a[1] a[j]  \leq \sqrt{d_1 d_j}\lr{\alpha^2    + n}, \quad j \in [A].
	\end{equation}
	 However, from the definition of $ a $  we have $ \alpha a[j] \leq \sqrt{\alpha^2 d_j + 1} $ for all $ j \in [A] $. Hence the LHS of \eqref{eq:lemma14_eq3} is upper-bounded by $ \sqrt{ (\alpha^2 d_1+ 1)(\alpha^2 d_j+ 1)  } $, which, in turn, is upper bounded by the RHS of \eqref{eq:lemma14_eq3}. This proves  \eqref{eq:lemma14_eq1}.

	Let $ K' $ be a symmetric $ A \times A $ matrix, defined by $ K'[i,j] = \sqrt{d_i d_j} $.
	Then  \eqref{eq:lemma14_eq1} can be restated as
	\begin{equation*}  
	\alpha^2 K '[i,j] \leq K[i,j] \leq (\alpha^2 + n)K' [i,j], \qquad i,j \in [A].
	\end{equation*}
	Now, from  \cite[Theorem 8.1.18]{Horn2012b}   we have that 
	\[ 
	\lambda(\alpha^2 K') \leq \lRho  \leq \lambda\lr{(\alpha^2 + n)K' },
	\]
	where by $ \lambda(M) $ we denote the spectral radius of a matrix $ M $ (the maximum absolute value of an eigenvalue of $ M $),  since $ \lRho = \lambda(K) $. On the other hand, $K'$ is a rank-1 matrix, thus its spectral radius is $ \lambda(K') =  \sum_{j=1}^{A} (\sqrt{d_j} )^2 = \sum_{j=1}^{A} d_j = T$ (on each side of the last equality  every edge {in $\edgeSet$} is counted exactly once),  hence
	\[ 
	\alpha^2 T= \lambda(\alpha^2K') 
	\leq \lRho  \leq 
	\lambda\lr{(\alpha^2 + n)K' } = (\alpha^2 +n)T .
	\]
\end{proof}

\subsection{Phase estimation for \texorpdfstring{$\theta$}{theta}}
\label{sec:phase}

We now show that Lemma \ref{th:lemma2} implies Lemmas \ref{lem:prob-success} and \ref{lem:theta}.

Let $ u $ be  the left-singular  vector and $ v $ be the right-singular vector of $ L $ corresponding to the largest singular value $  \lmax{L}  $. By Theorem \ref{th:spectralThm}, the corresponding eigenvectors of $ \RB \RA $ are
\[ 
\ket {\Psi_\pm}  = {\frac{1}{ \sqrt{2 (1- \lmax{L} ^2)}}\ket{\tilde a} - \frac{ \lmax{L} }{ \sqrt{2 (1- \lmax{L} ^2)}} \ket{\tilde b}}  \mp \frac{i}{\sqrt 2} \ket{\tilde b},
\] 
where $\ket{\tilde a} \in \spaceA$, $ \ket{\tilde b} \in \spaceB$ are the unit vectors  associated to $   \matA  {u} $ and $   \matB  {v} $, i.e.,
\[ 
\ket{\tilde a} =   \sum_{j=1}^{A} \frac{\ket {s_{v_j}}}{\nrm{s_{v_j}}} u[j],
\quad
\ket{\tilde b} =   \sum_{j=1}^{B} \frac{\ket {s_{v_{A+j}}}}{\nrm{s_{v_{A+j}}}} v[j].
 \]
 
The two-dimensional plane  $ \Pi = \Span \lrb{\ket {\Psi_+}, \ket {\Psi_- }} $ is also spanned by 
\begin{equation}\label{eq:def_q1q2}
\ket {q_1} = \ket{\tilde b}, \quad
\ket {q_2} =   \frac{1}{\sqrt{1 -  \lmax{L} ^2}} \ket{\tilde a}  - \frac{ \lmax{L} }{\sqrt{1 -  \lmax{L} ^2}} \ket{\tilde b}.
\end{equation}
We claim that 
\begingroup
\def\thelemmaPrime{\ref*{lem:prob-success}}
\begin{lemmaPrime}
	If $ \alpha\geq \sqrt {2n} $, we have
	\[ \braket{\addE}{q_2}   \geq   \frac{2}{3} \]
	for the state $\ket{q_2}\in  \Span \lrb{\ket {\Psi_+}, \ket {\Psi_-}} $,  defined by \eqref{eq:def_q1q2}.
\end{lemmaPrime}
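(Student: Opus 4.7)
The plan is to compute $\braket{\addE}{q_2}$ essentially in closed form and then invoke the entry-wise bounds from Lemma \ref{lem:corr9} to lower-bound the result.

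\textbf{Reducing to an entry of $u$.} First I would compute the two basic overlaps $\braket{\addE}{\tilde a}$ and $\braket{\addE}{\tilde b}$ directly from the definitions. Since $\ket{\tilde b} \in \spaceB$ and every $\ket{s_{v_j}}$ with $v_j\in\vertSetB$ is a sum of basis states $\ket e$ with $e\in\edgeSet$ (never equal to $\addE$), we get $\braket{\addE}{\tilde b}=0$. For $\braket{\addE}{\tilde a}$, among the terms of $\ket{\tilde a}=\sum_{j=1}^{A} u[j]\ket{s_{v_j}}/\nrm{s_{v_j}}$, only the $j=1$ term contributes (because $\ket{\addE}$ appears in $\ket{s_{v_1}}$ with coefficient $1$ and in no other $\ket{s_{v_j}}$); moreover $\nrm{s_{v_1}}^2 = 1 + \alpha^2 d_1 = \alpha^2 a[1]^2$. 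Plugging into the definition \eqref{eq:def_q1q2} of $\ket{q_2}$ gives
\[
\braket{\addE}{q_2} \;=\; \frac{1}{\sqrt{1-\lmax L^2}}\cdot\frac{u[1]}{\alpha\, a[1]} \;=\; \frac{\sqrt{\lRho}}{\alpha\, a[1]}\,u[1].
\]

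\textbf{Lower-bounding $u[1]$.} Next I would use the eigenvalue equation $Ku=\lRho u$ at index $1$. The second part of Lemma \ref{lem:corr9} states $K[1,j]=\alpha^2 a[1]a[j]$ for every $j\in[A]$, so the first row of the eigenvalue equation collapses nicely to
\[
\lRho\,u[1] \;=\; \sum_{j=1}^{A} K[1,j]\,u[j] \;=\; \alpha^2 a[1]\,\sigma, \qquad \sigma:=\sum_{j=1}^{A} a[j]\,u[j].
\]
Since the bounds in Lemma \ref{lem:corr9} imply $K$ has strictly positive entries and $K$ is positive definite (as $K=(I-LL^*)^{-1}$ with $\lmax L<1$), by Perron--Frobenius the unit eigenvector $u$ for $\lRho$ can and will be chosen with all coordinates positive, so $\sigma>0$ and $u[1] = \alpha^2 a[1]\sigma/\lRho$.

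\textbf{Lower-bounding $\sigma$.} I would now read off $\sigma$ from the Rayleigh identity $u^T K u = \lRho$. Using the upper bound $K[i,j]\leq (\alpha^2+n)a[i]a[j]$ and positivity of $u$,
\[
\lRho \;=\; u^T K u \;\leq\; (\alpha^2+n)\Bigl(\sum_{i=1}^{A} a[i]\,u[i]\Bigr)^2 \;=\; (\alpha^2+n)\,\sigma^2,
\]
so $\sigma \geq \sqrt{\lRho/(\alpha^2+n)}$.

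\textbf{Putting it together.} Combining the last two displays gives $u[1]\geq \alpha^2 a[1]/\sqrt{\lRho(\alpha^2+n)}$, and substituting into the formula for the overlap leads to a clean cancellation,
\[
\braket{\addE}{q_2} \;\geq\; \frac{\sqrt{\lRho}}{\alpha\,a[1]}\cdot \frac{\alpha^2 a[1]}{\sqrt{\lRho(\alpha^2+n)}} \;=\; \frac{\alpha}{\sqrt{\alpha^2+n}}.
\]
Under the hypothesis $\alpha\geq\sqrt{2n}$ this evaluates to at least $\sqrt{2/3}>2/3$. The main subtlety I expect is the Perron--Frobenius step: I need strict positivity of $K$'s entries (guaranteed by the lower bound of Lemma \ref{lem:corr9}) to choose $u$ with $u[1]>0$ so that $\braket{\addE}{q_2}$ is genuinely positive (and not just large in absolute value); everything else is bookkeeping, hinging on the coincidence $\nrm{s_{v_1}}^2 = \alpha^2 a[1]^2$ that makes the $\alpha$'s cancel and leaves only the ratio $\alpha^2/(\alpha^2+n)$.
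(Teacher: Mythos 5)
Your proof is correct and follows the same overall strategy as the paper: reduce $\braket{\addE}{q_2}$ to $\sqrt{\lRho}\,u[1]/(\alpha\,a[1])$ (the paper's \eqref{eq:braket_1q2}, since $\alpha a[1]=\sqrt{1+\alpha^2 d_1}$) and then lower-bound $u[1]$ from the first row of $Ku=\lRho u$ via the exact identity $K[1,j]=\alpha^2 a[1]a[j]$ of Lemma \ref{lem:corr9}. The one step where you genuinely deviate is the lower bound on $\sigma=\sum_j a[j]u[j]$ (the paper's $\mu$): you get it from the Rayleigh identity $u^*Ku=\lRho$ together with $K[i,j]\le(\alpha^2+n)a[i]a[j]$, whereas the paper bounds each $u[i]$ componentwise and then uses $\nrm{u}=1$ together with the spectral bound $\lRho\ge\alpha^2T$ from Lemma \ref{th:lemma2}; your route avoids that last dependency, cancels more cleanly, and yields the slightly stronger constant $\alpha/\sqrt{\alpha^2+n}\ge\sqrt{2/3}$ in place of the paper's $\alpha^2/(\alpha^2+n)\ge 2/3$. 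You also make explicit the Perron--Frobenius positivity of $u$, which the paper's inequalities (both the componentwise upper bound and the first-row lower bound) rely on but leave implicit, so this is a point in your favor rather than a gap.
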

\endgroup
\begin{proof} 
Since $ \ket {\addE} \perp \spaceB $,  we have $ \braket{\addE}{\tilde b} = 0 $ and 
\[ 
 \braket{\addE}{q_2}
=
\frac{1}{\sqrt{1- \lambda_L^2}}
\braket{\addE}{\tilde a} =\sqrt{ \lRho } \braket{\addE}{\tilde a}.
 \]
 Since $ \braket{\addE} {s_{v_j}} = 0$  unless $ j=1 $, we have
 \[ 
 \braket{\addE}{\tilde a}
 =
 \sum_{j=1}^{A} \frac{\braket{\addE} {s_{v_j}}}{\nrm{s_{v_j}}} u[j]
 =
 \frac{ u[1]  }{ \sqrt{ 1 + \alpha^2 d_1} }.
 \]
Consequently,
\begin{equation}\label{eq:braket_1q2}
\braket{\addE}{q_2} = \frac{u [1]\sqrt{\lRho }}{\sqrt{ 1 + \alpha^2 d_1}}.
\end{equation}
	We now lower bound this expression.
          Since $u$ is an eigenvector of $K$, we have
	\[ 
	\lRho u[i] =  \sum_{j=1}^{A}   u[j] K[i,j]  , \quad i \in [A] .
	\]
	From \eqref{eq:lemma14_eq1} 
	it follows that
	\[ 
	 \lRho u[i] 
	\leq ( \alpha^2 +n)\sqrt{d_i } \lr{\sum_{j=1}^{A} \sqrt{d_j} u[j]}, \quad i \in [A] .
	 \]
	 Denote $ \mu =  \sum_{j=1}^{A} \sqrt{d_j} u[j]  $; then the previous inequality can be rewritten as
	 \[ 
	 u[i] 
	 \leq \mu \cdot \frac{ \alpha^2 +n}{\lRho}\sqrt{d_i } , \quad i \in [A].
	 \]
	 On the other hand, $ u $ is a unit vector, thus  
	 \[ 
	 1 = \sum_{i=1}^{A} u^2[i]   \leq \mu^2 \lr{ \frac{\alpha^2+n}{\lRho} }^2 \sum_{i=1}^{A}d_i =  \mu^2 \lr{ \frac{\alpha^2+n}{\lRho} }^2 T.
	 \] 
	 Now we conclude that
	 \[ 
	 \mu \geq   \frac{\lRho}{(\alpha^2+n) \sqrt{T}} .
	 \]
	 
	 From Lemma \ref{lem:corr9} it follows that $ K[1,j] = \alpha^2 a[1] a[j] \geq     \alpha^2 \sqrt{d_1 + \alpha^{-2}} \sqrt{d_j}$ for all $ j \in [A] $. That allows to estimate the RHS of the equation
	 \[ 
	  u[1] = \frac{1}{\lRho}  \sum_{j=1}^{A}   u[j] K[1,j]  
	  \]
	 more precisely:
	 \[ 
	 u[1] \geq  \frac{ \alpha^2 \sqrt{d_1 + \alpha^{-2}}}{\lRho}   \sum_{j=1}^{A}   \sqrt{d_j} u[j] 
	 =
	 \frac{ \alpha^2 \sqrt{d_1 + \alpha^{-2}}}{\lRho} \mu \geq \frac{\alpha^2}{\alpha^2+n} \cdot \frac{\sqrt{\alpha^2 d_1 + 1} }{\alpha \sqrt{T}}.
	 \]
	 Combining this with \eqref{eq:braket_1q2} and  the estimate $ \sqrt{\lRho} \geq \alpha\sqrt{T} $ (which follows from  \eqref{eq:lemma14_eq2}) yields
	 \[ 
	 \braket{\addE}{q_2} =
	 \frac{ u[1] \sqrt{\lambda_K}}{ \sqrt{ 1 + \alpha^2 d_1} } \geq  \frac{\alpha^2}{\alpha^2+n} = 1 - \frac{n}{\alpha^2+n} \geq \frac{2}{3},
	 \] 
	which completes the proof.
\end{proof}

\begingroup
\def\thelemmaPrime{\ref*{lem:theta}}
\begin{lemmaPrime}
	Suppose that  $ \delta \in (0,1) $. Let    $ \alpha = \sqrt{2 n  \delta^{-1}} $   and $ \hat \theta \in  (0;\pi/2)$ satisfy
		\[ 
		\lrv{\hat \theta - \theta } \leq \frac{\delta^{1.5}}{ {24} \sqrt { 3nT }}.
		\] 
	Then
	\[ 
	(1 - \delta) T \leq   \frac{1}{\alpha^2 \sin^2 \frac{\hat \theta }{2}}  \leq  (1+ \delta)T.
	\]
\end{lemmaPrime}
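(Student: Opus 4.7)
The plan is to combine the eigenvalue bounds in Lemma \ref{th:lemma2} with a direct calculus estimate for the sensitivity of $1/\sin^2(\theta/2)$ to perturbations in $\theta$. By Lemma \ref{th:lemma2} and the identification $\lambda_K = 1/\sin^2(\theta/2)$ from Section \ref{sec:lem13}, we have
\[
T \;\leq\; \frac{1}{\alpha^2\sin^2(\theta/2)} \;=\; \frac{\lambda_K}{\alpha^2} \;\leq\; \Bigl(1+\frac{n}{\alpha^2}\Bigr)T.
\]
With the chosen $\alpha^2 = 2n/\delta$, the right-hand side becomes $(1+\delta/2)T$. Hence the conclusion of the lemma already holds with $\hat\theta$ replaced by the exact $\theta$, leaving a margin of $\delta T/2$ to absorb the phase estimation error.

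It therefore suffices to prove $|f(\hat\theta) - f(\theta)| \leq \delta T/2$, where $f(x) := 1/(\alpha^2\sin^2(x/2))$. The plan is to apply the mean value theorem together with the bound $|f'(x)| \leq 1/(\alpha^2 \sin^3(x/2))$ obtained from $f'(x) = -\cos(x/2)/(\alpha^2\sin^3(x/2))$. The key ingredient is a uniform lower bound on $\sin(\xi/2)$ for $\xi$ on the segment between $\theta$ and $\hat\theta$: from $\lambda_K \leq (\alpha^2+n)T$ we get $\sin(\theta/2) \geq 1/\sqrt{(\alpha^2+n)T}$, and the hypothesis $|\hat\theta - \theta| \leq \delta^{3/2}/(4\sqrt{3nT})$, combined with $\alpha^2+n \leq 3n/\delta$ (which uses $\delta < 1$), is small enough to force $|\hat\theta - \theta| \leq \sin(\theta/2)$. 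Since $|\sin(\hat\theta/2) - \sin(\theta/2)| \leq |\hat\theta - \theta|/2$, this guarantees $\sin(\xi/2) \geq \tfrac{1}{2}\sin(\theta/2)$ throughout the interval. Plugging everything into the mean value inequality and substituting $\alpha^2 = 2n/\delta$ and $\alpha^2+n \leq 3n/\delta$ telescopes the bound down to $\delta T/2$.

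The main obstacle is the constant-tracking: the hypothesis $\delta^{3/2}/(4\sqrt{3nT})$ is essentially tight for the estimate above, so one must be attentive both to the factor $1/(4\sqrt 3)$ and to the fact that the lower bound on $\sin(\xi/2)$ must remain within a small constant multiple of $\sin(\theta/2)$ on the whole interval between $\theta$ and $\hat\theta$. Beyond that, the calculation is a direct substitution and presents no conceptual difficulty.
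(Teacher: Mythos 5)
Your first two steps are exactly the paper's: Lemma \ref{th:lemma2} gives $T\le \lambda_K/\alpha^2=1/(\alpha^2\sin^2\frac{\theta}{2})\le(1+\tfrac{\delta}{2})T$, and $\sin\frac{\theta}{2}\ge 1/\sqrt{(\alpha^2+n)T}$ together with $\alpha^2+n\le 3n/\delta$ converts the hypothesis into $|\hat\theta-\theta|\le\frac{\delta}{4}\sin\frac{\theta}{2}$. The gap is in the final conversion step, and it is not just a matter of ``being attentive to constants'': the mean-value-theorem route with the bounds you state does not reach $\delta T/2$, and cannot be patched to cover all $\delta\in(0,1)$. Write $\sigma=\sin\frac{\theta}{2}$ and $f(x)=1/(\alpha^2\sin^2\frac{x}{2})$. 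With your lower bound $\sin\frac{\xi}{2}\ge\frac{\sigma}{2}$ you get $\sup_\xi|f'(\xi)|\le 8/(\alpha^2\sigma^3)$, hence
\[
|f(\hat\theta)-f(\theta)|\;\le\;\frac{\delta\sigma}{4}\cdot\frac{8}{\alpha^2\sigma^3}\;=\;\frac{2\delta}{\alpha^2\sigma^2}\;=\;2\delta\, f(\theta)\;\ge\;2\delta T,
\]
a factor of at least $4$ larger than the $\delta T/2$ you need. The culprit is the cube of $\sin\frac{\xi}{2}$ in $|f'|$: any constant-factor slack in the lower bound on $\sin\frac{\xi}{2}$ gets cubed. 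Even with the sharpest available bound $\sin\frac{\xi}{2}\ge(1-\tfrac{\delta}{8})\sigma$, the MVT gives relative error $\frac{\delta}{4}(1-\tfrac{\delta}{8})^{-3}$, and $(1+\tfrac{\delta}{2})\bigl(1+\frac{\delta}{4}(1-\delta/8)^{-3}\bigr)$ exceeds $1+\delta$ for $\delta$ near $1$ (at $\delta=1$ it is about $2.06$), so the argument fails on part of the stated range.

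The fix, which is what the paper's Lemma \ref{th:lemma14} does, is to avoid the worst-case derivative and compute the difference exactly via the factorization
\[
\frac{1}{\sin^2\frac{\hat\theta}{2}}-\frac{1}{\sin^2\frac{\theta}{2}}
=\frac{\bigl(\sin\frac{\theta}{2}-\sin\frac{\hat\theta}{2}\bigr)\bigl(\sin\frac{\theta}{2}+\sin\frac{\hat\theta}{2}\bigr)}{\sin^2\frac{\theta}{2}\,\sin^2\frac{\hat\theta}{2}}.
\]
Here the numerator is at most $\frac{\delta\sigma}{8}\cdot(2+\tfrac{\delta}{8})\sigma\approx\frac{\delta}{4}\sigma^2$, and only a \emph{square} of $\sin\frac{\hat\theta}{2}$ appears in the denominator, so the relative error stays at essentially $\frac{\delta}{4}$ rather than $2\delta$; then $(1+\tfrac{\delta}{4})(1+\tfrac{\delta}{2})=1+\tfrac{3\delta}{4}+\tfrac{\delta^2}{8}<1+\delta$ closes the upper bound, and the lower bound is immediate. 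You should replace the MVT step by this computation (or by integrating $f'$ and exploiting that the endpoint where $|f'|$ is large is also the endpoint you are moving away from); as written, the claim that the mean value inequality ``telescopes the bound down to $\delta T/2$'' is false.
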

\endgroup

\begin{proof}
	We start with
	
	\begin{lemma}\label{th:lemma14}
		Suppose that   $ \hat \theta \in (0;\pi/2) $ satisfies
		\[ 
		\lrv{\theta - \hat \theta}  \leq \epsilon \sin \frac{\theta }{2} 
		\]
		for some {$ \epsilon \in (0,1) $}.  Then
		\[ 
		\lrv{\frac{1}{\sin^2 \frac{\hat \theta }{2}} - \frac{1}{\sin^2 \frac{\theta }{2}} } \leq 
		\frac{  {6}\epsilon}{   \sin^2\frac{\theta}{2}   }.
		\]
	\end{lemma}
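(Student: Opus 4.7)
\textbf{Proof proposal for Lemma~\ref{th:lemma14}.}

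The plan is to turn the difference of reciprocals into a single fraction and bound the numerator and denominator separately. Write
\[
\frac{1}{\sin^2 \frac{\hat\theta}{2}} - \frac{1}{\sin^2 \frac{\theta}{2}}
= \frac{\sin^2 \frac{\theta}{2} - \sin^2 \frac{\hat\theta}{2}}{\sin^2 \frac{\theta}{2}\,\sin^2 \frac{\hat\theta}{2}}.
\]
For the numerator I would use the elementary identity $\sin^2 A - \sin^2 B = \sin(A+B)\sin(A-B)$, which gives
\[
\sin^2 \tfrac{\theta}{2} - \sin^2 \tfrac{\hat\theta}{2}
= \sin\!\tfrac{\theta+\hat\theta}{2}\,\sin\!\tfrac{\theta-\hat\theta}{2}.
\]
Applying $|\sin x| \le |x|$ to the second factor and the hypothesis $|\theta-\hat\theta|\le \epsilon \sin\frac{\theta}{2}$ yields
\[
\bigl|\sin^2 \tfrac{\theta}{2} - \sin^2 \tfrac{\hat\theta}{2}\bigr|
\le \sin\!\tfrac{\theta+\hat\theta}{2} \cdot \tfrac{\epsilon}{2}\sin \tfrac{\theta}{2}.
\]

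For the denominator I would lower-bound $\sin \frac{\hat\theta}{2}$ in terms of $\sin \frac{\theta}{2}$. The Lipschitz property $|\sin a - \sin b|\le |a-b|$ gives
\[
\Bigl|\sin \tfrac{\hat\theta}{2} - \sin \tfrac{\theta}{2}\Bigr| \le \tfrac{|\hat\theta - \theta|}{2} \le \tfrac{\epsilon}{2}\sin \tfrac{\theta}{2},
\]
so $\sin \frac{\hat\theta}{2}$ and $\sin \frac{\theta}{2}$ are comparable (with ratio $1\pm\epsilon/2$) whenever $\epsilon$ is moderate. The same estimate controls $\sin \frac{\theta+\hat\theta}{2}$ from above by roughly $\sin \frac{\theta}{2}$ (up to a factor $1+O(\epsilon)$), since $\frac{\theta+\hat\theta}{2}$ lies within $|\hat\theta - \theta|/2$ of $\frac{\theta}{2}$.

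Substituting these bounds back into the fraction, the $\sin \frac{\theta}{2}$ in the numerator cancels one power of $\sin \frac{\theta}{2}$ in the denominator, leaving $\epsilon/\sin^2 \frac{\theta}{2}$ up to a multiplicative factor of the form $(1\pm O(\epsilon))$. I expect the main obstacle to be bookkeeping of these $O(\epsilon)$ factors: the direct estimate gives something like $\epsilon/\bigl[(1-\epsilon/2)^3 \sin^2\frac{\theta}{2}\bigr]$ rather than the clean $\epsilon/\sin^2\frac{\theta}{2}$ stated in the lemma. Resolving this either requires a slightly more refined trigonometric identity (keeping the exact equality $\sin \frac{\theta+\hat\theta}{2}\sin \frac{\theta}{2}\le 2\sin^2 \frac{\hat\theta}{2}$ to the right order), or absorbing the factor into an implicit assumption $\epsilon \le \epsilon_0$ for some small $\epsilon_0$ which is satisfied in the downstream application inside Lemma~\ref{lem:theta}. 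Once that is done, Lemma~\ref{lem:theta} follows by plugging in $\theta = 2\arcsin(1/(\alpha\sqrt T\,\cdot\,\text{something}))$, using $\sin^{-2}\frac{\theta}{2} = \lambda_K \in [\alpha^2 T, (\alpha^2+n)T]$ (from Lemma~\ref{th:lemma2}) to choose $\epsilon = \delta$ and verify the required precision on $\hat\theta$.
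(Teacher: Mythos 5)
Your decomposition into a single fraction, the factoring of the numerator, and the Lipschitz lower bound on $\sin\frac{\hat\theta}{2}$ are exactly the ingredients of the paper's own proof (the paper factors $\sin^2A-\sin^2B$ as $(\sin A-\sin B)(\sin A+\sin B)$ rather than $\sin(A+B)\sin(A-B)$, but this is cosmetic). The obstacle you flag at the end is genuine, and neither of your proposed escapes resolves it. Restricting to $\epsilon\le\epsilon_0$ cannot help, because the excess factor $\frac{1+O(\epsilon)}{(1-\epsilon/2)^2}$ exceeds $1$ for \emph{every} $\epsilon>0$; and the sharper inequality you hope for, $\sin\frac{\theta+\hat\theta}{2}\sin\frac{\theta}{2}\le 2\sin^2\frac{\hat\theta}{2}$, fails when $\hat\theta<\theta$. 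In fact the lemma as stated is false: take $\theta=0.01$, $\epsilon=0.1$, $\hat\theta=\theta-\epsilon\sin\frac{\theta}{2}\approx 0.0095$; then $\frac{1}{\sin^2(\hat\theta/2)}-\frac{1}{\sin^2(\theta/2)}\approx 4.32\times 10^{3}$ while $\frac{\epsilon}{\sin^2(\theta/2)}\approx 4.00\times 10^{3}$. More generally, as $\theta\to 0$ with $\hat\theta=\theta-\epsilon\sin\frac{\theta}{2}$, the left-hand side tends to $\frac{\epsilon}{\sin^2(\theta/2)}\cdot\frac{1-\epsilon/4}{(1-\epsilon/2)^2}$, and the last factor equals $1+\frac{3\epsilon}{4}+O(\epsilon^2)>1$. (Only the direction $\hat\theta<\theta$ is problematic; for $\hat\theta>\theta$ the stated bound does hold.)

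It is worth knowing that the paper's proof goes wrong at precisely the step you could not close: it divides by $\sigma^4\lr{1+\epsilon+\epsilon^2/24}$, which amounts to the lower bound $\sin^2\frac{\hat\theta}{2}\ge\sigma^2\lr{1+\epsilon+\epsilon^2/24}$ --- the wrong direction. The correct lower bound is $\sin^2\frac{\hat\theta}{2}\ge\sigma^2(1-\epsilon/2)^2$, with which the chain of inequalities no longer yields $\epsilon/\sigma^2$. The right repair is the one your computation actually delivers: weaken the conclusion to $\frac{\epsilon(1+\epsilon/2)}{(1-\epsilon/2)^2\sin^2(\theta/2)}$, which is $\frac{(1+O(\epsilon))\,\epsilon}{\sin^2(\theta/2)}$ for $\epsilon$ bounded away from $2$. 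This costs only constants downstream: in Lemma \ref{lem:theta} one invokes the corrected lemma with $\epsilon$ a suitably smaller constant multiple of $\delta$ (equivalently, shrinks $\delta_{min}$ in Algorithm \ref{alg:dag} by a constant factor), and the conclusion $(1-\delta)T\le \frac{1}{\alpha^2\sin^2(\hat\theta/2)}\le(1+\delta)T$ survives. So complete your write-up with the weaker constant rather than chasing the constant $1$, which is not attainable.
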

	
	\begin{proof}
		Let $ \sigma  = \sin \frac{\theta}{2}$, then $    \lrv{\theta - \hat \theta}  \leq \epsilon  \sigma$ and we have to show that
		\[ 
		\lrv{\frac{1}{\sin^2 \frac{\hat \theta }{2}} - \frac{1}{\sin^2 \frac{\theta }{2}} } \leq 
		\frac{ {6}\epsilon}{\sigma^2}.
		\]
		
		Since the $\sin$ function is Lipschitz continuous with Lipschitz constant 1,  we have
		\[ 
		\lrv{\sin \frac{\hat \theta}{2} - \sin \frac{\theta}{2}} \leq   \frac{\epsilon \sigma}{2}
		\]
		and
		$ {\ensuremath{ \sigma \lr{ 1- \frac{\epsilon}{2} }  \leq  }} \sin \frac{\hat \theta}{2} \leq  \sigma \lr{ 1+ \frac{\epsilon}{2} }$.
		On the other hand,
		\[ 
		\lrv{\sin^2 \frac{\hat \theta}{2} - \sin^2 \frac{\theta}{2}} 
		=
		\lrv{\sin \frac{\hat \theta}{2} - \sin \frac{\theta}{2}  }
		\lrv{\sin \frac{\hat \theta}{2} + \sin \frac{\theta}{2}  }
		\leq 
		\frac{\epsilon \sigma}{2} \cdot 
		2 \max \lrb{\sin \frac{\hat \theta}{2} , \sin \frac{\theta}{2}   }
		\leq \epsilon  \lr{1+ \frac{\epsilon}{2}}\sigma^2.
		\]
		We conclude that 
		\[ 
		\lrv{\frac{1}{\sin^2 \frac{\hat \theta }{2}} - \frac{1}{\sin^2 \frac{\theta }{2}} }
		=
		\frac{\lrv{\sin^2 \frac{\hat \theta}{2} - \sin^2 \frac{\theta}{2}} }{ \sin^2 \frac{\hat \theta}{2} \cdot \sin^2 \frac{\theta}{2} }
		\leq
		\frac{ \epsilon  \lr{1+ \frac{\epsilon}{2}}\sigma^2}{\sigma^4 \lr{1   -\epsilon + \epsilon^2/4}}
		\leq
		\frac{\epsilon (1 + \epsilon/2)}{\sigma^2 \lr{1 -\epsilon + \epsilon^2/4}}
		< 
		\frac{6\epsilon}{\sigma^2},
		\]
  {where  the last inequality is due to $2x(1+x)/(1-x)^2 < 12x$, which is valid for all $x \in (0,1/2)$, and, in particular, for $x=\epsilon/2 $.
  }
	\end{proof}

	From Lemma \ref{th:lemma2} it follows that $ \sigma :=   \sin \frac{\theta}{2}$  satisfies
	\begin{equation}\label{eq:cor1_eq1}
	\alpha^2 T \leq    \sigma^{-2} \leq    (\alpha^2 + n) T.
	\end{equation}
	Notice that
	\[ 
	\lrv{\theta - \hat \theta}
	\leq \frac{\delta^{1.5}}{{24} \sqrt { 3nT }} 
	=
	\frac{\delta}{{24} } \cdot \sqrt{\frac{\delta}{3}}     \cdot \frac{1}{\sqrt{n T}}
	<
	\frac{\delta}{{24} } \cdot \sqrt{\frac{\delta}{2 + \delta}}     \cdot \frac{1}{\sqrt{n T}}
	=
	\frac{\delta}{{24}  \sqrt{ \lr { \frac{2n}{\delta} + n } T} }  .
\]
Since $\frac{2n}{\delta}=\alpha^2$,    the RHS of the last equality is upper-bounded by  \eqref{eq:cor1_eq1} as $  \frac{\delta }{24 \sqrt{( \alpha ^2 + n)T}} \leq   \frac{\delta\sigma}{24}$. An application of Lemma \ref{th:lemma14} (with $\epsilon=\delta/24 < 1$) now gives
	\[ 
	\lrv{\frac{1}{\sin^2 \frac{\hat \theta }{2}} -  \sigma^{-2}} 
	\leq 
	\frac{\delta\sigma^{-2}}{4}   
	\]
	or
	\[ 
	\sigma^{-2} \lr{1  - \frac{\delta}{4}}
	\leq
	\frac{1}{\sin^2 \frac{\hat \theta }{2}} 
	\leq 
	\sigma^{-2} \lr{1+ \frac{\delta}{4}}.
	\]
	From    \eqref{eq:cor1_eq1}  it follows that
	\[ 
	\lr{1  - \frac{\delta}{4}} \alpha^2 T
	\leq  
	\frac{1}{\sin^2 \frac{\hat \theta }{2}} 
	\leq 
	\lr{1+ \frac{\delta}{4}} (\alpha^2 + n) T.
	\]
	Consequently, 
	\[ 
	\lr{1  - \frac{\delta}{4}}   T
	\leq  
	\frac{1}{\alpha^2 \sin^2 \frac{\hat \theta }{2}} 
	\leq 
	\lr{1+ \frac{\delta}{4}}  \lr{1  + \frac{n}{\alpha^2}}  T
	=
	\lr{1+ \frac{\delta}{4}}  \lr{1  + \frac{\delta}{2}}  T.
	\]
	It remains to notice that
	\[ 
	\lr{1+ \frac{\delta}{4}}  \lr{1  + \frac{\delta}{2}}
	=
	1 + \frac{3}{4} \delta + \frac{\delta^2}{8} < 1 + \delta.
	\]
	Hence
	\[ 
	\lr{1  - \frac{\delta}{4}}   T
	\leq  
	\frac{1}{\alpha^2 \sin^2 \frac{\hat \theta }{2}} 
	\leq 
	\lr{1 +  \delta }    T,
	\]
	and the claim follows.
	
\end{proof}

\subsection{Harmonic functions and electric networks}
\label{sec:harmonic}

The next five subsections are devoted to the proof of Lemma \ref{th:lemma2}.
We start with the concept of harmonic functions which is linked to the connection between electric networks and random walks.  More details on the subject can be found in \cite[Sec. 4]{Lovasz1993} (in the case of a simple random walk which is the framework we use below), \cite[Lect. 9]{Lawler1999},  \cite[Chap. 2]{LP:book}  and  \cite[Chap. 9]{Levin2009} (in a more general setting with weighted graphs).

Throughout the rest of this subsection suppose that $ \mb P $ is a transition matrix of an irreducible Markov chain with a state space $ \Omega $.

\begin{definition}
	Consider a function $ f : \Omega  \to \mbb R $. We say that  $ f $ is \textit{harmonic} for $ \mb P $ at $ x \in \Omega $ (or simply ``harmonic at  $ x $'') if 
	\cite[Eq. 1.28]{Levin2009}
	\[ 
	f(x) = \sum_{ y \in \Omega}  \mb P[x,y] f(y) ,
	\]
	i.e., $ f  $ has the \textit{averaging property} at $ x $.
	If $ f $ is harmonic for every $ x \in  \Omega' \subset \Omega$, then  $ f $ is said to be harmonic on the subset $ \Omega' $.
\end{definition}

It can be easily seen that a linear combination of harmonic functions is still harmonic. In particular, all constant functions are harmonic on $ \Omega $.

  It is known that a harmonic (on $\Omega' $) function attains its maximal and minimal values (in the set $\Omega  $) on     $ \Omega \setminus  \Omega' $ (it appears as an exercise in \cite[Exercise 1.12]{Levin2009}; in the context of weighted random walks it can be found in \cite[Lemma 3.1]{Pemantle1995}).
\begin{lemma}[{Maximum Principle}]\label{th:maxPr}
	Let $ f $ be harmonic for $ \mb P $ on $ \Omega' \subset \Omega $, with $  \Omega \setminus  \Omega' \neq \emptyset $. Then there exists $ x \in  \Omega \setminus \Omega' $ s.t. $ f(x) \geq f(y) $ for all $ y \in \Omega' $.
\end{lemma}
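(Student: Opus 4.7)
The plan is a standard propagation-via-irreducibility argument. Assuming $\Omega$ is finite (which is the regime used in the application; the general case follows by the same argument applied to a maximizing sequence and a supremum), let $M := \max_{y \in \Omega'} f(y)$ and fix a maximizer $y_0 \in \Omega'$. The goal is to produce $x \in \Omega \setminus \Omega'$ with $f(x) \geq M$.

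The first step is a base-case dichotomy at $y_0$. Applying harmonicity,
\[ M = f(y_0) = \sum_{z \in \Omega'} \mb P[y_0,z]\, f(z) + \sum_{z \in \Omega \setminus \Omega'} \mb P[y_0,z]\, f(z). \]
Either some one-step successor $z \in \Omega \setminus \Omega'$ (with $\mb P[y_0,z] > 0$) already satisfies $f(z) \geq M$, in which case we set $x := z$ and are done, or every such $z$ has $f(z) < M$. In the latter case, combined with $f(z) \leq M$ on $\Omega'$ and the fact that the weights $\mb P[y_0, z]$ sum to $1$, the averaging identity can hold only if $\mb P[y_0, z] = 0$ for all $z \in \Omega \setminus \Omega'$ and $f(z) = M$ for every $z \in \Omega'$ with $\mb P[y_0, z] > 0$. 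In other words, every chain-successor of $y_0$ lies in the level set $S := \{y \in \Omega' : f(y) = M\}$.

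Next I would iterate this dichotomy at every $z \in S$, which is legitimate because $f$ is harmonic on all of $\Omega'$. At each iteration, either we already produce a boundary point with $f \geq M$, or every one-step successor again lies in $S$. Propagating indefinitely would mean the entire set of states reachable from $y_0$ stays inside $S \subset \Omega'$, contradicting irreducibility whenever $\Omega \setminus \Omega' \neq \emptyset$. Concretely, pick any $x^* \in \Omega \setminus \Omega'$ and a chain-compatible path $y_0 = z_0 \to z_1 \to \cdots \to z_k = x^*$ with each $\mb P[z_i, z_{i+1}] > 0$; let $i$ be the smallest index such that $z_{i+1} \in \Omega \setminus \Omega'$. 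Then induction along $z_0, \ldots, z_i$ forces $z_i \in S$, and the base-case dichotomy applied at $z_i$ in place of $y_0$ yields $f(z_{i+1}) \geq M$, giving the desired $x = z_{i+1}$. The main subtlety is exactly this last step — ruling out that the propagation might stay trapped inside $\Omega'$ forever — which is precisely where the assumptions of irreducibility and nonempty complement are used.
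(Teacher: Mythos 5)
Your argument is correct in the finite-state setting, which is the only setting the paper actually uses; note, though, that the paper itself does not prove this lemma at all --- it cites it as a known fact (an exercise in Levin--Peres--Wilmer and a lemma of Pemantle), so your proposal supplies a self-contained proof where the paper defers to the literature. The propagation-through-the-level-set argument (take a maximizer $y_0$, show via the averaging identity that either a witness in $\Omega \setminus \Omega'$ appears immediately or all positive-probability successors stay in the level set $S$, then use irreducibility to follow a path out of $\Omega'$) is exactly the standard proof and is sound. Two small caveats. First, your parenthetical claim that ``the general case follows by the same argument applied to a maximizing sequence and a supremum'' is false as stated: for infinite $\Omega$ the maximum principle in this form can fail outright (e.g.\ simple random walk on $\mathbb{Z}$ with $f(k)=k$, which is harmonic everywhere, and $\Omega' = \mathbb{Z}\setminus\{0\}$); fortunately the lemma is only applied to finite graphs, so this does not matter. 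Second, at the last step the dichotomy applied at $z_i$ does not literally yield $f(z_{i+1}) \geq M$; it yields \emph{some} one-step successor $z \in \Omega\setminus\Omega'$ of $z_i$ with $\mb P[z_i,z]>0$ and $f(z)\geq M$ (since the alternative branch would force $\mb P[z_i,z]=0$ for all $z\notin\Omega'$, contradicting $\mb P[z_i,z_{i+1}]>0$). That successor serves as the required $x$ even if it is not $z_{i+1}$, so the conclusion stands.
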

By applying the Maximum Principle for $ -f $, one obtains a similar statement for the minimal values of $ f $. In particular, if the function $ f $ is harmonic on $ \Omega' $ and constant on the  set $  \Omega \setminus \Omega' $, then $ f $ is constant. A consequence of this is the Uniqueness Principle:  if $ f $ and $ g  $ are harmonic on $ \Omega'$ and $ f \equiv g  $ on $  \Omega \setminus \Omega' $, then $ f \equiv g $ on $ \Omega  $. Moreover, when $ f $ is  harmonic everywhere on $ \Omega $, it is a constant function.

Further, suppose that $ \mb P $ is a simple random walk on a finite connected  undirected graph $ \mc G =(\Omega,\mc E)$,  in the sense that
\[ 
\mb P[x,y] =   \frac{\mbb 1_{\lrb{x \sim y}}}{d(x)}, \quad d(x) :=  \lrv {\vbl \lrb{y \in \Omega  \ \vline\  x \sim y}} ,
\]
where $ \mbb 1 $ stands for the indicator function.
Then the Markov chain, corresponding to $ \mb P $, is time-reversible; the  graph $ \mc G $ can be viewed as an electric network where each edge has unit conductance
(this approach can be further  generalized to  weighted graphs where the weight  $ c(e) $ of an edge  $ e \in \mc E $  is  referred to as  {conductance} in the electric network theory, whereas its reciprocal  $ r(e)  = c(e)^{-1} $ is called  {resistance}). 

For a subset $ \Omega' \subset \Omega$  we call the   set $ \lrb{ x \in  \Omega \setminus \Omega'   \ \vline\   \exists y \in \Omega' :  x \sim y} $  the boundary of $ \Omega' $ and denote  by $ \partial \Omega' $.  
 The Maximum Principle can be strengthened as follows:
\begingroup
\def\thelemmaPrime{\ref*{th:maxPr}'}
\begin{lemmaPrime}
	Let $ f $ be harmonic for $ \mb P $ on $ \Omega' \subset \Omega $, with $ \partial \Omega' \neq \emptyset $. Then there exists $ x \in \partial \Omega' $ s.t. $ f(x) \geq f(y) $ for all $ y \in \Omega' $.
\end{lemmaPrime}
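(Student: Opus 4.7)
The plan is to argue by contradiction, using the averaging property to propagate the maximum value through the graph, and then contradict non-emptiness of $ \partial \Omega' $ via connectedness of $ \mc G $. The strengthening over Lemma \ref{th:maxPr} is that the maximizer can be forced to lie in the neighbor-shell $ \partial \Omega' $ rather than somewhere arbitrary in $ \Omega \setminus \Omega' $, and this is exactly the sort of conclusion one gets from the standard discrete strong maximum principle for harmonic functions on a connected graph.

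Since $ \Omega $ is finite, let $ M = \max_{y \in \Omega'} f(y) $, and suppose for contradiction that $ f(x) < M $ for every $ x \in \partial \Omega' $. Define $ S = \lrb{y \in \Omega' : f(y) = M} $, which is nonempty by the choice of $ M $. First I would show that $ S $ is closed under taking neighbors in $ \mc G $. For any $ y \in S $, harmonicity at $ y $ gives $ M = f(y) = \sum_z \mb P[y,z] f(z) $, where the sum runs over neighbors $ z $ of $ y $ in $ \mc G $ and every coefficient $ \mb P[y,z] = 1/d(y) $ is strictly positive. Every such neighbor $ z $ lies either in $ \Omega' $ (so $ f(z) \leq M $) or in $ \partial \Omega' $ (so $ f(z) < M $ by the contradiction hypothesis). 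Since a convex combination with strictly positive weights of values bounded above by $ M $ can equal $ M $ only when every value already equals $ M $, each neighbor $ z $ of $ y $ must satisfy $ f(z) = M $. In particular no neighbor of $ y $ sits in $ \partial \Omega' $, so all neighbors lie in $ \Omega' $ and in fact in $ S $.

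Next I would invoke connectedness of $ \mc G $: starting from any $ y_0 \in S $ and repeatedly including neighbors (which stay in $ S $ by the previous step), we exhaust the connected component of $ y_0 $ in $ \mc G $, i.e.\ all of $ \Omega $. Therefore $ \Omega = S \subseteq \Omega' $, which forces $ \Omega \setminus \Omega' = \emptyset $ and hence $ \partial \Omega' = \emptyset $, contradicting the hypothesis $ \partial \Omega' \neq \emptyset $. This contradiction yields the desired $ x \in \partial \Omega' $ with $ f(x) \geq M \geq f(y) $ for all $ y \in \Omega' $.

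The only delicate point is the closure-under-neighbors step, which crucially uses that in a simple random walk every transition probability between adjacent vertices is strictly positive; without strict positivity one would only be able to conclude $ f = M $ on the support of $ \mb P[y,\cdot] $ rather than on the full neighborhood, and the propagation argument would break. The rest is a standard finite-graph connectedness sweep and should not present any technical obstacle.
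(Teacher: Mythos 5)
Your proof is correct: the contradiction hypothesis, the closure of the maximizer set $S$ under taking neighbors (using strict positivity of $\mb P[y,z]$ for $y\sim z$), and the connectedness sweep together give exactly the strong maximum principle needed here. The paper itself states this strengthened version without proof (it only cites standard references for the weaker Lemma \ref{th:maxPr}), and your argument is precisely the standard one that justifies it, so there is nothing to fault.
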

\endgroup
Similarly, 
\begin{itemize}
	\item  if the function $ f $ is harmonic on $ \Omega' $ and constant on the boundary $ \partial \Omega' $ (for example, when the boundary is  a singleton), then $ f $ is constant on $ \Omega' \cup \partial \Omega' $. 
	\item if $ f $ and $ g  $ are harmonic on $ \Omega'$ and $ f \equiv g  $ on $ \partial \Omega' $, then $ f \equiv g $ on $ \Omega' \cup \partial \Omega' $.
\end{itemize}

Let $ s, t \in \Omega $ be two different vertices of the graph $ \mc G $ and consider the unique function $ {\phi_{st}  : \Omega \to \mbb R} $, which 
\begin{itemize}
	\item is harmonic on $ \Omega \setminus \lrb{s,t} $,
	\item satisfies boundary conditions $ \phi_{st}(s)  = 1 $, $ \phi_{st}(t) = 0 $.
\end{itemize}
From the Maximum Principle it follows that values of $ \phi_{st} $ are between 0 and 1.
In the electric network theory, the function $ \phi_{st}(u) $, $ u\in\Omega $, is interpreted as the voltage of $ u $, if we put current through the electric network associated to $ \mc G $, where the voltage of $ s $ is 1  and the voltage of $ t $ is 0 (see \cite[p. 22]{Lovasz1993} or \cite[p. 60]{Lawler1999}; in the latter $ \phi_{st} $ is denoted by $ \tilde V $). For the random walk with the transition matrix $ \mb P $, the value $ \phi_{st}(u) $ is the probability that  the random walk  starting at $ u$ visits $ s $ before $ t $.

Consider the quantity
\[ 
R_{st}  =    \lr{\sum_{u: u \sim t}  \phi_{st}(u)   }^{-1}.
\]
The electrical connection is that this quantity, called the \textit{effective resistance} (or simply resistance  between $ s $ and $ t $ in \cite{Lovasz1993}), is the voltage difference that would result between $ s $ and $ t $  if one unit of current was driven between them. On the other hand, it is linked to the ``escape probabilities'', because $   \lr{d(t) R_{st}}^{-1}  $ is the probability that a random walk, starting at $ t $, visits $ s $ before  returning back to $ t $ \cite[p. 61]{Lawler1999}.

An important result is the Rayleigh's Monotonicity Law \cite[Theorem 9.12]{Levin2009}, from which it follows that adding edges in the graph cannot increase the effective resistance \cite[Corollary 4.3]{Lovasz1993}, \cite[Corollary 9.13]{Levin2009}. More precisely, if we add another edge in $ \mc G $, obtaining a graph $ \mc G'  = (\Omega, \mc E') $, and denote by $ R_{st}' $ the effective resistance between vertices $ s $ and $ t $, then $ R_{st} \geq R_{st}' $. 

More generally, if we consider the same graph $ \mc G $, but with different weights (or conductances)  $ c(x,y) $ and $ c'(x,y) $, satisfying $ c(x,y) \leq c'(x,y) $ for all $ x,y \in \Omega $,
then the Monotonicity Law says that the effective resistances satisfy  the opposite inequality $ R_{st} \geq R_{st}' $ for all distinct $ s,t \in \Omega $.

We can view adding an edge as increasing its weight from 0 to 1, hence the claim about edge adding.

\subsection{Extended DAG and an absorbing random walk}

Let $ \mc G $ and $ \mc G' $  be as defined in Section \ref{sec:GSE}.
In this Section, we define an absorbing random walk on $\mc G''$, a slightly extended version of $\mc G'$.

Let $ \Gamma $  denote the adjacency matrix of the weighted graph   $ \G' $.
We denote $ \beta = \lr{d_1 \alpha^2 + 1}^{-1} $.

We introduce another  vertex  $\addVV $ and connect  the vertex $ \addV $ with an edge $ \addEE = (\addVV, \addV) $. Let  $ \complVV = \complV \cup \lrb{\addVV} $, $ \complEE= \complE \cup \lrb{\addEE} $ and $ \complGG = \lr{\complVV,\complEE} $.

Finally, let all  edges in the original DAG $ \G $ have weight 1, but the two additional  edges  have the following weights:
\begin{itemize}
	\item the edge $ \addE $ has weight $ \alpha^{-2}  $; 
	\item the edge $ \addEE $ has weight $ d_1  $.
\end{itemize}

For any two vertices $ v_i ,v_j$,  $i,j \in  [\noTot+2] $, we denote $ v_i \sim v_j$ if there is an edge between $ v_i $ and $ v_j $ in the DAG $ \G'' $.  For each   $ i\in  [\noTot+2] $  we denote by $ d''(i) $ the degree of the vertex $ v_i $  in the weighted graph $ \complGG $. Then
\begin{itemize}
	\item $ d''(1) = d''(\noTot+1) =d_1 +  \alpha^{-2}  $;
	\item  $ d''(i) $ for each $ i\in [2 \tdots \noTot] $ equals $ d_i $, i.e., the degree of $ v_i $ in  the unweighted graph $ \G $;
	\item $ d''(\noTot+2)  = d_1$.
\end{itemize}
Notice that
\[ 
a[i] = \sqrt{d''(i)},\ i \in [A],
\quad \text{and} \quad
b[j]= \sqrt{d''(A+j)},\ j \in [B],
 \]
for vectors  $ a $ and $ b $  defined with \eqref{eq:vec_ab_def}.

Consider a  random walk on the graph $ \complGG $ with transition probabilities as follows:
\begin{itemize}
	\item when at the vertex $ \addVV $, with probability 1 stay  at $ \addVV $;
	\item when at the vertex $ \addV $, with probability  $ 1-\beta $ move to $ \addVV  $  and with probability $ \beta $ move to $ v_1 $;
	\item when at the vertex $ v_1 $, with probability  $ \beta $ move to $ \addV $  and with probability $ \frac{1-\beta}{d_1} = \beta\alpha^2 $ move to any $  v \in \nhood(v_1)$  (i.e., to any neighbor of the root, different from $ \addV $);
	\item at any vertex $ v_i $, $ i \in [2 \tdots \noTot] $, with probability $ \frac{1}{d''(i)} $ move to any neighbor of $ v_i $.
\end{itemize}
In other words, at any vertex we move to any its neighbors with probability proportional to the weight of the edge, except for the vertex $\addVV$, where we stay  with probability 1. Moreover, this random walk ignores edge direction, i.e., we can go from a vertex of depth $ l $ to a vertex of depth $ l-1 $.

This way an absorbing random walk is defined;  let $ \lrb{Y_k}_{k=0}^\infty $ be the corresponding sequence of random variables, where $ Y_k = j \in [\noTot+2] $ if after $ k $ steps the random walk is  at the vertex $v_j $ (i.e., this sequence is the Markov chain, associated to the absorbing random walk).

Let $ P $ be the transition matrix for this walk; it has the following block structure:
\[ 
P  =   
\begin{pmatrix}
 & & &   &        & & 0 \\
& & & &       & & 0 \\
& &Q & &    &  & 0 \\
& & & &  & & \vdots \\
& & & &   &  &1-\beta \\
0&0 &0 &0 & \ldots  & 0& 1 \\
\end{pmatrix},
\] 
where $ Q $ is a matrix of size $ (\noTot+1) \times (\noTot+1) $ which describes the probability of moving from some transient vertex to another.

Define a $  (\noTot+1) \times  1$ vector $ \vecP $ as follows:
\[ 
\vecP [ i] = a[i] = \sqrt{d''(i)}, \ i \in [A], \quad \vecP[A+j] = b[j] = \sqrt{d''(A+j)},\   j\in [B],  
 \]
 and $ \vecP[V+1]  = \sqrt{d''(V+1)} = \vecP[1] $.
Then  $  \diag(\vecP)^2 Q  $ is the adjacency matrix $ \Gamma $ for the graph $ \complG$.

Let 
\[ 
N  = \mb I_{V+1} + Q  + Q^2 + Q^3 + \ldots =  \lr{\mb I_{\noTot+1} - Q}^{-1}
\]
be the fundamental matrix of this walk. An entry of the fundamental matrix $ N[i,j] $, $ i,j \in [V+1] $, equals the expected expected number of visits to the vertex $ j $ starting from the vertex $ i $, before being absorbed, i.e.,
\[ 
N[i,j] = \mbb E \lrk{ \sum_{k=0}^{\infty} \mbb 1_{\lrb{Y_k=j}} \ \vline\   Y_0 = i   }.
\]

Notice that $ NQ = QN =Q  + Q^2 + Q^3 + \ldots =  N  -\mb  I_{V+1}  $. 
It follows that
\begin{equation}\label{eq:qn=nq}
N[i,j] =
\sum_{l=1}^{V+1} N[i,l] Q[l,j] + \delta_{ij}
=
\sum_{l=1}^{V+1} Q[i,l] N[l,j] + \delta_{ij}
\quad \text{ for all } i,j \in [V+1],
\end{equation}
where by $ \delta_{ij}  $ we denote the Kronecker symbol.

\subsection{Entries of the fundamental matrix}

The purpose of this section is to obtain expressions for entries of the fundamental matrix $N$.
In the next subsection, we will relate those entries to entries of the matrix $K$ from Lemma \ref{th:lemma2}.
This will allow us to complete the proof of Lemma \ref{th:lemma2}.

\begin{lemma}\label{th:L1} 
	\[ 
	N[1,\noTot+1] = N[\noTot+1,1] =N[\noTot+1,\noTot+1] =  \frac{1}{1 - \beta} , \quad N[1,1]  = \frac{1}{\beta(1-\beta)}.
	\]
\end{lemma}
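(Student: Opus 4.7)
The proof plan rests on two structural facts about the walk on $\complGG$. First, $\addVV$ is the unique absorbing state and the only neighbour of $\addVV$ in $\complGG$ is $\addV$, so any trajectory that is eventually absorbed must be at $\addV$ at the step just before absorption. Second, because the chain has a single recurrent class $\{\addVV\}$ and finitely many transient states, absorption occurs with probability one from every transient start. Together these imply that, started at $v_1$ or at any vertex of $\vertSet$, the walk almost surely visits $\addV$ before absorption; and started at $\addV$, if the first step lands at $v_1$ (probability $\beta$), then a subsequent return to $\addV$ is almost sure.

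Given these observations, all four entries reduce to elementary geometric computations. The return probability from $\addV$ to itself is $\beta\cdot 1 = \beta$, so $N[\addV,\addV]=1/(1-\beta)$. Started at $v_1$, the walk reaches $\addV$ with probability $1$, and each visit to $\addV$ is either followed by absorption (probability $1-\beta$) or by a further visit to $\addV$; thus the number of visits is geometric with mean $1/(1-\beta)$, giving $N[1,\addV]=1/(1-\beta)$. For $N[\addV,1]$ I would invoke the reversibility identity $d''(i)\,N[i,j]=d''(j)\,N[j,i]$, which follows from $d''(i)\,Q[i,j]=d''(j)\,Q[j,i]$ (edge weights on $\complGG$ are symmetric) propagated to all powers of $Q$ and hence to $N=(I-Q)^{-1}$; since $d''(1)=d''(\addV)=d_1+\alpha^{-2}$, this yields $N[\addV,1]=N[1,\addV]=1/(1-\beta)$. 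Finally, $N[1,1]=(1-R_1)^{-1}$ where $R_1$ is the return probability from $v_1$: with probability $1-\beta$ the first step enters $\vertSet$ and is followed by a certain return to $v_1$, while with probability $\beta$ it lands at $\addV$, from which $v_1$ is revisited only if the next step is $\addV\to v_1$ (probability $\beta$). Hence $R_1=(1-\beta)+\beta^2$, so $1-R_1=\beta(1-\beta)$ and $N[1,1]=1/(\beta(1-\beta))$.

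The only step that needs actual justification (as opposed to direct computation) is the almost-sure recurrence claim in the first paragraph; everything else is routine. That claim follows immediately from the uniqueness of the edge incident to $\addVV$, combined with the standard absorbing-chain bound $Q^k\to 0$ on the transient sub-matrix, but it is worth stating explicitly so that the three geometric arguments for $N[\addV,\addV]$, $N[1,\addV]$ and $R_1$ all go through cleanly.
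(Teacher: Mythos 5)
Your proposal is correct, and it reaches the four entries by a genuinely different route for three of them. The paper computes only $N[\noTot+1,\noTot+1]$ probabilistically (essentially your geometric first-return argument, written out with indicator sums), and then extracts the remaining three entries purely algebraically from the identity $NQ=QN=N-\mb I_{\noTot+1}$ (equation \eqref{eq:qn=nq}) together with the fact that row and column $\noTot+1$ of $Q$ have a single nonzero entry $\beta$ in position $1$: this yields $N[\noTot+1,\noTot+1]=\beta N[1,\noTot+1]+1=\beta N[\noTot+1,1]+1$ and then $N[\noTot+1,1]=\beta N[1,1]$. You instead stay probabilistic throughout, computing $N[1,\noTot+1]$ and $N[1,1]$ by first-return decompositions and obtaining $N[\noTot+1,1]$ from the detailed-balance symmetry $d''(i)N[i,j]=d''(j)N[j,i]$ (which the paper also establishes, immediately after this lemma, when it shows that $\tilde N=N\diag(\vecP)^{-2}$ is symmetric). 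Both routes rest on the structural facts you isolate --- almost-sure absorption, and the fact that $\addVV$ and $\addV$ are reachable only through $\addV$ and $v_1$ respectively --- and the paper's computation of the first entry uses them just as implicitly as yours do, so you are right to flag them as the one point requiring explicit justification. The paper's algebraic continuation has the minor advantage of not re-invoking the recurrence claims for each entry; your version has the advantage that every entry gets a transparent one-line interpretation. Your arithmetic checks out: $R_1=(1-\beta)+\beta^2$ gives $1-R_1=\beta(1-\beta)$, and $d''(1)=d''(\noTot+1)=d_1+\alpha^{-2}$ makes the detailed-balance step for $N[\noTot+1,1]$ legitimate.
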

\begin{proof}
	We have
	\[ 
	N[\noTot+1,\noTot+1] = \mbb E \lrk{ \sum_{k=0}^{\infty} \mbb 1_{\lrb{Y_k=\noTot+1}} \ \vline\   Y_0 = \noTot+1}  .
	\]
	Notice that
	\[ 
	\sum_{k=0}^{\infty} \mbb 1_{\lrb{Y_k=\noTot+1}}
	=
	\sum_{k=0}^{\infty} \mbb 1_{\lrb{Y_k=\noTot+1 \text{ and } Y_{k+1} =1}} + 1,
	\]
	since from the vertex $ \addV $ one either moves to $ \addVV $ and gets absorbed or moves to $ v_1 $ and   returns back  to $ \addV $ later. 
	From the linearity of expectation it follows that
	\begin{align*}
		N[\noTot+1,\noTot+1] 
		&=
		1+ \sum_{k=0}^{\infty}\mbb E \lrk{  \mbb 1_{\lrb{Y_k=\noTot+1 \text{ and } Y_{k+1} =1}}   \ \vline\   Y_0 = \noTot+1 }
		\\
		& =1 +  \sum_{k=0}^{\infty} \mbb P \lrk{  Y_k=\noTot+1, Y_{k+1} =1   \ \vline\  Y_0 = \noTot+1  }.
	\end{align*}
	Since
	\begin{align*}
		& \mbb P \lrk{  Y_k=\noTot+1 , Y_{k+1} =1   \ \vline\  Y_0 = \noTot+1   }
		 \\
		& =
		\mbb  P \lrk{  Y_{k+1}=\noTot+1    \ \vline\   Y_{k} =\noTot+1, Y_0 = \noTot+1   }  
		\cdot 
		\mbb  P \lrk{  Y_k=\noTot+1    \ \vline\   Y_0 = \noTot+1} 
		\\
		&=
		\mbb  P \lrk{  Y_{k+1}=\noTot+1    \ \vline\   Y_{k} =\noTot+1  }  
		\cdot
		\mbb  P \lrk{  Y_k=\noTot+1    \ \vline\   Y_0 = \noTot+1 } 
		\\
		&=
		Q[\noTot+1,1]  \, \mbb  P \lrk{  Y_k=\noTot+1\ \vline\   Y_0 = \noTot+1 } ,
	\end{align*}
	we obtain
	\[ 
	N[\noTot+1,\noTot+1] =1+  \beta \sum_{k=0}^{\infty} \mbb P \lrk{  Y_k=\noTot+1    \ \vline\  Y_0 = \noTot+1 }.
	\]
	On the other hand,
	\[ 
	N[\noTot+1,\noTot+1] = \mbb E \lrk{ \sum_{k=0}^{\infty} \mbb 1_{\lrb{Y_k=\noTot+1}} \ \vline\   Y_0 = \noTot+1}  
	=
	\sum_{k=0}^{\infty} \mbb P \lrk{  Y_k=\noTot+1    \ \vline\  Y_0 = \noTot+1},
	\]
	hence
	\[ 
	N[\noTot+1,\noTot+1] = 1 + \beta N[\noTot+1,\noTot+1],
	\]
	from which the equality $ N[\noTot+1,\noTot+1] = (1-\beta)^{-1} $ follows.

	From \eqref{eq:qn=nq} it follows that
	\[ 
	N[\noTot+1,\noTot+1] =\sum_{l=1}^{\noTot+1} Q[\noTot+1,l] N[l,\noTot+1]  + 1 =\sum_{l=1}^{\noTot+1} N[\noTot+1,l] Q[l,\noTot+1]  + 1 .
	\]
	Since $ Q[l,\noTot+1]  $ and $ Q[\noTot+1,l] $ is nonzero only for $ l=1 $, we have
	\[ 
	N[\noTot+1,\noTot+1] = \beta N[1,\noTot+1] +1 =  \beta N[\noTot+1,1] +1 .
	\]
	From that we conclude 
	\[ 
	N[\noTot+1,1]  = N[1,\noTot+1] = \frac{1}{\beta} \lr{ N[\noTot+1,\noTot+1] -1 } = \frac{1}{1-\beta}.
	\]
	
	Finally, again from \eqref{eq:qn=nq} we obtain
	\[ 
	N[\noTot+1,1] =\sum_{l=1}^{\noTot+1} Q[\noTot+1,l] N[l,1]  = \beta N[1,1],
	\]
	thus $ N[1,1] = \lr{ \beta(1-\beta) }^{-1} $.
\end{proof}

Define the matrix $ \tilde N  = N \diag(\vecP)^{-2} $, then 
\[ 
\tilde N[i,j] =  \frac{N[i,j]}{d''(j)}, \quad  i, j \in [\noTot+1].
\]
We note that $ \tilde N $ is a symmetric matrix, since
\[ 
\tilde N =  \lr{\mb I_{\noTot+1} - Q}^{-1} \diag(\vecP)^{-2}   =  \lr{\diag(\vecP)^2 - \diag(\vecP)^2Q   }^{-1}  
\]
and $  \diag(\vecP)^2Q $ is symmetric.  Moreover, from the symmetry  we also have   $ d''(l) Q[l,j]  =   d''(j)Q[j,l] $ for   $ j,l \in [\noTot+1] $. Then,
since
\[ 
\sum_{l=1}^{\noTot+1}  N[i,l] Q[l,j]
=
\sum_{l=1}^{\noTot+1} \tilde N[i,l] d''(l) Q[l,j]
=
\sum_{l=1}^{\noTot+1}   \tilde N[i,l]   Q[j,l]  d''(j),
\]
we can rewrite
\eqref{eq:qn=nq} as
\begin{equation}\label{eq:qtn=tnq}
\tilde N[i,j] =
\sum_{l=1}^{\noTot+1} \tilde N[i,l] Q[j,l] +\frac{ \delta_{ij}}{d''(j)}
=
\sum_{l=1}^{\noTot+1} Q[i,l] \tilde N[l,j] + \frac{\delta_{ij}}{d''(j)}
\quad \text{ for all } i,j \in [\noTot+1].
\end{equation}
It follows that for all $ i \in [\noTot+1] $, the function  $ f_i  : [\noTot+1] \to \mbb R $   defined by
\[ 
f_i (l)  = \tilde N[i,l] = \tilde{N}[l,i], \quad l \in [\noTot+1],
\]
is harmonic on the set $  [\noTot] \setminus \lrb{i} $ (the function is well defined, since $ \tilde N[i,l] = \tilde{N}[l,i] $ due to the symmetry of $ \tilde N $). 

In particular, $ f_{\noTot+1} $  is harmonic on the set $ [\noTot] $, whose boundary is the singleton $ \lrb{\noTot+1} $. Hence $ f_{\noTot+1} $ is a constant function.
Similarly, $ f_1 $   is constant on $ [\noTot] $, because it is harmonic on  $  [2 \tdots \noTot] $, whose boundary is the singleton $ \lrb{1} $.

\begin{corollary}\label{th:cor4}
	For all $ i \in [\noTot+1] $ we have
	\[ 
	\tilde N[i,\noTot+1]  =    N[\noTot+1,i] = \frac{1}{d_1}
	\]
	and for all $ i \in [\noTot] $ we have
	\[ 
	\tilde N[i,1]  =    N[1,i] = \alpha^2+\frac{1}{d_1}.
	\]
\end{corollary}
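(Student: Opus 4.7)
The plan is to combine two ingredients already assembled just above the statement: the harmonic characterization (\ref{eq:qtn=tnq}) of the rows/columns of $\tilde N$ together with the strengthened Maximum Principle for the undirected weighted walk on $\complGG$, and the explicit diagonal values $N[1,1]$ and $N[\noTot+1,\noTot+1]$ furnished by Lemma \ref{th:L1}. The idea is to use the former to reduce $\tilde N[l,\noTot+1]$ and $\tilde N[l,1]$ to constants on appropriate sets, and then use the latter to identify those constants.

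First, I would reduce to diagonal values as follows. For the first identity, consider $f_{\noTot+1}(l)=\tilde N[l,\noTot+1]$; by (\ref{eq:qtn=tnq}) this function is harmonic on $[\noTot]=[\noTot+1]\setminus\{\noTot+1\}$, and the boundary of $[\noTot]$ in $\complGG$ is precisely the singleton $\{\noTot+1\}$, because $\addVV$'s only neighbor is $\addV$ (so $\addVV$ is not in $\partial[\noTot]$) while $\addV$ is adjacent to $v_1\in[\noTot]$ via the edge $\addE$. The strengthened Maximum Principle then forces $f_{\noTot+1}$ to equal $\tilde N[\noTot+1,\noTot+1]$ on all of $[\noTot+1]$. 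For the second identity, $f_1(l)=\tilde N[l,1]$ is harmonic on $[2\tdots\noTot]$, and the key observation is that $\addV$ is \emph{not} adjacent to any vertex of $[2\tdots\noTot]$ — its only neighbors in $\complGG$ being $v_1$ and $\addVV$ — so the boundary of $[2\tdots\noTot]$ reduces to the singleton $\{1\}$, and $f_1$ is constant on $[\noTot]$ equal to $\tilde N[1,1]$.

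Second, I would evaluate the two diagonal values. Lemma \ref{th:L1} gives $N[\noTot+1,\noTot+1]=1/(1-\beta)$ and $N[1,1]=1/(\beta(1-\beta))$; using $d''(1)=d''(\noTot+1)=d_1+\alpha^{-2}=1/(\alpha^2\beta)$ and the arithmetic identity $1-\beta=d_1\alpha^2\beta$ (both direct consequences of $\beta=(d_1\alpha^2+1)^{-1}$), a short calculation yields
\[
\tilde N[\noTot+1,\noTot+1]=\frac{\alpha^2\beta}{1-\beta}=\frac{1}{d_1},\qquad \tilde N[1,1]=\frac{\alpha^2}{1-\beta}=\frac{1}{d_1\beta}=\alpha^2+\frac{1}{d_1}.
\]
Combining with the previous step (and using the symmetry of $\tilde N$ together with $N[i,j]=d''(j)\tilde N[i,j]$ to pass between entries of $N$ and $\tilde N$ as needed) yields both displayed identities of the corollary.

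The only step requiring real care is identifying the correct boundary set in the second identity: one must recognize that $\addV$ is not on the boundary of $[2\tdots\noTot]$, which depends on the fact that the auxiliary edge $\addE$ touches only the root $v_1$. Once this is granted, the diagonal computation collapses to a one-line manipulation of the definition of $\beta$, and there is no further obstacle.
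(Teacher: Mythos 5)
Your proposal is correct and follows essentially the same route as the paper: the paper likewise observes (just before the corollary) that $f_{\noTot+1}$ is harmonic on $[\noTot]$ with boundary $\lrb{\noTot+1}$ and $f_1$ is harmonic on $[2\tdots\noTot]$ with boundary $\lrb{1}$, hence both are constant, and then evaluates the constants $\tilde N[\noTot+1,\noTot+1]=1/d_1$ and $\tilde N[1,1]=\alpha^2+1/d_1$ from Lemma \ref{th:L1} exactly as you do. Your arithmetic with $\beta=(d_1\alpha^2+1)^{-1}$ checks out, and your extra remark that $\addV$ has no neighbor in $[2\tdots\noTot]$ is the right justification for the boundary identification that the paper states without comment.
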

\begin{proof}
	We already concluded that $ f_{\noTot+1} $ is a constant function, i.e., the value
	\[ 
	f_{\noTot+1}(j) = \tilde N[\noTot+1,j] = \tilde N[j,\noTot+1] 
	\]
	does not depend on $ j $. By Lemma \ref{th:L1},
	\[ 
	\tilde N[\noTot+1,\noTot+1] = \frac{1}{d''(\noTot+1)} N[\noTot+1,\noTot+1] = \frac{1}{d_1  + \alpha^{-2}}  \cdot  \frac{d_1 \alpha^2+1}{d_1 \alpha^2} = \frac{1}{d_1},
	\]
	and the first  claim follows.
	
	The other claim follows from the fact that $ f_1 $   is constant on $ [\noTot] $ and 
	\[ 
	f_1(1) =  \frac{1}{d_1  + \alpha^{-2}}  N[1,1]=  \frac{1}{d_1  + \alpha^{-2}}   \cdot    \frac{ \lr{d_1 \alpha^2+1}^2}{d_1 \alpha^2} = \alpha^2 + \frac{1}{d_1}.
	\]
\end{proof}

It remains to describe the values of $ \tilde N[i,j] $ when $ 2 \leq i, j \leq \noTot$. For each $ i\in [2 \tdots \noTot] $ define  $ \phi_i $ to be the unique function which
\begin{itemize}
	\item is harmonic on $ [2 \tdots \noTot] \setminus \lrb i $,
	\item satisfies $ \phi(1) = 1 $, $ \phi(i) =0$.
\end{itemize}
Let $ R : [2 \tdots \noTot]  \to \mbb R$ be defined by
\[ 
R(i) =      \lr{\sum_{j: v_j \sim v_i}  \phi_i(j)   }^{-1}.
\]

\begin{lemma}\label{th:tN_general}
	For all $ i \in [2 \tdots \noTot] $, $ j \in [\noTot] $, it holds that
	\[ 
	\tilde N[i,j]  = 
	\alpha^2  + \frac{1}{d_1} + \lr{1 - \phi_i(j)} R(i).
	\]
\end{lemma}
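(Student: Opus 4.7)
My plan is to apply the Uniqueness Principle for harmonic functions on the weighted graph $\complG$ (unit conductances on edges of $\edgeSet$, conductance $\alpha^{-2}$ on $\addE$). By \eqref{eq:qtn=tnq}, the function $g(j) := \tilde N[i, j]$ is harmonic on $[2 \tdots \noTot] \setminus \{i\}$: for any such $j$ the vertex $v_j$ is not adjacent to $\addV$, so the row $Q[j, \cdot]$ coincides with the transition row of the simple random walk on $\complG$ at $v_j$. The right-hand side
\[
h(j) \;:=\; \alpha^2 + \tfrac{1}{d_1} + \bigl(1 - \phi_i(j)\bigr)\, R(i),
\]
being an affine combination of $\phi_i$ and constants, is harmonic on the same set. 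The boundary of $[2 \tdots \noTot] \setminus \{i\}$ in $[\noTot]$ is $\{1, i\}$, so it suffices to match $g$ and $h$ at those two vertices.

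At $j = 1$, $\phi_i(1) = 1$ gives $h(1) = \alpha^2 + \tfrac{1}{d_1}$, which equals $\tilde N[i, 1]$ by Corollary~\ref{th:cor4}. The remaining, main step is the diagonal identity $\tilde N[i, i] = \alpha^2 + \tfrac{1}{d_1} + R(i) = h(i)$.

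For the diagonal case I move into $\complGG$ and introduce the voltage function $U$ defined by $U(\addVV) = 0$, $U(v_i) = 1$, and harmonic at every other vertex (in $\complGG$). Extending $g$ by $g(\addVV) := 0$, the equation \eqref{eq:qtn=tnq} at $j = \noTot+1$ reads $g(\addV) = \beta\, g(v_1) = \beta\, g(v_1) + (1-\beta)\, g(\addVV)$, so $g$ is $\complGG$-harmonic on $[\noTot+1] \setminus \{i\}$. Since the boundary $\{v_i, \addVV\}$ carries values $\tilde N[i, i]$ and $0$, uniqueness forces $g = \tilde N[i, i] \cdot U$. Specializing at $j = 1$ and invoking Corollary~\ref{th:cor4} gives $\tilde N[i, i] = (\alpha^2 + 1/d_1)/U(v_1)$.

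It remains to compute $U(v_1)$, which I handle via the cut-vertex structure of $\complGG$: $\addV$ is the unique neighbor of $\addVV$, and $v_1$ is the unique neighbor of $\addV$ inside $\vertSet$. Hence every unit of current driven from $v_i$ to $\addVV$ traverses the series path $v_1 \to \addV \to \addVV$, and
\[
R_{\mathrm{eff}}^{\complGG}(v_i, \addVV) \;=\; R_{\mathrm{eff}}^{\complGG}(v_i, v_1) + \alpha^2 + \tfrac{1}{d_1} \;=\; R(i) + \alpha^2 + \tfrac{1}{d_1},
\]
where I use that $R_{\mathrm{eff}}^{\complGG}(v_i, v_1) = R(i)$ because the dead-end appendage $\{\addV, \addVV\}$ carries no net current between $v_i$ and $v_1$. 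Ohm's law then yields $U(v_1) = (\alpha^2 + 1/d_1)/(R(i) + \alpha^2 + 1/d_1)$, whence $\tilde N[i, i] = R(i) + \alpha^2 + 1/d_1$, completing the verification at $j = i$ and hence the lemma. The trickiest step is propagating $g$'s harmonicity from $\complG$ to $\complGG$ at $\addV$; alternatively, $U(v_1)$ may be derived by directly expanding the $\complGG$-harmonic condition at $v_1$ together with Kirchhoff's identity $\sum_{l:\, v_l \sim v_1}\phi_i(l) = d_1 - 1/R(i)$ coming from current conservation in $\G$.
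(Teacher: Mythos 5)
Your proof is correct, and its first half coincides with the paper's: both arguments observe that $j\mapsto\tilde N[i,j]$ and the right-hand side are harmonic on $[2\tdots\noTot]\setminus\lrb{i}$, invoke the Uniqueness Principle to reduce the claim to the boundary $\lrb{1,i}$, and settle $j=1$ via Corollary \ref{th:cor4}. The divergence is in how the diagonal value $\tilde N[i,i]$ is determined. The paper does it locally and algebraically: writing $f_i(j)=m+M\lr{1-\phi_i(j)}$ with $m=\alpha^2+1/d_1$, it substitutes $j=i$ into \eqref{eq:qtn=tnq}, whose inhomogeneous term $1/d''(i)$ on the diagonal forces $1-M\sum_{j: v_j\sim v_i}\phi_i(j)=0$, i.e.\ $M=R(i)$ straight from the definition of $R(i)$. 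You instead extend the function to the absorbing vertex $\addVV$ (correctly checking harmonicity at $\addV$, where $g(\addV)=\beta g(v_1)$), identify $\tilde N[i,\cdot]$ with $\tilde N[i,i]$ times the voltage $U$ between $v_i$ and $\addVV$, and compute $U(v_1)$ via the series law on the pendant path $v_1-\addV-\addVV$ and Ohm's law. Both routes are valid. The paper's is a two-line computation requiring no electrical facts beyond the definition of $R(i)$; yours makes explicit the classical identity $\tilde N[i,i]=R_{\mathrm{eff}}^{\complGG}(v_i,\addVV)$ between the Green's function of the absorbed walk and effective resistance, at the modest extra cost of justifying that the dead-end appendage does not change $R_{\mathrm{eff}}(v_i,v_1)$ and that $U(v_1)\neq 0$ (both true by connectivity), so there is no gap.
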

\begin{proof}
	Fix $ i \in [2 \tdots \noTot] $. Let $ m  =f_i(1) $ (we already have $ m = \alpha^2  + {d_1}^{-1} $) and $ M=f_i(i)-m $ ($ M $ to be described). Then $ f_i $ is the unique function which is harmonic on $ [2 \tdots \noTot] \setminus \lrb i$ and satisfies the boundary conditions $ f_i(1) = m $, $ f_i(i) = m+M $. Clearly, $ 0 \neq M $, since otherwise $ f_i $ must be a constant (and therefore harmonic) function, but from \eqref{eq:qtn=tnq} it follows that $ f_i $ is not harmonic at $ i $. 
	
	Define 
	\[ 
	g(j) = \frac{1}{M} \lr{ f_i(j)  -m}, \quad  j \in [\noTot].
	\]
	Then $ g $ is harmonic on $[2 \tdots \noTot] \setminus \lrb i $ and satisfies the boundary conditions $ g(1) = 0 $, $ g(i) = 1$. By the Uniqueness Principle, $ g \equiv 1-\phi_i  $, since $ 1-\phi_i $ satisfies the same conditions. Hence
	\[ 
	f_i(j)   = m + M \lr{1 - \phi_i(j)} , \quad j \in [\noTot].
	\]
	From \eqref{eq:qtn=tnq} we have
	\[ 
	f_i(i) =
	\frac{1}{d''(i)} \lr{1 + \sum_{j:v_j \sim v_i}  f_i(j)}
	= 
	m +  M+ \frac{1}{d''(i)} \lr{1 - M  \sum_{j:v_j \sim v_i}  \phi_i(j)}.
	\]
	On the other hand, $ f_i(i)  = m+M$. Taking into account the definition of $ R(i) $, we have
	\[ 
	m+M = m +M + \frac{1}{d''(i)} \lr{1 - \frac{M}{R(i)} }
	\]
	or $ M = R(i) $, which concludes the proof.
\end{proof}

\begin{lemma}\label{th:tree}
	Suppose that the original graph $ \mc G $ is a tree. 
	For all vertices $ v_i, v_j \in \vertSet$, let   $ \ell(i,j) $ be the distance from the lowest common ancestor of $ v_i, v_j $  to the root $ v_1 $. 
	
	Then for all $ i,j \in [\noTot] $ it holds that
	\[ 
	\tilde N[i,j]  =  \alpha^2 +  \frac{1}{d_1} +  \ell(i,j).
	\]
\end{lemma}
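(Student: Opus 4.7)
The plan is to apply Lemma \ref{th:tN_general} and then to compute $\phi_i$ and $R(i)$ explicitly from the tree structure. The cases $i=1$ and (by symmetry of $\tilde N$) $j=1$ are already handled by Corollary \ref{th:cor4}, which gives $\tilde N[1,j] = \alpha^2 + 1/d_1$; this matches the claim because $\mathrm{LCA}(v_1,v_j)=v_1$ forces $\ell(1,j)=0$. So I restrict to $i \in [2 \tdots V]$.

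The main step is to show that $\phi_i$ is linear along the unique path $w_0 = v_i, w_1, \ldots, w_{\ell(i)} = v_1$ from $v_i$ to $v_1$, and constant on each subtree hanging off this path. For any $k$ with $1 \leq k \leq \ell(i)-1$ and any neighbor $w$ of $w_k$ not lying on the path, removing the edge $\{w_k,w\}$ from the tree detaches a subtree $T \ni w$ that contains neither $v_1$ nor $v_i$. Hence $\phi_i$ is harmonic on all of $T$ with boundary only at $w_k$, and the Maximum Principle (Lemma \ref{th:maxPr}) forces $\phi_i \equiv \phi_i(w_k)$ on $T \cup \{w_k\}$. Substituting these off-path values into the averaging identity at $w_k$ collapses it to the one-dimensional discrete Laplace equation $2\phi_i(w_k) = \phi_i(w_{k-1}) + \phi_i(w_{k+1})$; combined with the boundary values $\phi_i(w_0)=0$ and $\phi_i(w_{\ell(i)})=1$, this yields $\phi_i(w_k) = k/\ell(i)$. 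For a general $v_j$, the vertex $u = \mathrm{LCA}(v_i,v_j)$ sits at position $w_{\ell(i)-\ell(i,j)}$ on the path and $v_j$ sits in the subtree attached at $u$, so by the constant-on-subtree claim $\phi_i(v_j) = \phi_i(u) = 1 - \ell(i,j)/\ell(i)$.

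To finish I compute $R(i) = (\sum_{j:v_j \sim v_i} \phi_i(v_j))^{-1}$. The children of $v_i$ lie in subtrees attached at $v_i$ itself and thus contribute $\phi_i(v_i)=0$, while the unique parent $w_1$ contributes $\phi_i(w_1) = 1/\ell(i)$; hence $R(i) = \ell(i)$. Plugging into Lemma \ref{th:tN_general} gives $\tilde N[i,j] = \alpha^2 + 1/d_1 + (\ell(i,j)/\ell(i))\cdot \ell(i) = \alpha^2 + 1/d_1 + \ell(i,j)$, as claimed.

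The only step that genuinely uses the tree hypothesis is the constant-on-subtree reduction; this is precisely where the argument would fail for a general DAG, since removing a single edge need not produce a subtree with only one boundary vertex. I expect this Maximum Principle step to be the main (though mild) obstacle, and everything else is bookkeeping.
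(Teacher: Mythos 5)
Your proposal is correct and follows essentially the same route as the paper: both reduce to Lemma \ref{th:tN_general} and then establish $\phi_i(j)=1-\ell(i,j)/\ell(i)$ and $R(i)=\ell(i)$; the paper verifies the candidate formula for $\phi_i$ via the Uniqueness Principle (checking harmonicity of $\ell(i,\cdot)$ by the same on-path/off-path case split), whereas you derive it via the Maximum Principle and the one-dimensional Laplace equation — an equivalent argument. The only cosmetic point is that your constant-on-hanging-subtree claim is also needed at the endpoints $k=0$ and $k=\ell(i)$ of the path (for children of $v_i$ and for subtrees attached at $v_1$), where the identical singleton-boundary argument applies.
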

\begin{proof}

	For $ i  =1 $ the claim follows from Corollary \ref{th:cor4}. Let $ i  \in [2 \tdots \noTot]$, then from Lemma \ref{th:tN_general} we have
	\[ 
	f_i(j) =   \alpha^2 +  \frac{1}{d_1} +  (1-\phi_i(j)) R(i), \quad j \in [\noTot].
	\]
	Let us show that 
	\begin{equation}\label{eq:tree1}
	\phi_i(j) = 1 - \frac{\ell(i,j)}{\ell(i)}, \quad j \in [\noTot].
	\end{equation}
	and
	\begin{equation}\label{eq:tree2}
	R(i)  = \ell(i).
	\end{equation}
	
	In \eqref{eq:tree1}  the boundary conditions $ \phi_i(1) = 1 $, $ \phi_i(i)=0 $ are satisfied. By the Uniqueness Principle it remains to show that the right-hand side of \eqref{eq:tree1} defines a harmonic function in $ j $ on $ [2 \tdots \noTot] \setminus \lrb i$. Equivalently, we must show that $ \ell(i, \cdot)  $ is harmonic on $ [2 \tdots \noTot] \setminus \lrb i$.
	
	Fix any $ j \in [2 \tdots \noTot] \setminus \lrb i $.
	There are two cases to consider:
	\begin{itemize}
		\item The vertex $ v_j $ is not on  the path from the root to $ v_i $; then the lowest common ancestor of $v_ i $ and $ v_j $ coincides with the lowest common ancestor of $ v_i $ and the parent of $ v_j $ or the lowest common ancestor of $v_ i $ and any child of $ v_j $; hence $ \ell(i,j) = \ell(i,k) $ for all vertices $ v_k $, adjacent to $ v_j $. 
		\item The vertex $ v_j $ is   on  the path from the root to $ v_i $. Let $ v_p $ be the parent of $v_ j $ and $ v_{c}  $ be the unique child of $ v_j $ which also is on  the path from the root to $ v_i $.  Then for each vertex $ v_k \sim v_j $ we have
		\[ 
		\ell(i,k)  =
		\begin{cases}
		\ell(i,j) , & k  \notin \lrb{p,c}, \\
		\ell(i,j)+1 , & k =c, \\
		\ell(i,j) -1, & k  =p.
		\end{cases}
		\]
	\end{itemize}
	In both cases
	we obtain
	\[ 
	\sum_{k :  v_k \sim v_j}\frac{\ell(i,k)}{d''(j)}  = \ell(i,j),
	\]
	i.e., $ \ell(i,\cdot) $ (and  thus also $ 1 - \frac{\ell(i,\cdot)}{\ell(i)} $) is harmonic at every  $ j  \in[2 \tdots \noTot] \setminus \lrb i$. We conclude that \eqref{eq:tree1} holds.
	
	It remains to show  \eqref{eq:tree2}. From the definition of $ R $,
	\[ 
	R(i)^{-1}= \sum_{j: v_j \sim v_i}  \phi_i(j)   = d''(i) -   \frac{1}{\ell(i)}\sum_{j: v_j \sim v_i}  \ell(i,j).
	\]
	On the other hand, for every vertex $v_ j \sim v_i $ we have
	\[ 
	\ell(i,j)  =
	\begin{cases}
	\ell(i) -1, & v_j  \text{ is  the parent of } v_i,\\
	\ell(i) , &  v_j \text{ is a child of } v_i.
	\end{cases}
	\]
	Thus 
	\[ 
	\sum_{j: v_j \sim v_i}  \ell(i,j) =  d''(i)  \ell(i) - 1
	\quad \text{and} \quad
	R(i)^{-1}=\ell(i)^{-1}.  
	 \]
	Now, by combining \eqref{eq:tree1} and \eqref{eq:tree2} with Lemma \ref{th:tN_general}, we obtain the desired equality.

	\textbf{Remark.} For another argument why $ R(i)  = \ell(i)$, see   \cite[Exercise 9.7]{Levin2009}.
\end{proof}

\begin{corollary}\label{th:DAG}
	Suppose that $ \mc G $ is an arbitrary DAG satisfying the initial assumptions.

	Then  for all $ i,j \in [\noTot] $ we have
	\[ 
	\tilde N[i,j]  - \lr{\alpha^2 +  \frac{1}{d_1} } \leq   \ell(i).
	\]
	In particular,  
	\[ 
	0 \leq  \tilde N[i,j]  - \lr{\alpha^2 +  \frac{1}{d_1} } \leq   n
	\]
	for all $ i,j \in [\noTot] $. Moreover, when $ i=1 $ or $ j=1 $, the  equality  $   \tilde N[i,j]  =  {\alpha^2 +  \frac{1}{d_1} }  $ holds.
\end{corollary}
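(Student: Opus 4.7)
The plan is to derive Corollary \ref{th:DAG} from the structural decomposition in Lemma \ref{th:tN_general} together with Rayleigh's Monotonicity Law, using the layered structure of the DAG to upper bound the effective resistance $R(i)$.

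First, I would apply Lemma \ref{th:tN_general}: for any $i \in [2 \tdots \noTot]$ and $j \in [\noTot]$, we have $\tilde N[i,j] = \alpha^2 + \frac{1}{d_1} + (1 - \phi_i(j)) R(i)$. By the Maximum Principle applied to $\phi_i$ and $-\phi_i$ on the interior $[2 \tdots \noTot]\setminus\{i\}$, the values of $\phi_i$ lie between the boundary values $0$ and $1$, so $1 - \phi_i(j) \in [0,1]$, and $R(i) > 0$. This immediately yields the lower bound $\tilde N[i,j] \geq \alpha^2 + \frac{1}{d_1}$.

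For the upper bound, the key step is to show $R(i) \leq \ell(i)$. Since $\mc G$ is layered (all edges go from one layer to the next) and every vertex is reachable from the root, there exists a directed path $v_1 \to v_{i_1} \to \cdots \to v_{i_{\ell(i)-1}} \to v_i$ consisting of exactly $\ell(i)$ unit-conductance edges. Consider the subnetwork obtained from $\mc G''$ by setting the conductance of every edge not on this path to zero; the effective resistance between $v_1$ and $v_i$ in this subnetwork is $\ell(i)$ (series composition of $\ell(i)$ unit resistances, with all off-path vertices as well as $\addV$ and $\addVV$ dangling and therefore irrelevant). By Rayleigh's Monotonicity Law (as recalled at the end of Section \ref{sec:harmonic}), restoring the remaining edges --- that is, raising their conductances back to their original positive values --- can only decrease the effective resistance, so $R(i) \leq \ell(i) \leq n$ in $\mc G''$. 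Combining with $1 - \phi_i(j) \in [0,1]$ gives $\tilde N[i,j] - (\alpha^2 + 1/d_1) \leq \ell(i) \leq n$, which is the desired bound.

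Finally, the equality case is essentially already done. For $i=1$, the claim $\tilde N[1,j] = \alpha^2 + 1/d_1$ for all $j \in [\noTot]$ is exactly Corollary \ref{th:cor4}. For $j=1$ with $i \in [2 \tdots \noTot]$, one can either invoke the symmetry of $\tilde N$ together with Corollary \ref{th:cor4}, or substitute $j=1$ into Lemma \ref{th:tN_general} and use $\phi_i(1) = 1$, which immediately gives $(1-\phi_i(1))R(i) = 0$. The main obstacle is formulating the Rayleigh argument cleanly: one must note that the harmonicity defining $\phi_i$ is only imposed on $[2 \tdots \noTot]\setminus\{i\}$, whose vertices are not adjacent to $\addV$ or $\addVV$, so $\phi_i$ and $R(i)$ effectively live on the undirected electric network on $\vertSet$ with unit conductances, where monotonicity applies in the standard form from \cite[Corollary 9.13]{Levin2009} already cited in the text.
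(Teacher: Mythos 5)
Your proposal is correct and follows essentially the same route as the paper: both reduce the claim to the bound $R(i)\leq \ell(i)$ via Lemma \ref{th:tN_general} and the boundary/Maximum-Principle argument, and both obtain that bound from Rayleigh's Monotonicity Law by deleting edges down to a subnetwork where the resistance equals $\ell(i)$ exactly. The only (harmless) difference is that you strip $\mc G$ down to a bare root-to-$v_i$ path and compute the series resistance directly, whereas the paper passes to a spanning tree preserving that path and invokes Lemma \ref{th:tree}.
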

\begin{proof}
	When $ i=1 $ or $ j=1 $, the assertion holds (Lemma \ref{th:cor4}). Suppose that $ i>1 $. Since $ \tilde N[i,\cdot] $ is harmonic on $[2 \tdots \noTot] \setminus \lrb i $, it attains its extreme values on the boundary $ \lrb{1,i} $, i.e., it suffices to show the inequality for $ j=i $. In view of Lemma \ref{th:tN_general}, this becomes $ R(i) \leq \ell(i) $.
	
	Let $ \mc T $ be any spanning tree of $ \mc  G $ s.t. the shortest path between $ v_i $ and the root is preserved, i.e., distance from $ v_i $ to $ v_1 $ is still $ \ell(i) $.  By replacing $ \mc G $ with $ \mc T $, the value of $ R(i) $  can only increase, since replacing $ \mc G $ with $ \mc T$ corresponds to deleting edges in $ \mc G $ and this operation, by Rayleigh's Monotonicity Law, can only increase the effective resistance $ R(i) $. However, Lemma \ref{th:tree} ensures that for the tree $ \mc T$ the value of $ R(i) $  equals $ \ell(i) $. Thus in the graph $\mc G $ the value   $ R(i) $ is upper-bounded by $ \ell(i) \leq n $.
\end{proof}

\subsection{Entries of the matrix \texorpdfstring{$ K $}{K} }\label{sec:appE}
\label{sec:entries}

Now we shall describe the matrix $ K $, where $ K $, $ L $, $ \mata $, $ \matb $, $ a $ and $ b $ are defined as in Section \ref{sec:Ra}.
The adjacency matrix $ \Gamma $ of the graph  $ \mc G' $ has the following block structure:
\[ 
\Gamma 
=
\begin{pmatrix}
\mb 0_{A, A} & H& \alpha^{-2} \mb e \\[0.3cm]
H^*    \\ 
& &  \mb 0_{B+1, B+1}   \\
\alpha^{-2}  \mb e^* 
\end{pmatrix},
\]
where $ H $ is a matrix of size $ A \times B $ and
$ \mb e $ stands for the column vector of length $ A $, whose first entry is 1 and the remaining entries are 0.

We claim that 
\begin{lemma}
	\[ 
	H=\mata^* \matb.
	\]
\end{lemma}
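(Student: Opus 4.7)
The plan is to verify the identity $H = \mata^* \matb$ entry-by-entry. By definition, for $j \in [A]$ and $k \in [B]$,
\[
(\mata^* \matb)[j,k] = \sum_{i=1}^{T+1} \mata[i,j]\, \matb[i,k],
\]
so the proof reduces to evaluating this sum and matching it against the $(j,k)$-entry of the off-diagonal block $H$ of $\Gamma$.

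First I would dispose of the $i = T+1$ contribution. The row $i = T+1$ corresponds to the additional edge $\addE = (\addV, v_1)$. Since $\addE$ is incident to $v_1 \in \vertSetA$ and $\addV$, it is not incident to any vertex of $\vertSetB$, so $\matb[T+1, k] = 0$ for all $k \in [B]$. Therefore the $i=T+1$ summand vanishes regardless of whether $\mata[T+1, j]$ equals the special value $\alpha^{-1}$ (case $j=1$) or $0$ (case $j \geq 2$). In particular, this shows that the $\alpha^{-1}$ entry in $\mata$ never enters into the product.

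Next I would interpret the remaining sum over $i \in [T]$. For each such $i$, both $\mata[i,j]$ and $\matb[i,k]$ are $\{0,1\}$-valued, and their product equals $1$ exactly when $e_i \in \nhood(v_j) \cap \nhood(v_{A+k})$, i.e.\ when $e_i$ is an edge of $\edgeSet$ joining $v_j$ and $v_{A+k}$. Since no edge of $\mc G$ can join two vertices in $\vertSetA$ or two vertices in $\vertSetB$ (the parity of distance from the root strictly alternates along any edge), at most one such $e_i$ exists. Hence
\[
(\mata^* \matb)[j,k] = \begin{cases} 1, & v_j \sim v_{A+k} \text{ in } \edgeSet, \\ 0, & \text{otherwise.} \end{cases}
\]

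Finally I would match this with $H[j,k]$. Reading off the block structure of $\Gamma$, the block $H$ encodes, for $j \in [A]$ and $k \in [B]$, the weight in $\complG$ of the edge between $v_j \in \vertSetA$ and $v_{A+k} \in \vertSetB$; the weight $\alpha^{-2}$ of the additional edge $\addE$ is already recorded separately in the $\alpha^{-2}\mb e$ and $\alpha^{-2}\mb e^*$ blocks, and cannot appear in $H$ because $\addE$ is not incident to any $v_{A+k}$. Since every edge in $\edgeSet$ has unit weight, $H[j,k]$ agrees with the expression displayed above, yielding $H = \mata^* \matb$. The argument is essentially bookkeeping; the only point requiring care is checking that the exceptional row $i = T+1$ and the exceptional column $j=1$ of $\mata$ do not introduce any spurious contribution, and this is handled precisely by the vanishing of $\matb[T+1, k]$.
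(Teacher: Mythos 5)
Your proof is correct and follows essentially the same route as the paper's: an entry-by-entry check that $(\mata^*\matb)[j,k]$ counts (with unit weight) whether there is an edge of $\edgeSet$ between $v_j$ and $v_{A+k}$, which is exactly $H[j,k]$. The paper's version folds your explicit treatment of the row $i=T+1$ into its two-case analysis, but the content is identical; your added remark that $\matb[T+1,\cdot]=0$ kills the $\alpha^{-1}$ entry is a worthwhile clarification rather than a deviation.
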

\begin{proof}
	We  have to show that for every $ i \in [A] $, $ j\in [A+1 \tdots \noTot] $ it holds that
	\begin{equation}\label{eq:l17-1}
	\Gamma[i,j] = \sum_{e \in [T+1]} \mata[e,i] \matb[e,j].
	\end{equation}
	Notice that $ \Gamma[i,j] = 0 $ unless $ v_i \sim v_j $   (in that case $ \Gamma[i,j] = 1 $ for $ i \in [A] $, $ j\in [A+1 \tdots \noTot] $). On the other hand,  both $ \mata[e,i] $  and $ \matb[e,j] $ are simultaneously nonzero iff the edge  $ e $ is  incident both to   $ v_i $  and $ v_j $.
	
	There are two cases to consider:
	\begin{enumerate}
		\item $ v_i \sim v_j $; then $ \Gamma[i,j] = 1$ and there is a unique edge $ e \in [T+1] $ s.t. $ \mata[e,i] \neq 0 $ and $ \matb[e,j]  \neq 0 $. Since $ i \in [A] $, $ j\in [A+1 \tdots \noTot] $, we have $\mata[e,i] =  \matb[e,j] =1 $ and \eqref{eq:l17-1} holds.
		\item $ v_i \not \sim v_j $; then  $ \Gamma[i,j] = 0 $ and for each edge $ e \in [T+1] $ either $ \mata[e,i] = 0 $ or $ \matb[e,j]  =0 $. Again, \eqref{eq:l17-1} holds.
	\end{enumerate}
	
\end{proof}

Denote
\[ 
\matD= \diag(\vecP)^{-1} \Gamma  \diag(\vecP)^{-1}  =  \diag(\vecP)  Q  \diag(\vecP)  ,
\]
then $ \matD $ has the following block structure:
\[ 
\matD 
=
\begin{pmatrix}
\mb 0_{A, A} &\tilde L \\ 
\tilde L ^*    &  \mb 0_{B+1, B+1}   
\end{pmatrix},
\]
where   $ \tilde L $ is an $ A \times (B+1)  $ matrix with the following block structure:
$ \tilde L = 
\begin{pmatrix}
L & \beta \mb e
\end{pmatrix} $. This follows from the fact that  $ \vecP $ has the following block structure:
\[ 
\vecP  = \begin{pmatrix}
a \\ b \\ a[1]
\end{pmatrix},
 \]
and from the  equalities
\[ 
 L = \diag(a)^{-1} \mata^* \matb \diag(b)^{-1}  =  \diag(a)^{-1} H  \diag(b)^{-1}   
 \quad\text{and}\quad
  \beta = \lr{\alpha  a[1]}^{-2}.
 \]

Now we can show the following characterization of the matrix $ K = \lr{ \mb I_{A}    -  L   L^* }^{-1} $:
\begin{lemma}\label{th:K_estm}
	Let $ \tNAA $ stand for the leading $  A\times A $ principal submatrix of $ \tilde N $, i.e.,  for  the submatrix of $ \tilde N $, formed by the rows indexed by $  [A]$ and columns indexed by $ [A] $.
	Then
	\[ 
	K = 
	\diag(a)   \lr{ \tNAA   - d_1^{-1} \mb J}  \diag(a) ,
	\]
	where $ \mb J $ is the  $ A \times A $ all-ones matrix.
\end{lemma}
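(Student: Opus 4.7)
The plan is to compute the leading $A \times A$ principal submatrix of $(I_{V+1}-\matD)^{-1}$ in two different ways and equate the resulting expressions.

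On one hand, since $\tilde N = (I_{V+1}-Q)^{-1}\diag(\vecP)^{-2}$ and $\Gamma = \diag(\vecP)^2 Q$, we obtain $\tilde N^{-1} = \diag(\vecP)^2 - \Gamma$; conjugating by $\diag(\vecP)^{-1}$ then gives
\[
I_{V+1}-\matD \;=\; \diag(\vecP)^{-1}\,\tilde N^{-1}\,\diag(\vecP)^{-1},
\]
so $(I_{V+1}-\matD)^{-1} = \diag(\vecP)\,\tilde N\,\diag(\vecP)$. Restricting to the first $A$ rows and columns (and using that the restriction of $\vecP$ to $[A]$ equals $a$) yields one expression for the top-left block: $M := \diag(a)\,\tNAA\,\diag(a)$.

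On the other hand, the block antidiagonal structure of $\matD$ together with the Schur complement formula for the inverse of a $2\times 2$ block matrix shows that the same top-left $A\times A$ block of $(I_{V+1}-\matD)^{-1}$ equals $(I_A - \tilde L \tilde L^{*})^{-1}$. Because $\tilde L = \begin{pmatrix} L & \beta\mb e \end{pmatrix}$, we have $\tilde L \tilde L^{*} = LL^{*} + \beta^2 \mb e \mb e^{*}$, and hence $M^{-1} = I_A - LL^{*} - \beta^2 \mb e \mb e^{*} = K^{-1} - \beta^2 \mb e \mb e^{*}$. So $K^{-1}$ is a rank-one perturbation of $M^{-1}$, and the Sherman-Morrison formula gives
\[
K \;=\; M \;-\; \frac{\beta^2\,(M\mb e)(M\mb e)^{*}}{1+\beta^2\,\mb e^{*} M\mb e}.
\]

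To evaluate this correction I would invoke Corollary \ref{th:cor4}, which asserts that the first column of $\tNAA$ is the constant vector whose entries are all $\alpha^2 + d_1^{-1}$. Consequently $M\mb e = a[1](\alpha^2+d_1^{-1})\,a$ and $\mb e^{*} M \mb e = a[1]^2(\alpha^2+d_1^{-1})$. Two scalar identities then do all the work: $\beta^{-1} = d_1\alpha^2+1$ implies $\beta(\alpha^2+d_1^{-1}) = d_1^{-1}$, while $a[1]^2 = d_1+\alpha^{-2} = (\alpha^2 d_1+1)/\alpha^2$. Substituting these, the numerator of the rank-one correction simplifies to $(a[1]^2/d_1^{2})\,a a^{*}$ and the denominator to $(\alpha^2 d_1+1)/(\alpha^2 d_1)$, so the ratio collapses to exactly $d_1^{-1}\,a a^{*}$. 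Since $a a^{*} = \diag(a)\,\mb J\,\diag(a)$, we conclude $K = \diag(a)\bigl(\tNAA - d_1^{-1}\mb J\bigr)\diag(a)$, as claimed. No step is structurally hard; the only real obstacle is bookkeeping of these scalar constants in the Sherman-Morrison step, which all trace back to the two normalizations $\beta^{-1}=d_1\alpha^2+1$ and $a[1]^2=d_1+\alpha^{-2}$.
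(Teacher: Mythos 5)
Your proposal is correct and follows essentially the same route as the paper: the identity $(I_{V+1}-\matD)^{-1}=\diag(\vecP)\tilde N\diag(\vecP)$, block inversion to identify the top-left block with $(I_A-\tilde L\tilde L^*)^{-1}=\diag(a)\tNAA\diag(a)$, a Sherman--Morrison rank-one update via $\tilde L\tilde L^*=LL^*+\beta^2\mb e\mb e^*$, and Corollary~\ref{th:cor4} to evaluate the correction term. The only difference is cosmetic: you write the correction symmetrically as $M-\frac{\beta^2(M\mb e)(\mb e^*M)}{1+\beta^2\mb e^*M\mb e}$, whereas the paper factors it as $W^{-1}(I_A-\beta^2(1-\beta)\mb e\mb e^*W^{-1})$; the scalar bookkeeping checks out in both.
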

\begin{proof}
	Since $  Q = \diag(\vecP)^{-1} \matD  \diag(\vecP) $ and $ N  = \lr{\mb I_{\noTot+1} - Q}^{-1} $, we have
	\begin{equation}\label{eq:l9e1}
	\lr{\mb I_{\noTot+1} - \matD}^{-1} 
	=
	\diag(\vecP) N \diag(\vecP)^{-1}   
	=
	\diag(\vecP) \tilde  N \diag(\vecP) .
	\end{equation}

	By the block-wise inversion formulas \cite[Proposition 2.8.7]{bernstein2009matrix}, $ \lr{ \mb I_{\noTot+1} - \matD}^{-1} $ has the following block structure:
	\[ 
	\lr{ \mb I_{\noTot+1} - \matD}^{-1}
	=
	\begin{pmatrix}
	\lr{\mb I_{A}    - \tilde L \tilde L^*}^{-1}  &  -\lr{\mb I_{A}    - \tilde L \tilde L^*}^{-1} \tilde L  \\
	- \tilde L^*\lr{\mb I_{A}    - \tilde L \tilde L^*}^{-1} & \lr{\mb I_{B+1}    - \tilde L^* \tilde L}^{-1}
	\end{pmatrix}.
	\]
	From \eqref{eq:l9e1} we  conclude that
	\begin{equation}\label{eq:l9e2}
	\lr{\mb I_{A}    - \tilde L \tilde L^*}^{-1} 
	=
	\diag(a)  \tNAA \diag(a)  .
	\end{equation}

	From the Sherman-Morrison formula \cite[Fact 2.16.3]{bernstein2009matrix}  we have that  for an invertible matrix $ W $ and a column vector $ w $ s.t. $ 1+ w^* W^{-1}w \neq 0  $ the inverse of the updated matrix  $ W  + ww^* $ can be computed as
	\begin{equation*}\label{eq:l9e3}
	\lr{ W  + ww^*  }^{-1}   =  W^{-1}  - \frac{W^{-1} ww^* W^{-1}  }{1+ w^* W^{-1}w}.
	\end{equation*}
	Take $ W =  \mb I_{A}    - \tilde L \tilde L^*$ and $ w = \beta \mb e $; then
	\begin{itemize}
		\item $ \tilde L \tilde L^* =  L  L^* + \beta^2 \mb e \mb e^* $;
		\item  $ \mb e \mb e^* $ is a matrix of size $ A \times A $, whose only nonzero entry is 1 in the first row and column;
		\item $ w^* W^{-1}w = \beta^2 W^{-1}[1,1]  = \beta^2 d''(1)  \tilde N[1,1] =   \frac{1}{d_1 \alpha^2} $ (by \eqref{eq:l9e2});
		\item $ 	1+w^* W^{-1}w  = \frac{1}{1 - \beta} \neq 0 $;
		\item $  \mb I_{A}    -  L   L^*  = W + ww^*  $.
	\end{itemize}
	We obtain
	\[ 
	K = \lr{ \mb I_{A}    -  L   L^* }^{-1} =     \lr{\mb I_{A}    - \tilde L \tilde L^*}^{-1} \lr{ \mb I_{A}  - \beta^2(1-\beta) \mb e \mb e^*  W^{-1}  }.
	\]
	Applying \eqref{eq:l9e2}  
	yields
	\[ 
	K = 
	\diag(a)  \tNAA   U  \diag(a)  ,
	\]
	where
	\begin{align*}
	U & :=
	\diag(a)  \lr{ \mb I_{A}  -  \beta^2(1-\beta) \mb e \mb e^*  W^{-1}  }\diag(a)^{-1}    \\
	& =\mb I_{A}  -\beta^2(1-\beta) \mb e \mb e^*    \diag(a)  W^{-1} \diag(a)^{-1}   \\
	& =\mb I_{A}  -\beta^2(1-\beta) \mb e \mb e^*    \diag(a)  \tNAA  \\
	& =\mb I_{A}  -\beta^2(1-\beta)  d''(1)  \mb e \mb e^*     \tNAA .
	\end{align*}
	Here we have used the fact that $  \mb e \mb e^*  $ and $   \diag(a)   $ commute and   $ \mb e \mb e^*    \diag(a)^2  = d''(1)  \mb e \mb e^* $.
	It follows that
	\[ 
	\tNAA   U
	=
	\tNAA  -\beta^2(1-\beta)  d''(1)    \tNAA  \mb e \mb e^*     \tNAA   .
	\]

	Furthermore, the row vector $  \mb e^*     \tNAA  $  equals the first row of $ \tNAA $ (and $ \tNAA $ is a symmetric matrix).
	All elements of the first row of $ \tNAA $  are equal to $ \alpha^2 + d_1^{-1} $, hence
	\[ 
	\lr{  \tNAA  \mb e} \lr{\mb e^*     \tNAA} = \lr{\alpha^2 + d_1^{-1}}^2  \mb J.
	\]
	It is straightforward to check that
	\begin{align*}
	& \beta (1-\beta)d''(1) \lr{\alpha^2 + d_1^{-1}}  
	=1,\\
	&  \beta^2 (1-\beta)d''(1) \lr{\alpha^2 + d_1^{-1}} ^2
	=
	\beta  \lr{\alpha^2 + d_1^{-1}}   = d_1^{-1},
	\end{align*}
	thus
	\[ 
	\tNAA   U
	=
	\tNAA   - d_1^{-1} \mb J 
	\quad
	\text{and}
	\quad
	K = 
	\diag(a)   \lr{ \tNAA   - d_1^{-1} \mb J}  \diag(a) .
	\]
\end{proof}

We now continue with the proof of Lemma \ref{lem:corr9}, restated here for   convenience.
\begingroup
\def\thelemmaPrime{\ref*{lem:corr9}}
\begin{lemmaPrime}
		For all $ i,j  \in  [A]$    the following inequalities hold:
		\[ 
		\alpha^2     a[i] a[j]
		\leq
		K[i,j] \leq  
		\lr{\alpha^2    + n} a[i] a[j]
		.
		\]
		Moreover, when $ i=1 $ or $ j=1 $, we have $ K[i,j] = {\alpha^2    } a[i] a[j] $.
\end{lemmaPrime}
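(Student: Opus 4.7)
The plan is to combine the matrix identity of Lemma \ref{th:K_estm} with the entry-wise bounds on $\tilde N$ from Corollary \ref{th:DAG}. By Lemma \ref{th:K_estm} we have the factorization
\[
K = \diag(a)\,\bigl(\tNAA - d_1^{-1}\mb J\bigr)\,\diag(a),
\]
so entry-wise
\[
K[i,j] \;=\; a[i]\,a[j]\,\bigl(\tilde N[i,j] - d_1^{-1}\bigr)
\]
for all $i,j\in[A]$. Thus the entire task reduces to proving the two-sided bound $\alpha^2 \le \tilde N[i,j]-d_1^{-1} \le \alpha^2+n$ and the equality $\tilde N[i,j]-d_1^{-1}=\alpha^2$ when $i=1$ or $j=1$.

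Next I would simply invoke Corollary \ref{th:DAG}, which states that for all $i,j\in[V]$,
\[
0 \;\le\; \tilde N[i,j] - \Bigl(\alpha^2+\tfrac{1}{d_1}\Bigr) \;\le\; n,
\]
with equality on the left whenever $i=1$ or $j=1$. Rearranging, this is exactly $\alpha^2 \le \tilde N[i,j]-d_1^{-1} \le \alpha^2+n$, with equality to $\alpha^2$ on the boundary cases. Substituting into the factored expression for $K[i,j]$ and multiplying through by the nonnegative factor $a[i]\,a[j]$ yields both claimed inequalities, and preserves the equality in the boundary case.

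Since both Lemma \ref{th:K_estm} and Corollary \ref{th:DAG} are already established, the proof is essentially a one-paragraph substitution: no further estimation, Markov chain analysis, or electric-network argument is needed at this stage. The main (and only) obstacle is bookkeeping --- one must check that the restriction to the leading $A\times A$ principal submatrix $\tNAA$ is consistent with the index ranges in Corollary \ref{th:DAG} (which is immediate since $[A]\subset[V]$), and that the factor $a[i]\,a[j]$ indeed remains nonnegative so the inequalities are preserved under multiplication (which follows from the definition $a[1]=\sqrt{d_1+\alpha^{-2}}$, $a[j]=\sqrt{d_j}$). With these trivial checks, the lemma follows directly.
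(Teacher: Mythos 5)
Your proposal is correct and follows exactly the same route as the paper's proof: combine the factorization $K = \diag(a)\lr{\tNAA - d_1^{-1}\mb J}\diag(a)$ from Lemma \ref{th:K_estm} with the entry-wise bounds $0 \le \tilde N[i,j] - \lr{\alpha^2 + \tfrac{1}{d_1}} \le n$ (and the boundary-case equality) from Corollary \ref{th:DAG}. The paper's own proof is precisely this substitution, so there is nothing to add.
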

\endgroup
\begin{proof}
	From Lemma \ref{th:K_estm}   we have
	\[ 
	K[i,j] = \lr{ \tilde N[i,j]  - \frac{1}{d_1}  } a[i] a[j]  , \quad i ,j \in [A].
	 \]
	On the other hand,  from Corollary \ref{th:DAG} we have 
	\[ 
	\alpha^2  \leq  \tilde N[i,j]  -   \frac{1}{d_1}   \leq   \alpha^2 +n , \quad i ,j \in [A],
	\]
	and $   \tilde N[i,j]  =  {\alpha^2 +  \frac{1}{d_1} }  $ whenever $ i =1$ or $ j=1 $. This proves the claim.
\end{proof}

\section{Proofs for AND-OR tree evaluation}
\label{app:andor}

\begin{proof} 
[Proof of Lemma \ref{lem:heavy}]

\noindent {\bf Correctness.}
Children of a vertex $v$ gets added to $\T'$ if the tree size estimate for $\T(v)$  is at least $\frac{2m}{3}$. If the tree size estimation in step 1 is correct, we have the following:
\begin{itemize}
\item
if the actual size is at least $m$, the estimate is at least $(1-\delta)m = \frac{2m}{3}$;
\item 
if the actual size is less than $\frac{m}{2}$, the estimate is less than 
$(1+\delta)\frac{m}{2} = \frac{2m}{3}$.
\end{itemize} 
This means, if all the estimates are correct, then $\T'$ is a correct $m$-heavy element subtree.

To show that all the estimates are correct with probability at least $1-\epsilon$, we have
to bound the number of calls to tree size estimation. We show

\begin{lemma}
An $m$-heavy element subtree $\T'$ contains at most $n \frac{6T}{m}$ vertices.
\end{lemma}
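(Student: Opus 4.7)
The plan is to bound $|\T'|$ by a two-step counting argument: first relating the size of $\T'$ to the number of ``heavy'' vertices, and then bounding the number of heavy vertices via a double-counting identity on subtree sizes.

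Call a vertex $v$ \emph{heavy} if $|\T(v)| \geq m/2$, and let $H$ denote the set of heavy vertices. By the second defining property of an $m$-heavy element subtree, every vertex of $\T'$ is either heavy or is the child of a heavy vertex. Since the tree is binary, each heavy vertex has at most two children, so
\[
|\T'| \leq |H| + 2|H| = 3|H|.
\]

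The next step is to bound $|H|$ by double counting: sum $|\T(v)|$ over all $v$ and re-index by the pair $(v,u)$ with $u \in \T(v)$. Since $u \in \T(v)$ iff $v$ is an ancestor of $u$ (or equal to $u$), and every vertex has at most $n$ ancestors counting itself (the depth is at most $n$), this yields
\[
\sum_{v \in \T} |\T(v)| \;=\; \sum_{u \in \T} |\{v : u \in \T(v)\}| \;\leq\; n \, T.
\]
On the other hand, each $v \in H$ contributes at least $m/2$ to the left-hand side, so $|H| \cdot (m/2) \leq nT$, giving $|H| \leq 2nT/m$.

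Combining the two bounds gives $|\T'| \leq 3|H| \leq 6nT/m$, which is the desired estimate. There is no real obstacle here; the only thing to be slightly careful about is the interplay between the definition of depth (whether the root counts as depth $0$ or $1$) and the resulting constant, but any reasonable convention yields the stated bound of $n \cdot 6T/m$ up to a constant that the statement already absorbs.
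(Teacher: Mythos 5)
Your proof is correct and follows essentially the same route as the paper: bound the number of heavy vertices (those $x$ with $|\T(x)|\geq m/2$) by $2nT/m$ and then add at most two children per heavy vertex. The only cosmetic difference is that you obtain the $2nT/m$ bound by double-counting ancestor pairs, whereas the paper argues level by level using disjointness of the subtrees rooted at a fixed depth; these are the same count organized differently, and both share the same harmless off-by-one in the depth convention that you already flag.
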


\begin{proof}
On each level, $\T'$ has at most $\frac{2T}{m}$ vertices $x$ with $|\T(x)|\geq \frac{m}{2}$.
Since the depth of $\T$ (and, hence, $\T'$) is at most $n$, the total number of such vertices is at most
$\frac{2Tn}{m}$. For each such $x$, $\T'$ may also contain its children with $|\T(x)|< \frac{m}{2}$.
Since each non-leaf $x$ has two children, the number of such vertices is at most $\frac{4Tn}{m}$
and the total number of vertices is at most $\frac{6Tn}{m}$.
\end{proof}

Since the algorithm makes one call to tree size estimation for each $x$ in $\T'$ and
each call to tree size estimation is correct with probability at least  
$1-\epsilon'=1- \frac{m}{6nT} \epsilon$,
we have that all the calls are correct with probability at least $1-\epsilon$.

\noindent {\bf Running time.}
Since the formula tree is binary, we have $d=O(1)$. 
Also, $\delta = \frac{1}{3}$ is a constant, as well.
Therefore, each call to tree size estimation uses
\[ O\left( \sqrt{nm} \log^2 \frac{1}{\epsilon'} \right) \]
queries. Since tree size estimation is performed only for vertices that get added to $\T'$, it is performed at most $ \frac{6Tn}{m}$ times. Multiplying the complexity of one call by $ \frac{6Tn}{m}$  and
substituting $\epsilon' = \frac{m}{6nT} \epsilon > \frac{\epsilon}{6 T^2}$ gives the complexity of
\[ O\left( \frac{n^{1.5} T}{\sqrt{m}}  \log^2 \frac{T}{\epsilon} \right) .\]
\end{proof}

{\bf Analysis of the main algorithm: correctness.}

\begin{lemma}
If {\bf Unknown-evaluate} is used as a query, it performs a transformation $\ket{\psi_{start}} \rightarrow \ket{\psi}$ where $\ket{\psi}$ satisfies 
\[ \| \psi - (-1)^T \psi_{start} \| \leq 2\sqrt{\epsilon} \]
\end{lemma}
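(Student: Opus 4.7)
The plan is to analyze the procedure as $U^{-1} V U$ acting on $\ket{\psi_{start}} \otimes \ket{0}_{\mathrm{anc}}$, where $U$ is the unitary realizing Steps~1--3 (ending with the candidate value of $T$ written into an answer register) and $V$ is the controlled phase flip on that register. A standard uncomputation argument reduces the claim to a bound on the failure probability of $U$: writing $U\ket{\psi_{start}}\ket{0}_{\mathrm{anc}} = \ket{\phi_{\mathrm{good}}} + \ket{\phi_{\mathrm{bad}}}$ with $\ket{\phi_{\mathrm{good}}}$ supported on the subspace where the answer register holds the true value $T$ and $\|\phi_{\mathrm{bad}}\|^2 \leq \epsilon$, we have $V\ket{\phi_{\mathrm{good}}} = (-1)^T \ket{\phi_{\mathrm{good}}}$ and $U^{-1}(\ket{\phi_{\mathrm{good}}}+\ket{\phi_{\mathrm{bad}}}) = \ket{\psi_{start}}\ket{0}_{\mathrm{anc}}$, so two triangle-inequality applications give $\|U^{-1} V U \ket{\psi_{start}}\ket{0}_{\mathrm{anc}} - (-1)^T \ket{\psi_{start}}\ket{0}_{\mathrm{anc}}\| \leq 2\|\phi_{\mathrm{bad}}\| \leq 2\sqrt{\epsilon}$, as required.

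I would then bound the failure probability of $U$ by $\epsilon$ by induction on the recursion level $i$. The base case $i=1$ is immediate: the subtree $\T'$ is explored deterministically and all leaf queries in Step~3 are exact, so the only source of error is the $\epsilon/5$ failure probability of the AND-OR evaluator of \cite{A+}. For $i>1$, three error sources must be combined: (a)~{\bf Heavy-subtree} returning an incorrect $\T'$ (probability $\leq \epsilon/5$); (b)~the evaluator of \cite{A+} failing on a correct $\T'$ with perfect leaf oracles (probability $\leq \epsilon/5$); and (c)~the replacement of leaf oracles by recursive calls {\bf Unknown-evaluate}$(v, i-1, \epsilon/s^3)$, each of which, by the inductive hypothesis, realizes the ideal phase oracle up to distance $2\sqrt{\epsilon/s^3}$.

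The main obstacle is source~(c): I need the recursion parameter $\epsilon/s^3$ to be small enough to absorb the hybrid-argument overhead. The standard hybrid bound says that replacing $Q$ exact queries by $Q$ approximate queries of distance $\eta$ each perturbs the final state by at most $Q\eta$ in norm and hence shifts any measurement probability by at most $2Q\eta$. Since the \cite{A+} algorithm on a tree of size $s$ uses $Q=O(\sqrt{s}\,\mathrm{polylog}\,s)$ queries, source~(c) contributes at most $4Q\sqrt{\epsilon/s^3} = O(\sqrt{\epsilon/s}\,\mathrm{polylog}\,s)$ to the failure probability, which is $\leq \epsilon/5$ for all sufficiently large $s$ (any small-$s$ exceptions can be handled by replacing $\epsilon/s^3$ with a higher inverse polynomial in the recursion). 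Summing (a), (b), (c) gives total failure probability at most $\epsilon$, closing the induction and completing the proof via the uncomputation bound of the first paragraph.
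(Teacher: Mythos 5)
Your architecture --- induction on the recursion level, a good/bad decomposition of the state after Steps 1--3, a factor-of-2 uncomputation bound for the phase flip, and a hybrid argument for the recursive calls at the leaves --- is exactly the paper's. The uncomputation bound $\|U^{-1}VU\ket{\psi_{start}}\ket{0}-(-1)^T\ket{\psi_{start}}\ket{0}\|\le 2\|\phi_{\mathrm{bad}}\|$ is correct, and sources (a) and (b) are handled as in the paper.

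The gap is in how you handle source (c). You fold the hybrid error into the \emph{failure probability} of $U$ and only take the square root at the very end. The hybrid argument natively produces a \emph{norm} bound: $Q$ queries of distance $\eta=2\sqrt{\epsilon/s^3}$ perturb the state by $Q\eta=O(\sqrt{\epsilon}/s)$ (up to polylogs). Routing this through ``probability shift $2Q\eta$, then $2\sqrt{\|\phi_{\mathrm{bad}}\|^2}$'' converts a norm error of $O(\sqrt{\epsilon}/s)$ into a norm error of order $\sqrt{Q\eta}=O(\epsilon^{1/4}/\sqrt{s})$, which is $O(\sqrt{\epsilon})$ only when $s\gtrsim 1/\sqrt{\epsilon}$ --- the condition you state. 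That condition holds at the top level (where $\epsilon$ is a constant), but the induction must be applied at every level, and at level $i-1$ the error parameter is $\epsilon_i/s_i^3$ while the heavy-subtree size stays around $nT^{1/c}$; the condition then reads $s_{i-1}\gtrsim s_i^{3/2}/\sqrt{\epsilon_i}$ with $s_{i-1}\approx s_i$, which is false. Your proposed patch of using $\epsilon/s^k$ for larger $k$ relaxes the condition at level $i$ to $s_i\gtrsim\epsilon_i^{-1/(k-1)}$ but shrinks $\epsilon_{i-1}$ by $s_i^k$, and the condition at level $i-1$ becomes $s_{i-1}\gtrsim s_i^{k/(k-1)}\epsilon_i^{-1/(k-1)}$, again false for $s_{i-1}\approx s_i$; so no exponent closes the induction. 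The paper avoids this entirely by applying the good/bad decomposition and the factor-of-2 reversal bound only to the \emph{idealized} circuit (perfect recursive queries), obtaining $\|\psi_{ideal}-(-1)^T\psi_{start}\|\le\tfrac{4}{\sqrt5}\sqrt{\epsilon}$, and then adding the hybrid error \emph{after} the square root, directly in norm: $\|\psi-\psi_{ideal}\|\le t\cdot 2\sqrt{\epsilon/s^3}\le 2\sqrt{\epsilon}/\sqrt{s}=o(\sqrt{\epsilon})$, so the triangle inequality gives $\tfrac{4}{\sqrt5}\sqrt{\epsilon}+o(\sqrt{\epsilon})<2\sqrt{\epsilon}$ whenever $s$ exceeds an absolute constant independent of $\epsilon$. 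You should restructure your argument the same way: keep the recursive-call error as an additive norm term outside the square root rather than inside the failure probability.
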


\begin{proof}
We use induction over $i\in\{0, 1, \ldots, c\}$, with $i=0$ being queries to one variable (which are invoked by 
{\bf Unknown-evaluate}($r, 1, \epsilon$) in the 3$ ^{\text{rd}} $ step when $i=1$) and $i=1, \ldots, c$ being calls to {\bf Unknown-evaluate} 
with the respective value of $i$.

For the base case ($i=0$), a query to a variable produces the transformation $\ket{i}\rightarrow (-1)^{x_i}\ket{i}$.
In this case, $T=x_i$, so, this is exactly the correct transformation.

For the inductive step ($i\geq 1$),  we first assume that, 
instead of calls to {\bf Unknown-evaluate} at the leaves, we have perfect queries with no error.
Let $\ket{\psi_{ideal}}$ be the final state under this assumption. We first bound
$\| \psi_{ideal} - (-1)^T \psi_{start} \|$ and then bound the difference between $\ket{\psi_{ideal}}$ and
the actual final state $\ket{\psi}$.

Let $\ket{\psi'}=\sum_{\T', x} \alpha_{\T', x} \ket{\T', x} \ket{\psi_{\T', x}}$ be the state after the first three steps, with $\T'$ being the subtree obtained in the 1$ ^\text{st} $ or the 2$ ^\text{nd} $ step, $x$ being the result obtained at the 3$ ^\text{rd} $ step and 
$\ket{\psi_{\T', x}}$ being all the other registers containing intermediate information. We express $\ket{\psi'}=\ket{\psi_1}+\ket{\psi_2}+\ket{\psi_3}$ 
where
\begin{itemize}
\item
$\ket{\psi_1}$ contains terms where $\T'$ is a valid $m$-heavy element subtree and $x$ is the correct answer;
\item
$\ket{\psi_2}$  contains terms where $\T'$ is a valid $m$-heavy element subtree but $x$ is not the correct answer;
\item
$\ket{\psi_3}$ contains all the other terms.
\end{itemize}
By the correctness guarantees for steps 2 and 3, we have $\|\psi_2\|\leq \sqrt{\epsilon/5}$ and 
$\|\psi_3\|\leq \sqrt{\epsilon/5}$.

After the phase flip in step 4, the state becomes $\ket{\psi''} = (-1)^T \ket{\psi_1} + \ket{\psi'_2} + \ket{\psi'_3}$ with
$\ket{\psi'_2}$ and $\ket{\psi'_3}$  consisting of terms from $\ket{\psi_2}$ and $\ket{\psi_3}$ , with some of their phases 
flipped. Hence, 
\[ \| \psi''- (-1)^T \psi' \|  \leq \| \psi'_2 + \psi'_3 \| + \| \psi_2 + \psi_3 \|  
 \leq \|\psi'_2\|+\|\psi'_3\|+\|\psi_2\|+\|\psi_3\|  \leq 4 \sqrt{\frac{\epsilon}{5}} . \]
Reversing the first three steps maps $\ket{\psi'}$ back to $\ket{\psi_{start}}$ and $\ket{\psi''}$ to $\ket{\psi_{ideal}}$.
Hence, we have the same estimate for $\| \psi_{ideal} - (-1)^T \psi_{start}\|$.
  
We now replace queries by applications of  {\bf Unknown-evaluate}. Let $t=O(\sqrt{sn})=O(s)$ be the number of queries. 
Let $\ket{\phi_i}$ be the final state of the algorithm if the first $i$ queries use the perfect query transformation and the remaining queries are implemented using {\bf Unknown-evaluate}. Then, $\ket{\psi_{ideal}}=\ket{\phi_t}$ and $\ket{\psi}=\ket{\phi_0}$ and
we have
\[ \| \psi - \psi_{ideal} \| = \| \phi_0 - \phi_t \| 
\leq \sum_{j=0}^{t-1} \| \phi_j - \phi_{j+1} \|     
 \leq \frac{2s\sqrt{\epsilon}}{s^{3/2}} = o(\sqrt{\epsilon}) ,\]
with the second inequality following from the inductive assumption. 
(The only difference between $\ket{\phi_j}$ and $\ket{\phi_{j+1}}$ is that, in the first case, we apply a perfect query transformation in the $(j+1)^{\rm st}$ query and, in the second case, we apply {\bf Unknown-evaluate}($v, {i-1}, \epsilon/s^3$) instead. The distance between the states resulting from these two transformations can be bounded by 
$\frac{2\sqrt{\epsilon}}{s^{3/2}}$ by the inductive assumption.)
Therefore,
\[ \| \psi - (-1)^T \psi_{start} \| \leq \| \psi_{ideal} - (-1)^T \psi_{start} \| + \| \psi-\psi_{ideal} \| 
\leq \frac{4}{\sqrt{5}} \sqrt{\epsilon} + o(\sqrt{\epsilon}) < 2 \sqrt{\epsilon}. \]
This concludes the proof.
\end{proof}

For the case when {\bf Unknown-evaluate} is used to obtain the final answer,
let $\ket{\psi'}$ be the final state if, instead of calls to {\bf Unknown-evaluate} at the next level,
we had perfect queries and let $\ket{\psi}$ be the actual final state of the algorithm. 
We express $\ket{\psi}=\ket{\psi_{cor}}+\ket{\psi_{inc}}$ where $\ket{\psi_{cor}}$ ($\ket{\psi_{inc}}$) is the part of
the state where the algorithm outputs correct (incorrect) answer.
Let $\ket{\psi'}=\ket{\psi'_{cor}}+\ket{\psi'_{inc}}$ be a similar decomposition for $\ket{\psi'}$.
Then, $\|\psi'_{inc}\|\leq \|\psi_2\|+\|\psi_3\| \leq \frac{2}{\sqrt{5}} \sqrt{\epsilon}$ and
\[ \|\psi_{inc} \| \leq \|\psi'_{inc}\| + \| \psi_{inc} - \psi'_{inc} \| 
\leq \|\psi'_{inc}\| + \| \psi - \psi' \| \leq  \frac{2}{\sqrt{5}} \sqrt{\epsilon} + o(\sqrt{\epsilon}) < \sqrt{\epsilon} .\]
The probability of {\bf Unknown-evaluate} outputting an incorrect answer is $\|\psi_{inc} \|^2 < \epsilon$.

{\bf Analysis of the main algorithm: running time.}
\begin{lemma}
\label{lem:26}
The number of queries made by {\bf Unknown-evaluate}($r, i, \epsilon$) is of an order
\[ O\left( n^i \sqrt{T_i T^{1/c}} \left(\log T_i + \log \frac{1}{\epsilon}\right)^i \right) .\]
\end{lemma}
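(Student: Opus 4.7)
The plan is to prove the bound by induction on $i\in\{1,2,\ldots,c\}$, decomposing the query cost of a call to {\bf Unknown-evaluate}($r,i,\epsilon$) into (a) generating the subtree $\T'$ (either by exhaustive exploration if $i=1$ or by invoking {\bf Heavy-subtree} if $i>1$), (b) running the known-formula AND-OR evaluation algorithm of \cite{A+} on $\T'$ to success probability $1-\epsilon/5$, and (c) replacing each leaf query issued by that algorithm by a recursive call at level $i-1$.

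For the base case $i=1$, exhaustive exploration of a subtree of size $\leq T_1=T^{1/c}$ uses $O(T_1)$ queries, and the known-formula algorithm contributes $O(\sqrt{T_1}\log(1/\epsilon))$ leaf queries; since $\sqrt{T_1 T^{1/c}}=T^{1/c}=T_1$ and $n\geq 1$, both terms are absorbed in $O(n\sqrt{T_1 T^{1/c}}(\log T_1+\log(1/\epsilon)))$. For the inductive step $i>1$, Lemma \ref{lem:heavy} (applied with upper bound $T_i$ on the size of the current subtree and $m=T_{i-1}$) bounds the cost of (a) by $O\!\left(\frac{n^{1.5} T_i}{\sqrt{T_{i-1}}}\log^{2}(T_i/\epsilon)\right)$, and the same Lemma guarantees $s:=|\T'|=O(nT_i/T_{i-1})=O(nT^{1/c})$. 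The known-formula algorithm on $\T'$ then makes $O(\sqrt{s}\log(1/\epsilon))$ leaf queries, and the inductive hypothesis bounds the cost of each recursive call in (c) by
\[
O\!\left(n^{i-1}\sqrt{T_{i-1}T^{1/c}}\bigl(\log T_{i-1}+\log(s^{3}/\epsilon)\bigr)^{i-1}\right).
\]

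The key algebraic identity driving the induction is $T_{i-1}\cdot T^{2/c}=T^{(i+1)/c}=T_i\cdot T^{1/c}$, which combined with $s=O(nT^{1/c})$ yields
\[
\sqrt{s}\cdot\sqrt{T_{i-1}T^{1/c}}=O\!\left(\sqrt{n}\cdot\sqrt{T_i T^{1/c}}\right),
\]
and similarly $T_i/\sqrt{T_{i-1}}=\sqrt{T_i T^{1/c}}$ which recasts the cost of (a) as $O(n^{1.5}\sqrt{T_i T^{1/c}}(\log T_i+\log(1/\epsilon))^2)$. Using $\log(s^{3}/\epsilon)=O(\log T_i+\log(1/\epsilon))$ and $\log T_{i-1}\leq \log T_i$, the product of (b) and (c) is
\[
O\!\left(n^{i-1/2}\sqrt{T_i T^{1/c}}\,\log(1/\epsilon)\bigl(\log T_i+\log(1/\epsilon)\bigr)^{i-1}\right),
\]
which is absorbed in the claimed $O(n^{i}\sqrt{T_i T^{1/c}}(\log T_i+\log(1/\epsilon))^i)$. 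For $i\geq 2$ the contribution of (a) is dominated by this expression, so summing the three contributions gives the desired bound.

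The main obstacle is purely bookkeeping: one must verify that (i) the exponents of $T$ combine exactly right so that the geometric progression $T_1,\ldots,T_c$ yields the same $\sqrt{T_i T^{1/c}}$ factor at every level, (ii) each recursion depth contributes one extra factor of $n$ and one extra log factor (so that $i$ levels give $n^i$ and $(\log T_i+\log(1/\epsilon))^i$), and (iii) propagating the error parameter as $\epsilon/s^3$ does not enlarge the logarithmic factor beyond $\log T_i+\log(1/\epsilon)$, which follows because $s=O(nT^{1/c})\leq \mathrm{poly}(T_i)$.
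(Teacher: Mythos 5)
Your proposal follows essentially the same route as the paper's proof: induction on $i$, the same three-part decomposition of the cost, the same identity $T_i/\sqrt{T_{i-1}}=\sqrt{T_i T^{1/c}}$, and the same bound $s=O(nT_i/T_{i-1})$ from Lemma \ref{lem:heavy}. One bookkeeping discrepancy: the paper counts the known-formula evaluator on $\T'$ as making $O(\sqrt{ns}\log\frac{1}{\epsilon})$ leaf queries (the depth enters the query complexity of the algorithm of \cite{A+}, as the paper's correctness analysis also notes with $t=O(\sqrt{sn})$), whereas you use $O(\sqrt{s}\log\frac{1}{\epsilon})$; with the corrected count the product of your parts (b) and (c) comes out to exactly $n^{i}\sqrt{T_iT^{1/c}}(\log T_i+\log\frac{1}{\epsilon})^{i}$ rather than your $n^{i-1/2}$ expression, so the claimed bound is tight rather than slack, and the induction closes the same way.
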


\begin{proof}
Generating $\T'$ takes $O(T_1)$ steps if $i=1$ and $O(\frac{n^{1.5} T_i}{\sqrt{T_{i-1}}} \log^2 \frac{T_i}{\epsilon})$ 
steps if $i>1$.
Since $\frac{T_i}{\sqrt{T_{i-1}}} = \sqrt{T_i T^{1/c}}$, this is at most the bound in the statement of the lemma.

In the next step, the algorithm calls {\bf Unknown-evaluate} for $O(\sqrt{n s} \log \frac{1}{\epsilon})$ 
subtrees where $s$ is the size of $\T'$.
Since $s=O\left(n \frac{T_i}{T_{i-1}}\right)$, this means $O\left(n \frac{\sqrt{T_i}}{\sqrt{T_{i-1}}}\log \frac{1}{\epsilon}\right)$
calls of {\bf Unknown-evaluate}. For each of them, the number of queries is 
\[ O\left( n^{i-1} \sqrt{T_{i-1} T^{1/c}} \left(\log T_{i-1} + \log \frac{1}{\epsilon'}\right)^{i-1} \right) =
O\left( n^{i-1} \sqrt{T_{i-1} T^{1/c}} \left(\log T_{i} + \log \frac{1}{\epsilon}\right)^{i-1} \right) .\]
Multiplying the number of calls to {\bf Unknown-evaluate} with the complexity of each call gives the claim.
\end{proof}

We now show how Lemma \ref{lem:26} implies Theorem \ref{th:andor}.
If $i=c$, the expression of Lemma \ref{lem:26} is equal to
\[ O\left( n^{c} \sqrt{T^{1+1/c}} \left(\log T + \log \frac{1}{\epsilon}\right)^{c} \right) =
O\left( T^{\frac{1}{2}+\frac{1}{2c}+o(1)} \right) .\]
Since $c$ can be chosen arbitrarily large,  we can achieve $O(T^{1/2+\delta})$ for arbitrarily small $\delta>0$.
The total running time is $O(\log T)=O(T^{o(1)})$ times the number of queries.

\section{Conclusion}

Search trees of unknown structure (which can be discovered by local exploration) are commonly used in classical computer science.
In this paper, we constructed three quantum algorithms for such trees: for estimating size of a tree, for finding whether a tree contains a marked vertex (improving over an earlier algorithm by Montanaro) and for evaluating an AND-OR formula described by a tree of unknown structure.

Some of possible directions for future study are:
\begin{enumerate}
\item
{\bf Space-efficient algorithm for AND-OR formula evaluation?} 
Our algorithm for evaluating AND-OR formulas in Section \ref{sec:andor} uses a substantial amount of memory to store the heavy element subtrees. 
In contrast, the algorithm of \cite{A+} for evaluating formulas of known structure only uses $O(\log T)$ qubits of memory.  
Can one construct an algorithm for evaluating formulas of unknown structure with time complexity similar to our algorithm but smaller space requirements?
\item
{\bf Speeding up other methods for solving NP-com\-ple\-te problems.}
Backtracking is used by SAT solvers and other algorithms for NP-complete problems. 
Our quantum algorithm for backtracking provides an almost quadratic quantum improvement over those algorithms. What other methods for solving NP-com\-ple\-te problems have faster quantum counterparts?
\item
{\bf Other parameter estimation problems.}
The tree size estimation problem can be viewed as a counterpart of quantum counting, in a more difficult setting. 
What other problems about estimating size (or other parameters) of combinatorial structures would be interesting and what would they be useful for?
\end{enumerate}

\section*{Acknowledgements}
{The authors would like to thank Mark Goh for his valuable comments on a previous version of the paper, in particular, for pointing out a mistake in the proof of Lemma \ref{th:lemma14}. The corrected version requires adjusting the bound on $\lrv{\hat \theta - \theta }$ in Lemma \ref{lem:theta} and choosing the parameter value $\delta_{min}=\frac{\delta^{1.5}}{ 24\sqrt { 3nT_0 }}$ (instead of $\delta_{min}=\frac{\delta^{1.5}}{ 4\sqrt { 3nT_0 }}$) in Algorithm \ref{alg:dag}.}

{This work has been supported by the ERC Advanced Grant MQC and Latvian State Research programme NexIT project No.1.}

\end{document}